\renewcommand\title[1]{\gdef\@title{\reset@font\Large\bfseries #1}}
\renewcommand\section{\@startsection {section}{1}{\z@}%
                                   {-3.5ex \@plus -1ex \@minus -.2ex}%
                                   {2.3ex \@plus.2ex}%
                                   {\normalfont\large\bfseries}}
\renewcommand\subsection{\@startsection{subsection}{2}{\z@}%
                                     {-3ex\@plus -1ex \@minus -.2ex}%
                                     {1.5ex \@plus .2ex}%
                                     {\normalfont\normalsize\bfseries}}
\renewcommand\subsubsection{\@startsection{subsubsection}{3}{\z@}%
                                     {-2.5ex\@plus -1ex \@minus -.2ex}%
                                     {1.5ex \@plus .2ex}%
                                     {\normalfont\normalsize\bfseries}}
\def\@runningauthor{}\newcommand{\runningauthor}[1]{\def\runningauthor{#1}}
\def\@runningtitle{}\newcommand{\runningtitle}[1]{\def\runningtitle{#1}}
\g@addto@macro\bfseries{\boldmath}
\theoremstyle{plain}
\newtheorem{theorem}{Theorem}[]
\newtheorem{lemma}{Lemma}[]
\theoremstyle{definition}
\newtheorem{example}{Example}[]
\theoremstyle{remark}
\newtheorem{remark}[]{Remark}
\title{Binary cyclic codes from permutation polynomials over $\mathbb{F}_{2^{m}}$}
\runningtitle{Binary cyclic codes from permutation polynomials over $\mathbb{F}_{2^{m}}$}
\author{Mrinal Kanti Bose\thanks{First Author is supported by funding organisation Indian Institute of Technology (ISM) Dhanbad.}\\
\small Department of Mathematics and Computing\\[-0.8ex]
\small Institute of Technology (ISM) Dhanbad,\\[-0.8ex] 
\small Dhanbad, Jharkhand, India\\
\small\tt 21dr0111@mc.iitism.ac.in
\and
Udaya Parampalli %\qquad  Possible Third Author
\\
\small School of Computing and Information Systems\\[-0.8ex]
\small The University of Melbourne,\\[-0.8ex]
\small Parkville, Victoria, Australia\\
\small\tt udaya@unimelb.edu.au
\and
Abhay Kumar Singh %\qquad  Possible Third Author
\\
\small Department of Mathematics and Computing\\[-0.8ex]
\small Institute of Technology (ISM) Dhanbad,\\[-0.8ex]
\small Dhanbad, Jharkhand, India\\
\small\tt abhay@iitism.ac.in
}
\runningauthor{Mrinal\ Kanti\ Bose, Udaya \ Parampalli, Abhay\ Kumar\ Singh}
\date{}
\begin{document}
\allowdisplaybreaks

\maketitle

\thispagestyle{empty}

\begin{abstract}
Binary cyclic codes having large dimensions and minimum distances close to the square-root bound are highly valuable in applications where high-rate transmission and robust error correction are both essential. They provide an optimal trade-off between these two factors, making them suitable for demanding communication and storage systems, post-quantum cryptography, radar and sonar systems, wireless sensor networks, and space communications. This paper aims to investigate cyclic codes by an efficient approach introduced by Ding \cite{SETA5} from several known classes of permutation monomials and trinomials over $\mathbb{F}_{2^m}$. We present several infinite families of binary cyclic codes of length $2^m-1$ with dimensions larger than $(2^m-1)/2$. By applying the Hartmann-Tzeng bound, some of the lower bounds on the minimum distances of these cyclic codes are relatively close to the square root bound. Moreover, we obtain a new infinite family of optimal binary cyclic codes with parameters $[2^m-1,2^m-2-3m,8]$, where $m\geq 5$ is odd, according to the sphere-packing bound.
\end{abstract}

\textbf{Keywords:} Cyclic code; Permutation polynomial; Linear span; Sequence
\vskip 3pt
\textbf{MSC:} 94B15, 11T71, 11T06
%%%%%%%%%%%%%%%%%%%%%%%%%%%%%%%%%%%%%%%%%%%%%%%%%%%%%%%

\section{Introduction}\label{sec1}
Let $p$ be a prime and $q=p^m$, where $m$ is a positive integer. Let $\mathbb{F}_{q}$ be a field with $q$ elements. We call a polynomial $f(x)\in \mathbb{F}_{q}[x]$ a \textit{permutation polynomial} (PP) of $\mathbb{F}_{q}$ when the evaluation map $f:a\mapsto f(a)$ is a bijection. A linear $[v, k, d]$ code $\mathcal{C}$ of length $v$ is a $k$-dimensional subspace of $\mathbb{F}_{p}^{v}$ equipped with a minimum nonzero Hamming distance $d, ~d \ge 3$. A linear $[v,k,d]$ code over $\mathbb{F}_{p}$ is said to be optimal if there is no such $[v,k,d^{'}]$ code with $d^{'}\geq d+1$. %A linear $[v,k,d]$ code over $\mathbb{F}_{p}$ is said to be almost optimal if there is a $[v,k,d+1]$ optimal code. %its parameters $v$, $k$ and $d$ meet \textcolor{red}{a bound} on linear codes.
A linear code $\mathcal{C}$ over $\mathbb{F}_{p}$ is said to be cyclic if a codeword $(a_0,a_1,\dots,a_{v-1})\in\mathcal{C}$ implies that its cyclic shift $(a_{v-1},a_0,\dots,a_{v-2})\in\mathcal{C}$. 
\textcolor{black}{Cyclic codes have a simple representation in terms of ideals in the polynomial algebra $\mathbb{F}_{p}[x].$
	Then,} we can identify any codeword $(a_0,a_1,\ldots,a_{v-1})\in \mathcal{C}$ with the polynomial $\sum_{i=0}^{v-1}a_{i}x^i \in \mathbb{F}_{p}[x]/(x^v-1)$. As we know, if $\operatorname{gcd}(v,p)=1$, then $\mathbb{F}_{p}[x]/(x^v-1)$ is a principal ideal ring and the cyclic code $\mathcal{C}$ of length $v$ is an ideal of the ring $\mathbb{F}_{p}[x]/(x^v-1)$. We use the notation $\langle g(x)\rangle$ to denote a principal ideal of $\mathbb{F}_{p}[x]/(x^v-1)$, where $g(x)$ is a monic polynomial of least degree in that principal ideal. Let $\mathcal{C}=\langle g(x)\rangle$ be a cyclic code, where $g(x)$ is called the generator polynomial of $\mathcal{C}$.  The dual code of $\mathcal{C}$ is also cyclic, denoted by $\mathcal{C}^{\perp}$. Let $h(x)=(x^{v}-1)/g(x)$ is called the check polynomial of $\mathcal{C}$ and let $h^{*}(x)$ be the reciprocal of $h(x)$. The dual code $\mathcal{C}^{\perp}=\langle h^{*}(x)\rangle$. %Let $\alpha$ be a primitive element of $\mathbb{F}_{p^m}$. For any integer $i$, let $m_{\alpha^i}(x)$ denote the minimal polynomial of $\alpha^{i}$ over $\mathbb{F}_{p}$.
\vskip 1pt
Another useful representation for cyclic codes is through the trace functions and an infinite sequence. In Section 3 of \cite{SETA5}, Ding defined a sequence $s^{\infty}=(s_{t})_{t=0}^{\infty}$ of period $v$ over $\mathbb{F}_{p}$ from an arbitrary polynomial $F(x)$ over $\mathbb{F}_{p^m}$ as
\begin{equation}\label{B}
	s_{t}=\operatorname{Tr}\left(F(\alpha^{t}+1)\right)\textnormal{ for all $t\geq 0$,}
\end{equation}
where, $\alpha$ is a primitive element of $\mathbb{F}_{p^m}$ and $\operatorname{Tr}(x)=\sum_{i=0}^{m-1}x^{p^{i}}$ is the trace map from $\mathbb{F}_{p^m}$ to $\mathbb{F}_{p}$. The cyclic code generated by the minimal polynomial of the sequence $s^{\infty}$ is denoted by $\mathcal{C}_{s}$. % For any integer $0\leq i\leq v-1$, $m_{\alpha^{i}}(x)\in\mathbb{F}_{p^m}[x]$ denotes the minimal polynomial of $\alpha^{i}$ and is defined as $m_{\alpha^{i}}(x)=\prod_{j\in C_{i}}(x-\alpha^{j})$. The minimal polynomial $m_{\alpha^{i}}(x)$ is a factor of $x^{v}-1$ over $\mathbb{F}_{2}$. According to Lemma \ref{lem:technical}, $\mathbb{M}_{s}(x)=\prod_{i\in I\cap\Gamma}m_{\alpha^{-i}}(x)$, where $I$ is the index set corresponding to the sequence in Eq. (\ref{B}).
Ding \cite{SETA5}, and Ding and Zhou \cite{SETA3} raised questions on how to choose the polynomial $F(x)$ over $\mathbb{F}_{p^m}$ that could give optimal parameters on the cyclic codes and produced many optimal and almost optimal cyclic codes by employing several known families of almost perfect nonlinear (APN) and perfect nonlinear (PN) monomials and trinomials over binary and nonbinary fields. Subsequently, Tang et al. \cite{SETA11} solved two open problems on cyclic codes presented in \cite{SETA3} and \cite{SETA5}. Notably, Rajabi and Khashyarmanesh \cite{SETA10} extended earlier results on the construction of cyclic codes and solved two open problems proposed in \cite{SETA5}; Li et al. \cite{SETA9} provided partial answers for an open problem proposed in \cite{SETA3}. Mesnager et al. \cite{SETA8} complemented some earlier results and studied cyclic codes from several known families of low differential uniform monomial functions and provided partial answers to three open problems proposed in \cite{SETA3,SETA5}. Recently, Xie et al. \cite{SETA22} employed two classes of sequences to construct four families of binary cyclic codes and showed the existence of some codes with minimum distance satisfies the square root bound. A recent survey on the impressive developments in the last decade in the direction of a sequence construction of cyclic codes over finite fields can be found in \cite{SETA12, SETA14}.
\vskip 1pt
\textcolor{black}{The development given above shows the popularity of the trace sequence approach to determine interesting cylic codes $\mathcal{C}_{s}$ over $\mathbb{F}_{p}$. However, there is not much attention to binary cyclic codes $\mathcal{C}_{s}$ by choosing suitable permutation trinomials over $\mathbb{F}_{2^m}$ in the literature. %are not studied much in the literature. 
	It is a challenging question of how to choose specific trinomials %It is very hard 
	to design an infinite family of optimal binary cyclic codes meeting certain bounds or an infinite family of binary cyclic codes with dimensions larger than half of the length of each code in the family with a minimum distance near the square root of the length of each code in the family. Unfortunately, only a small number of families of permutation trinomials with their differential properties over $\mathbb{F}_{2^m}$ are known. Recently, Helleseth, Li and Xia \cite{SETA24} showed that the trinomial $F_{1}(x)=x+x^{3}+x^{2^{(m+1)/2}+1}$ over $\mathbb{F}_{2^m}$, where $m$ is odd integer, is differentially $4$-uniform, known as the $\textit{Welch permutation}$. In Section \ref{3.3}, we have shown that the binary cyclic code $\mathcal{C}_{s}$ gives optimal parameters $[2^m-1,2^m-2-3m,8]$ for $m\geq 5$ is odd when $F_{1}(x)$ is plugged into Eq. (\ref{B}). On the other hand, consider the permutation trinomial $F_{2}(x)=x+x^{2^{(m+2)/2}-1}+x^{2^m-2^{m/2}+1}$, where $m\geq 2$ is even, over $\mathbb{F}_{2^m}$ (see Theorem 3.2 in \cite{SETA2}). Although we have no information about the differential properties of the family of trinomials $F_{2}(x)$, still the binary cyclic code $\mathcal{C}_{s}$ provides optimal parameters $[2^m-1,2^m-1-m,3]$, equivalent to the Hamming code when $F_{2}(x)$ is employed into Eq. (\ref{B}). These facts motivate us to investigate more permutation monomials and trinomials that may yield optimal or near-optimal binary cyclic codes with desirable parameters.} %\textcolor{black}{optimal or near optimal} binary cyclic codes with desirable parameters.
\vskip 1pt
\textcolor{black}{Inspired by the ideas and techniques in \cite{SETA3, SETA5, SETA8}, the objective of this paper is to investigate some infinite families of binary cyclic codes using the trace sequence approach by employing several known infinite families of permutation monomials and trinomials over $\mathbb{F}_{2^m}$. The binary cyclic codes presented in this paper have length $v=2^m-1$ with dimensions larger than $v/2$ and minimum distance near $\sqrt{v}$. We determined the upper and lower bounds of these cyclic codes. In Section \ref{3.1} a known infinite family of optimal cyclic code with parameters $[2^m-1,2^m-2-m,4]$, where $m\geq 3$ is odd (see Theorem~1 of \cite{SETA8}) and in Section \ref{3.3} a new infinite family of optimal cyclic code with parameters $[2^m-1,2^m-2-3m,8]$, where $m\geq 5$ is odd, is produced by choosing two suitable permutation trinomials over $\mathbb{F}_{2^m}$. The results of the paper are summarized in Table \ref{Table3}, \ref{Table1} and \ref{table2}.}
\vskip 1pt
The rest of this paper is organized as follows. Section 2 states some essential definitions and related results. Next, in Sections 3 and 4, we study binary cyclic codes more in-depth from different known families of permutation monomials and trinomials over $\mathbb{F}_{2^m}$, for $m$ being odd and even, respectively. Section 5 concludes the paper.

\begin{table}[htbp]
    \centering
    \footnotesize % Slightly smaller base font to fit more content
    \renewcommand{\arraystretch}{1.5} % Increase row height
    \setlength{\tabcolsep}{10pt} % Increase column padding
    \begin{adjustbox}{max width=1.2\textwidth, center} % Allow slight width expansion
        \begin{tabular}{|>{\centering\arraybackslash}m{3.8cm}|>{\centering\arraybackslash}m{5.3cm}|>{\centering\arraybackslash}m{5.5cm}|>{\centering\arraybackslash}m{6.2cm}|>{\centering\arraybackslash}m{2.5cm}|}
            \hline
            \textbf{\normalsize $F(x)$} & \textbf{\normalsize Conditions} & \textbf{\normalsize $\dim(\mathcal{C}_{s})$} & \textbf{\normalsize $d(\mathcal{C}_{s})$} & \textbf{\normalsize References} \\
            \hline
            $x^{2^{m}-2}$ & $m\geq 2$ is an integer & $2^{m-1} - 1$ &  $d(\mathcal{C}_{s})^{2}-d(\mathcal{C}_{s})+1\geq v$ such that $d(\mathcal{C}_{s})$ is even and $m$ is odd & Theorem 4.2 in \cite{SETA5} \\
            \hline
            $x^{2^{(m-1)/2}+3}$ & $m \geq 7$ is odd & $2^m - 2 - 5m$ & $d(\mathcal{C}_{s}) \geq 8$ & Theorem 8 in \cite{SETA3} \\
            \hline
            \multirow{2}{*}{$x^{2^{h}-1}$, where $2 \leq h\leq\lceil\frac{m}{2}\rceil$} & $m$ is even & $2^{m}-1-\frac{m(2^{h}+(-1)^{h-1})}{3}$ & $d(\mathcal{C}_{s})\geq 2^{h-2}+1$ & \multirow{2}{*}{Theorem 12 in \cite{SETA3}} \\
         %   \cline{2-4}
            & $m$ is odd & $2^{m}-2-\frac{m(2^{h}+(-1)^{h-1})}{3}$ & $d(\mathcal{C}_{s})\geq 2^{h-2}+2$ and $h>2$ & \\
            \hline
           \multirow{2}{*}{\parbox{3.5cm}{$x^{2^{(m-1)/2}+2^{(m-1)/4}-1}$, where $m\equiv 1\pmod{4}$}} & $m \equiv 1 \pmod{8}$ and $m\geq 9$ & $2^m - 2 - \frac{m(2^{(m+7)/4}+(-1)^{(m-5)/4})}{3}$ & $d(\mathcal{C}_{s})\geq 2^{(m-1)/4}+2$ & \multirow{2}{*}{Theorem 18 in \cite{SETA3}} \\
         %   \cline{2-4}
            & $m \equiv 5 \pmod{8}$ and $m\geq 9$ & $2^m - 2 - \frac{m(2^{(m+7)/4}+(-1)^{(m-5)/4}-6)}{3}$ & $d(\mathcal{C}_{s})\geq 2^{(m-1)/4}$ & \\
            \hline
            \multirow{2}{*}{\parbox{3.5cm}{$x^{2^{2h}-2^{h}+1}$,\text{  }where $\gcd(m,h)=1$}} & 
            $1\leq h \leq \begin{cases}
                \frac{m-1}{4} & m \equiv 1 \pmod{4} \\
                \frac{m-3}{4} & m \equiv 3 \pmod{4} \\
                \frac{m-4}{4} & m \equiv 0 \pmod{4} \\
                \frac{m-2}{4} & m \equiv 2 \pmod{4}
            \end{cases}$ & 
             $\begin{cases} 
                2^{m}-1 - \frac{m(2^{h+2}+(-1)^{h-1})+3\mathbb{N}_{2}(m)}{3}; &  \\
                \text{if } h \text{ is even}, & \\
                2^{m}-1 - \frac{m(2^{h+2}+(-1)^{h-1}-6)+3\mathbb{N}_{2}(m)}{3}; & \\
                \text{if } h \text{ is odd}
            \end{cases}$ & 
            $d(\mathcal{C}_{s})\geq\begin{cases}
                2^{h}+2; &\text{if } h \text{ is even and }m\text{ is odd} \\      
                2^{h}+1; & \text{if } h \text{ is even and }m\text{ is even} \\
                2^{h}; & \text{if } h \text{ is odd}
            \end{cases}$ & Theorem 19 in \cite{SETA12} \\
            \hline
            \multirow{2}{*}{$x^{2^{h}+1}$} & $m$ is odd and $\operatorname{gcd}(m,h)=1$ & $2^{m}-2-m$ & 4 & \multirow{2}{*}{Theorem 1 in \cite{SETA8}}\\
         %   \cline{2-4}
            & $m \equiv 2 \pmod{4}$ and $\operatorname{gcd}(m,h)=2$ & $2^{m}-1-m$ & 3 & \\
            \hline
            $x^{2^{(m-1)/2}+2^{(3m-1)/4}-1}$  & $m \equiv 3 \pmod{4}$ and $m\geq 3$ & $2^{m} - 1-\textit{L}_{s}$, $\textit{L}_{s}$ given in Corollary 4 of \cite{SETA9} &  $d(\mathcal{C}_{s})\geq 2^{(m+1)/4}+2$  & Theorem 5 in \cite{SETA9} \\
            \hline
            \multirow{2}{*}{\parbox{3.5cm}{$x^{2^{4h}+2^{3h}+2^{2h}+2^{h}-1}$, where $m=5h$}} & $h$ is even & $2^{m}-1- m\left(\frac{22}{3}(2^{h}-1)-3h\right)$ & $d(\mathcal{C}_{s})\geq 2^{h}+1$ & \multirow{2}{*}{Theorem 6 in \cite{SETA11}}\\
         %   \cline{2-4}
            & $h$ is odd & $2^{m}-2- m\left(\frac{22}{3}(2^{h}-2)-3h+6\right)$ & $d(\mathcal{C}_{s})\geq 2^{h}+2$ & \\
            \hline
            $x^{2^{2h}+2^{h}+1}$, where $m=4h$  & $h$ is odd & $2^{m} - 1-\frac{5m}{2}$ &  3  & Theorem 6 in \cite{SETA8} \\
            \hline
             \multirow{2}{*}{$x^{2^{2h}-2^{h}+1}$} & $\frac{m}{3}\geq h>\begin{cases}
                 \frac{m-1}{4},\text{ if $m\equiv 1 \pmod{4}$} \\
                 \frac{m-2}{4},\text{ if $m\equiv 2 \pmod{4}$} \\
                 \frac{m-3}{4},\text{ if $m\equiv 3\pmod{4}$} \\
                 \frac{m-4}{4},\text{ if $m\equiv 0 \pmod{4}$} 
             \end{cases}$ & $2^{m}-1- \textit{L}_{s}$, $\textit{L}_{s}$ given in Lemma 11 of \cite{SETA8} & $d(\mathcal{C}_{s})\geq \begin{cases}
                 2^{m-3h+1};\text{ if $h$ is odd} \\
                 2^{m-3h+1}+1;\text{ if $h$ is even and $m$ is even} \\
                 2^{m-3h+1}+2;\text{ if $h$ is even and $m$ is odd} \\
             \end{cases}$ & Theorem 4 in \cite{SETA8}\\
         %   \cline{2-4}
            & $\frac{2m+3}{5}>h>\frac{m}{3}$ & $2^{m}-1- \textit{L}_{s}$, $\textit{L}_{s}$ given in Lemma 12 of \cite{SETA8} & $d(\mathcal{C}_{s})\geq 2^{h}-2^{m-2h}$ & Theorem 5 in \cite{SETA8} \\
            \hline
            \multirow{2}{*}{$x^{2^{2h}-2^{h}+1}$,\text{ }where\text{ }$h=\frac{m-1}{2}$} & $m \equiv 1 \pmod{4}$ and $m \geq 9$ & $(2^{(m-1)/2} - 1)(2^{(m+1)/2} - m + 2)$ & $2^{(m-3)/2}\leq d(\mathcal{C}_{s})\leq 1+m(2^{(m-1)/2}-1)$ & \multirow{2}{*}{Theorem 5}\\
          %  \cline{2-4}
            & $m \equiv 3 \pmod{4}$ and $m \geq 7$ & $(2^{(m-1)/2} - 1)(2^{(m+1)/2} - m + 2) + 2m$ & $2^{(m-3)/2}\leq d(\mathcal{C}_{s})\leq 1+m(2^{(m-1)/2}-3)$ & \\
            \hline
        \end{tabular}
    \end{adjustbox}
    \caption{Known binary cyclic codes $\mathcal{C}_{s}$ from monomials $F(x)$ over $\mathbb{F}_{2^{m}}$ with parameters $[2^{m}-1, \dim(\mathcal{C}_{s}), d(\mathcal{C}_{s})]$.}
    \label{Table3}
\end{table}

\begin{table}[htbp]
    \centering
    \footnotesize % Slightly smaller base font to fit more content
    \renewcommand{\arraystretch}{1.5} % Increase row height
    \setlength{\tabcolsep}{10pt} % Increase column padding
    \begin{adjustbox}{max width=1.2\textwidth, center} % Allow slight width expansion
        \begin{tabular}{|>{\centering\arraybackslash}m{4cm}|>{\centering\arraybackslash}m{4cm}|>{\centering\arraybackslash}m{4.6cm}|>{\centering\arraybackslash}m{9.8cm}|>{\centering\arraybackslash}m{2.5cm}|}
            \hline
            \textbf{\normalsize $F(x)$} & \textbf{\normalsize Conditions} & \textbf{\normalsize $\dim(\mathcal{C}_{s})$} & \textbf{\normalsize $d(\mathcal{C}_{s})$} & \textbf{\normalsize References} \\
            \hline
            \multirow{3}{*}{\parbox{3.5cm}{$x+x^{r}+x^{2^{h}-1}$, where $\operatorname{w}_{2}(r)=m-1$ and $0 \leq h\leq\lceil\frac{m}{2}\rceil$}} & $m$ is odd and $h=0$ & $2^{m-1}-1-m$ & $d(\mathcal{C}_{s})\geq 8$ & \multirow{2}{*}{Theorem 26 in \cite{SETA3}} \\
         %   \cline{2-4}
            & $m$ is even and $h=0$ & $2^{m-1}-1+m$ & $d(\mathcal{C}_{s})\geq 3$ & \\
            & $h\neq 0$  & $2^{m-1}-1$ & $d(\mathcal{C}_{s})\geq \begin{cases}
                2^{(m-1)/2}+4,\text{ if $m\equiv 1\pmod{4}$, $m\geq 5$ and $0< h\leq\frac{m-3}{2}$} \\
                2^{(m-1)/2}+4,\text{ if $m\equiv 3\pmod{4}$, $m\geq 7$, $0< h\leq\frac{m-3}{2}$ and $2\mid h$} \\
                2^{(m-1)/2}+2,\text{ if $m\equiv 3\pmod{4}$, $m\geq 7$, $0< h\leq\frac{m-3}{2}$ and $2\nmid h$} \\
            \end{cases}$ & Theorem 6, 7 in \cite{SETA22}\\
            \hline
            $x+x^{2^{(m+2)/2}-1}+x^{2^m-2^{m/2}+1}$ & $m \geq 2$ is even & $2^m - 1 - m$ & 3 & This paper \\
            \hline
          $x + x^{2^{(m+1)/2}-1} + x^{2^m-2^{(m+1)/2}+1}$ & $m \geq 3$ is odd & $2^m - 2 - m$ & 4 & Theorem 1\\
            \hline
            $x^{3\cdot2^{(m+1)/2}+4} + x^{2^{(m+1)/2}+2} + x^{2^{(m+1)/2}}$ & $m \geq 7$ is odd & $2^m - 2 - 3m$ & $4 \leq d(\mathcal{C}_{s}) \leq 8$ & Theorem 2 \\
            \hline
            $x + x^3 + x^{2^{(m+1)/2}+1}$ & $m \geq 5$ is odd & $2^m - 2 - 3m$ & 8 & Theorem 3 \\
            \hline
            \multirow{2}{*}{$x + x^3 + x^{2^m - 2^{(m+3)/2} + 2}$} & $m \equiv 1 \, (\text{mod } 4)$ and $m \geq 5$ & $2(2^{m-1}-1) - m(2^{(m-3)/2} + 1)$ & $\max\{8, 2^{(m-5)/2}+2\}\leq d(\mathcal{C}_{s})\leq 1+m(2^{(m-3)/2}+1)$ & \multirow{2}{*}{Theorem 4}\\
		%	\cline{3-5}
	 & $m \equiv 3 \, (\text{mod } 4)$ and $m \geq 7$ & $2(2^{m-1}-1) - m(2^{(m-3)/2} - 1)$ & $2^{(m-5)/2} + 2 \leq d(\mathcal{C}_{s})\leq 1+m(2^{(m-3)/2}-1)$ & \\
            \hline
            \multirow{2}{*}{$x + x^{2^{m/2}} + x^{2^m - 2^{m/2} + 1}$} & $m \equiv 0 \, (\text{mod } 4)$ and $m \geq 4$ & $2^m - 1 - 2m \cdot \left(\frac{2^{m/2}-1}{3}\right)$ & $2^{(m-2)/2}\leq d(\mathcal{C}_{s})\leq 1+m\left(\frac{2^{\frac{m}{2}+1}-2}{3}\right)$ & \multirow{2}{*}{Theorem 6}\\
		%	\cline{3-4}
			 & $m \equiv 2 \, (\text{mod } 4)$ and $m \geq 6$ & $2^m - 1 - 4m \cdot \left(\frac{2^{m/2-1}-1}{3}\right)$ & $2^{(m-2)/2}\leq d(\mathcal{C}_{s})\leq 1+m\left(\frac{2^{\frac{m}{2}+1}-4}{3}\right)$ & \\
            \hline
            $x + x^{2^{m/2+1}-1} + x^{2^m - 2^{m/2+1} + 2}$ & $m \geq 6$ is even & $2^m - 1 - m(2^{(m-2)/2} - 1)$ & $2^{(m-4)/2} + 1\leq d(\mathcal{C}_{s})\leq 1+m(2^{(m-2)/2}-1)$ & Theorem 7 \\
            \hline
             \multirow{2}{*}{$x + x^{2^{m/2}} + x^{2^{m-1} - 2^{m/2-1} + 1}$} & $m \equiv 0 \, (\text{mod } 4)$ and $m \geq 8$ & $2^{m}-1 - m\cdot 2^{m/2}-\frac{m}{2}$ & $\max\{7, 2^{(m-2)/2}+1\}\leq d(\mathcal{C}_{s})\leq 1+m(2^{m/2}+1)-\frac{m}{2}$ & \multirow{2}{*}{Theorem 8}\\
		%	\cline{3-5}
 & $m \equiv 2 \, (\text{mod } 4)$ and $m > 6$ & $2^{m}-1 - m\cdot 2^{m/2}+\frac{3m}{2}$ & $2^{(m-2)/2} + 1\leq d(\mathcal{C}_{s})\leq 1+m(2^{m/2}-1)-\frac{m}{2}$ & \\
            \hline
        \end{tabular}
    \end{adjustbox}
    \caption{Known binary cyclic codes $\mathcal{C}_{s}$ from trinomials $F(x)$ over $\mathbb{F}_{2^{m}}$ with parameters $[2^{m}-1, \dim(\mathcal{C}_{s}), d(\mathcal{C}_{s})]$.}
    \label{Table1}
\end{table}

%The Introduction section, of referenced text \cite{bib1} expands on the background of the work (some overlap with the Abstract is acceptable). The introduction should not include subheadings.

%Springer Nature does not impose a strict layout as standard however authors are advised to check the individual requirements for the journal they are planning to submit to as there may be journal-level preferences. When preparing your text please also be aware that some stylistic choices are not supported in full text XML (publication version), including coloured font. These will not be replicated in the typeset article if it is accepted. 
\section{Preliminaries}

Throughout this paper, we set $v=2^m-1$. In this section, first we state some essential results related to $2$-cyclotomic cosets modulo $v$, then we discuss a well-known approach of designing cyclic codes by periodic sequences. We require all these ingredients in the subsequent sections. 
\subsection{Essential results on 2-cyclotomic cosets modulo $v$}
Let $\mathbb{Z}_{v}=\{0,1,2,\dots,v-1\}$. For any $i\in\mathbb{Z}_{v}$, the $2$-cyclotomic coset $C_i$ of $i$ modulo $v$ is defined as
\begin{equation*}
	C_i=\{2\cdot i^{s}:\text{ $0\leq s\leq \ell_{i}-1$}\}\text{(mod $v$)},
\end{equation*} where $\ell_i$ is the least positive integer such that $i\equiv 2^{\ell_i}\cdot i$ (mod $v$), and is the size of $C_{i}$. The size of a $2$-cyclotomic coset modulo $v$ divides $m$. The least integer in $C_i$ is called the coset leader of $C_i$. We use the notation $\Gamma$ to denote the set of all coset leaders. Let $\alpha$ be the primitive element of $\mathbb{F}_{2^m}$, and let $m_{\alpha^i}(x)$ denote the minimal polynomial of $\alpha^{i}$ over $\mathbb{F}_{2}$. We know that
\begin{align*}
	\hskip 15pt	\bigcup_{i\in \Gamma}C_i &= \mathbb{Z}_{v} \text{, }  m_{\alpha^i}(x)=\prod_{s\in C_{i}}(x-\alpha^{s}) \text{ and } x^v-1=\prod_{i\in\Gamma}m_{\alpha^i}(x).  &
\end{align*} 
%\begin{equation*}
%	C_i\bigcap C_j= \phi \quad \text{for any $i\neq j \in \Gamma$}.
%\end{equation*}
The $2$-adic expansion of an integer $i$ with $0 \leq i \leq 2^m-1$, is defined as 
\begin{align*}\label{EE}
	i&=i_0+i_{1}\cdot2+\dots+i_{m-1}\cdot2^{m-1}, 
\end{align*}
where $i_{0},i_{1},\dots,i_{m-1}\in \{0,1\}$. Define $\operatorname{w}_{2}(i)=\sum_{j=0}^{m-1}i_{j}$, and we call it the $2$-weight of $i$ in the sequel.
\vskip 1pt
We need the following lemmas in the subsequent sections.
\vspace{0.5em}
\begin{lemma}[\cite{SETA7}]
	\label{L5}
	For any coset leader \( i \in \Gamma \setminus \{0\} \), \( i \) is odd and \( 1 \leq i < 2^{m-1} \).
\end{lemma}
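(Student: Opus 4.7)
The plan is to prove the two assertions---oddness of $i$ and the bound $i<2^{m-1}$---separately, each by contradiction, where in each case I would exhibit an element of the $2$-cyclotomic coset $C_i$ that is strictly smaller than $i$, thereby violating the minimality of the coset leader.

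First, for the oddness claim, I would assume toward contradiction that the coset leader $i\neq 0$ is even. Since $v=2^m-1$ is odd, $2$ is invertible modulo $v$, and the congruence $2^m\equiv 1\pmod v$ immediately identifies its inverse as $2^{m-1}$. The coset $C_i$ is closed under multiplication by $2$ modulo $v$, hence also under multiplication by $2^{m-1}$. Computing
\[
2^{m-1}\cdot i \;=\; 2^{m-1}\cdot 2\cdot(i/2)\;=\;2^m\cdot(i/2)\;\equiv\;i/2\pmod v,
\]
and noting that $0<i/2<i<v$, the residue of $i/2$ modulo $v$ is $i/2$ itself. Thus $i/2\in C_i$ is a strictly smaller positive element than $i$, contradicting the assumption that $i$ is the coset leader.

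Second, for the bound $i<2^{m-1}$, I would assume toward contradiction that $i\geq 2^{m-1}$. Then $2i\geq 2^m=v+1$, and since $i\leq v-1$ we also have $2i\leq 2v-2<2v$, so
\[
2i \bmod v \;=\; 2i-v \;=\; 2i - 2^m + 1.
\]
This is positive (because $2i>v$) and the inequality $2i-v<i$ is equivalent to $i<v=2^m-1$, which holds since $i\in\mathbb{Z}_v\setminus\{0\}$. Hence $2i-v$ is a positive element of $C_i$ strictly smaller than $i$, again contradicting the minimality of the coset leader.

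The argument is essentially elementary modular arithmetic and presents no real obstacle; the only thing to be careful about is verifying that the candidate smaller representatives ($i/2$ in the first case and $2i-v$ in the second) are indeed the reductions modulo $v$ lying in $\{1,2,\ldots,v-1\}$, so that they qualify as competitors for the coset leader.
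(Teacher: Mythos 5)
Your proof is correct. The paper itself gives no proof of this lemma---it is quoted from \cite{SETA7} with only a citation---so there is nothing to compare against, but your two contradiction arguments are exactly the standard ones: multiplying an even leader $i$ by $2^{m-1}\equiv 2^{-1}\pmod{v}$ produces the smaller coset element $i/2$, and doubling a leader $i\geq 2^{m-1}$ produces the smaller coset element $2i-v$. Both reductions are verified to land in $\{1,\dots,v-1\}$, so the argument is complete.
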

\vspace{0.5em}
\begin{lemma}[\cite{SETA3}]
	\label{lem:L1}
	Let \( n = \lceil \frac{m+1}{2} \rceil \) and \( \Gamma' = \{ 1 \leq i \leq 2^n - 1 : \textnormal{$i$ is an odd integer} \} \). Then, for any \( i \in \Gamma' \), we have:
	\begin{enumerate}[(i)]
		\item \( i \) is the coset leader of \( C_i \);
		\item \( \ell_i = m \), except that \( \ell_{2^{\frac{m}{2}}+1} = \frac{m}{2} \) for even \( m \).
	\end{enumerate}
\end{lemma}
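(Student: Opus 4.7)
The plan is to exploit the fact that multiplication by $2$ modulo $v = 2^{m}-1$ acts as a cyclic rotation of the $m$-bit binary expansion. Under this identification the $2$-cyclotomic coset $C_{i}$ is the orbit of the bit pattern of $i$ under cyclic shifts, so part (i) amounts to saying that $i$ is the numerically smallest rotation of its own bit pattern, and part (ii) says that, apart from a single exception, no non-trivial rotation fixes $i$.

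The central computation I plan to set up is the following. For $1 \le s \le m-1$ set $u = m - s$ and decompose $i = 2^{u}b + a$ with $a = i \bmod 2^{u}$ and $b = \lfloor i / 2^{u} \rfloor$. A direct inspection of the cyclic shift yields
\[
\sigma^{s}(i) \;:=\; 2^{s} i \bmod v \;=\; a\cdot 2^{m-u} + b,
\]
so
\[
\sigma^{s}(i) - i \;=\; a\bigl(2^{m-u}-1\bigr) - b\bigl(2^{u}-1\bigr).
\]
For $i \in \Gamma'$, the assumptions that $i$ is odd and $i < 2^{n}$ translate into $a \ge 1$ and $b \le 2^{n-u}-1$, respectively.

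For (i) I split on the range of $s$. When $1 \le s \le m - n$ one has $u \ge n$, hence $b = 0$ and $\sigma^{s}(i) = 2^{s} i > i$ trivially. When $m-n+1 \le s \le m - 1$, so $1 \le u \le n-1$, it suffices to establish $a(2^{m-u}-1) \ge b(2^{u}-1)$; inserting the two bounds above reduces this to the numerical inequality $2^{m-u} + 2^{n-u} + 2^{u} \ge 2^{n} + 2$. A short parity case analysis settles it: for $m$ odd one has $n=(m+1)/2$ so $m-u \ge n$ and the inequality is strict; for $m$ even the inequality is again strict for $u \le n-2$, and at the borderline $u = n-1 = m/2$ one has $2^{m-u}=2^{u}=2^{m/2}$, which collapses the original inequality into the trivial $a \ge b$. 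In every case $\sigma^{s}(i) \ge i$, proving (i).

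For (ii) I chase the equality case $\sigma^{s}(i)=i$ for $0 < s < m$. The equation $a(2^{m-u}-1) = b(2^{u}-1)$, combined with $a \ge 1$, $b \le 2^{n-u}-1$, and the parity analysis above, forces $m$ to be even and $u = m/2$; the collapsed relation $a = b$ together with $b \le 1$ then yields $a=b=1$, whence $i = 2^{m/2}+1$. Since $\ell_{i}$ divides $m$, and any divisor $\ell \le m - n$ would give $\sigma^{\ell}(i) = 2^{\ell} i > i$, we conclude $\ell_{i}=m$ in every remaining case, and $\ell_{i}=m/2$ for the exceptional $i = 2^{m/2}+1$ when $m$ is even. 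The main technical obstacle is precisely the borderline $u = n-1$ in the even-$m$ case, where the generic estimate is tight and one must extract the exact pair $(a,b)=(1,1)$ to isolate the sole exception; every other configuration produces strict inequality and is immediate.
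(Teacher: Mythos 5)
The paper gives no proof of this lemma: it is imported verbatim from \cite{SETA3} (it is attributed there in the statement), so there is no in-paper argument to compare against. Your blind proof is correct and complete, and it follows the standard route one would expect for this result: identifying multiplication by $2$ modulo $v=2^m-1$ with a cyclic shift of the $m$-bit expansion, writing $i=2^u b+a$ with $u=m-s$, and using the exact identity $2^s i \bmod v - i = a(2^{m-u}-1)-b(2^u-1)$ together with the bounds $a\ge 1$ (from $i$ odd) and $b\le 2^{n-u}-1$ (from $i<2^n$). I checked the two delicate points and both are handled correctly: the reduction $2^s i\equiv 2^s a+b$ genuinely holds as an equality of least residues because $2^s a+b\le 2^m-1$ with equality only for $i=v$, which is excluded; and the borderline case $m$ even, $u=m/2$ correctly collapses the difference to $(a-b)(2^{m/2}-1)$ with $b\le 1$, isolating $(a,b)=(1,1)$, i.e.\ $i=2^{m/2}+1$, as the unique source of the exception $\ell_i=m/2$. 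No gaps.
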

\vspace{0.5em}
\begin{remark}
	From Lemma \ref{lem:L1}, we conclude that \( C_{i} \cap C_{j} = \emptyset \) for any distinct \( i, j \in \Gamma' \).
\end{remark}
\subsection{Cyclic codes designed by periodic sequences} 
Let $s^{\infty}=(s_{i})_{i=0}^{\infty}$ be a sequence of period $v$ over $\mathbb{F}_{2}$. The polynomial $M(x)=1+m_{1}x+m_{2}x^{2}+\dots+m_{l}x^{l}$ over $\mathbb{F}_{2}$ is called the \textit{minimal polynomial} of $s^{\infty}$ if $l$ is the smallest positive integer such that 
\begin{flalign*}
	\hskip 15pt	-s_{i}&=m_{1}s_{i-1}+m_{2}s_{i-2}+\dots+m_{l}s_{i-l} \text{ for all $i\geq l$.} &
\end{flalign*}
Throughout this paper, the minimal polynomial of the sequence $s^{\infty}$ is denoted by the notation $\textit{g}_{s}(x)$. The degree of the polynomial $\textit{g}_{s}(x)$ is known as the \textit{linear span} of $s^{\infty}$ and we denote it by the notation $L_{s}$. The cyclic code with generator polynomial $\textit{g}_{s}(x)$ is referred to as $\mathcal{C}_{s}$, and we call the cyclic code $\mathcal{C}_{s}$ as the code designed by the sequence $s^{\infty}$.
\vskip 1pt
The following well-known Lemma \cite{SETA1} provides an efficient way to determine the generator polynomial $\textit{g}_{s}(x)$ and the linear span $L_{s}$ corresponding to any sequence $s^{\infty}$ of period $v$.
\vspace{0.5em}
\begin{lemma}\label{C}
	\label{lem:technical}
	For any sequence $s^{\infty}=(s_{t})_{t=0}^{\infty}$ over $\mathbb{F}_{2}$ of period $2^m-1$, the component $s_{t}$ has a unique expansion of the form
	\begin{equation*}\label{A}
		s_t=\sum_{i=0}^{2^m-2}a_i\alpha^{it}\hspace{1mm} \textnormal{ for all } t\geq 0,
	\end{equation*}
	where $a_{i}\in\mathbb{F}_{2^m}$. Let the index set be $I_{s}=\{i: a_i\neq 0\}$, then the minimal polynomial $\textit{g}_{s}(x)$ of $s^{\infty}$ is $\prod_{j\in I_{s}}^{}(1-\alpha^j x)$, and the linear span of $s^{\infty}$ is $L_{s}=|I_{s}|$.
\end{lemma}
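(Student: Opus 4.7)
My plan is to treat the period-$v$ sequence as a vector in $\mathbb{F}_{2^m}^v$ and perform a discrete Fourier expansion over the multiplicative group $\mathbb{F}_{2^m}^*$. This both yields the expansion in the statement and converts the minimality of the characteristic polynomial into a question about its roots.

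First I would observe that the matrix $(\alpha^{it})_{0\leq t,i\leq v-1}$ is a Vandermonde matrix in the distinct nodes $\alpha^0,\alpha^1,\ldots,\alpha^{v-1}$, hence invertible over $\mathbb{F}_{2^m}$. Applying its inverse to $(s_0,\ldots,s_{v-1})^T\in\mathbb{F}_{2^m}^v$ produces unique coefficients $(a_0,\ldots,a_{v-1})\in\mathbb{F}_{2^m}^v$ with $s_t=\sum_{i=0}^{v-1} a_i\alpha^{it}$ for $0\leq t\leq v-1$. Using $\alpha^v=1$ and the $v$-periodicity of $s^{\infty}$, the identity extends to all $t\geq 0$, giving existence and uniqueness of the expansion.

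Next I would plug the expansion into a candidate linear recurrence $\sum_{k=0}^{l} m_k s_{t-k}=0$ (with $m_0=1$) and factor out $\alpha^{it}$ to obtain
\[
\sum_{k=0}^{l} m_k s_{t-k}=\sum_{i=0}^{v-1} a_i\, M(\alpha^{-i})\,\alpha^{it}.
\]
The right-hand side is itself the Fourier expansion of a $v$-periodic sequence, so if it vanishes for $t\geq l$ then it vanishes on a full period, hence for all $t$. By the uniqueness established above applied to the residual sequence $r_t:=\sum_k m_k s_{t-k}$, the recurrence is equivalent to $a_i M(\alpha^{-i})=0$ for every $i$, i.e., $M(\alpha^{-j})=0$ for every $j\in I_s$. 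Equivalently, $P(x):=\prod_{j\in I_s}(1-\alpha^j x)$ divides $M(x)$.

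Finally I would verify that $P(x)$ itself qualifies as an $\mathbb{F}_2$-valued characteristic polynomial with constant term $1$. The constant term is clearly $1$. Since $s_t\in\mathbb{F}_2$, applying Frobenius to the expansion and comparing by uniqueness gives $a_{2i\bmod v}=a_i^2$; hence $I_s$ is a union of $2$-cyclotomic cosets modulo $v$, so the elementary symmetric functions in $\{\alpha^j:j\in I_s\}$ are Frobenius-invariant, placing $P(x)\in\mathbb{F}_2[x]$. Together with the divisibility statement above, $P(x)$ is the minimal polynomial of $s^{\infty}$ and has degree $|I_s|$, so $g_s(x)=P(x)$ and $L_s=|I_s|$. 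The main technical delicacy is the passage from ``the recurrence holds for $t\geq l$'' to ``every coefficient $a_i M(\alpha^{-i})$ vanishes''; once this is handled via the Fourier uniqueness applied to the residual sequence, everything else reduces to routine linear algebra and Frobenius invariance over $\mathbb{F}_{2^m}$.
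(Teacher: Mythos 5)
Your proof is correct. Note that the paper itself does not prove this lemma: it is stated as a well-known result and attributed to Antweiler and B\"omer \cite{SETA1}, so there is no internal argument to compare against. Your blind reconstruction is the standard one for this fact: the Vandermonde/discrete-Fourier expansion over $\mathbb{F}_{2^m}^{*}$ gives existence and uniqueness of the coefficients $a_i$; substituting the expansion into a candidate recurrence and using periodicity plus uniqueness of the residual sequence's coefficients shows that a feedback polynomial $M(x)$ with $M(0)=1$ is valid precisely when $M(\alpha^{-j})=0$ for every $j\in I_s$, i.e.\ when $\prod_{j\in I_s}(1-\alpha^j x)$ divides $M(x)$; and the Frobenius argument $a_{2i\bmod v}=a_i^2$ correctly certifies that this product lies in $\mathbb{F}_2[x]$, so it is itself an admissible (hence the minimal) feedback polynomial. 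You also correctly isolate and resolve the one delicate step, namely passing from vanishing of the residual sequence for $t\geq l$ to vanishing of each coefficient $a_iM(\alpha^{-i})$ via periodicity over a full period. I see no gap.
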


\vspace{0.5em} % Add a small vertical space here

\begin{remark}
	From the above discussion, we conclude that the generator polynomial of the cyclic code $\mathcal{C}_{s}$ is given by
	\begin{align*}
		\textit{g}_{s}(x)=\prod_{i\in I_{s}\cap\Gamma}m_{\alpha^{-i}}(x)
	\end{align*}
\end{remark}

\begin{table}[htbp]
    \centering
    \caption{Optimality of $\mathcal{C}_{s}$ and $\mathcal{C}_{s}^{\perp}$ from polynomials $f_{i}(x)$ over $\mathbb{F}_{2^m}$}
    \label{table2}
    \footnotesize
    \renewcommand{\arraystretch}{1.2}
    \setlength{\tabcolsep}{4pt}
    \begin{tabular}{|c|c|l|l|l|l|}
        \hline
        \textbf{i} & \textbf{m} & \textbf{$\mathcal{C}_{s}$} & \textbf{$\mathcal{C}_{s}^{\perp}$} & \textbf{Optimality of $\mathcal{C}_{s}$} & \textbf{Optimality of $\mathcal{C}_{s}^{\perp}$} \\ 
        \hline
        1 & Any odd $\geq3$ & $[2^m-1,2^m-2-m,4]$ & -- & Optimal family & -- \\ 
        \hline
        2 & 5 & $[31,25,4]$ & $[31,6,15]$ & Optimal & Optimal \\
        \hline
        2 & 7 & $[127,105,6]$ & $[127,22,43]$ & No & No \\
        \hline
        3 & Any odd $\geq5$ & $[2^m-1,2^m-2-3m,8]$ & -- & Optimal family & -- \\ 
        \hline
        4 & 5 & $[31,15,8]$ & $[31,16,7]$ & Optimal & Near optimal \\ 
        \hline
        4 & 7 & $[127,105,6]$ & $[127,22,43]$ & No & No \\
        \hline
        5 & 5 & $[31,15,8]$ & $[31,16,7]$ & Optimal & Near optimal \\
        \hline
        5 & 7 & $[127,91,8]$ & $[127,36,28]$ & No & No \\
        \hline
        6 & 4 & $[15,7,3]$ & $[15,8,4]$ & No & Optimal \\
        \hline
        6 & 6 & $[63,39,7]$ & $[63,24,12]$ & No & No \\
        \hline
        6 & 8 & $[255,175,15{\leq}d(\mathcal{C}_{s}){\leq}17]$ & $[255,80,40]$ & No & No \\
        \hline
        7 & 4 & $[15,11,3]$ & $[15,4,8]$ & Optimal & Optimal \\
        \hline
        7 & 6 & $[63,45,5]$ & $[63,18,16]$ & No & No \\
        \hline
        7 & 8 & $[255,199,10]$ & $[255,56,64]$ & No & No \\
        \hline
        8 & 6 & $[63,28,9]$ & $[63,35,10]$ & No & No \\
        \hline
        8 & 8 & $[255,123,20{\leq}d(\mathcal{C}_{s}){\leq}31]$ & $[255,132,22{\leq}d(\mathcal{C}_{s}){\leq}24]$ & No & No \\
        \hline
    \end{tabular}
    \textit{\char`# \hspace{0.5em}Near optimal means 1 smaller than the best minimum distance in $\cite{SETA23}$. The computation of the minimum distances for the infinite families $\mathcal{C}_{s}^{\perp}$ when $i=1,3$ is still an open problem.}
\end{table}
\section{Binary cyclic codes from polynomials over $\mathbb{F}_{2^m}$, $m$ is odd}
\subsection{Binary code $\mathcal{C}_{s}$ from the trinomial $x + x^{2^{(m+1)/2}-1}+x^{2^m-2^{(m+1)/2}+1} $}\label{3.1}

Let us consider the permutation trinomial $f_{1}(x)= x + x^{2^{(m+1)/2}-1}+x^{2^m-2^{(m+1)/2}+1} $ over $\mathbb{F}_{2^m}$, where $m$ is an odd integer $(\text{see Theorem 2.1 of } \cite{SETA2})$. This subsection studies the binary cyclic code $\mathcal{C}_{s}$ designed by the sequence defined in Eq. (\ref{B}) from $f_{1}(x)$ over $\mathbb{F}_{2^m}$. We now prove the following result.
\vspace{0.5em}
\begin{theorem}\label{E}
	Let $m = 2h+1 \geq 3$ and $s^{\infty}$ be the sequence defined in Eq. $(\ref{B})$ from the trinomial $f_{1}(x)$ over $\mathbb{F}_{2^m}$. Then the binary cyclic code $\mathcal{C}_{s}$ has parameters $[2^{m}-1, 2^{m}-2-m, 4]$ with the generator polynomial given by
	\begin{equation*}
		\textit{g}_{s}(x) = (x-1)m_{\alpha^{-1}}(x).
	\end{equation*}
\end{theorem}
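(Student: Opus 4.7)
The plan is to apply Lemma~\ref{C} to the sequence $s_t = \operatorname{Tr}(f_1(\alpha^t + 1))$. Setting $k = (m+1)/2$ and using the binary expansions $2^k - 1 = 1 + 2 + \cdots + 2^{k-1}$ and $2^m - 2^k + 1 = 1 + 2^k + 2^{k+1} + \cdots + 2^{m-1}$, the characteristic-$2$ form of Lucas's theorem yields
\[
(\alpha^t + 1)^{2^k - 1} = \sum_{j=0}^{2^k - 1} \alpha^{jt}, \qquad (\alpha^t + 1)^{2^m - 2^k + 1} = \sum_{I \subseteq S} \alpha^{n_I t},
\]
with $S = \{0, k, k+1, \ldots, m-1\}$ and $n_I = \sum_{i \in I} 2^i$. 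First I would apply $\operatorname{Tr}$ term by term, use $\operatorname{Tr}(1) = 1$ (since $m$ is odd), and collect multiplicities modulo $2$; after the dust settles, the constant term of $s_t$ is $1$, and for $j \geq 1$ the multiplicity $\mu(j)$ mod $2$ of $\operatorname{Tr}(\alpha^{jt})$ is the indicator of
\[
A \cup B, \qquad A = \{1, 2, \ldots, 2^k - 1\}, \qquad B = \{j \in \operatorname{span}(S) : j \geq 2^k\}.
\]

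Because $m$ is odd, $m/\ell_j$ is odd for every cyclotomic coset, so $\operatorname{Tr}(\alpha^{jt}) = \sum_{i \in C_j} \alpha^{it}$ in $\mathbb{F}_{2^m}$. By Lemma~\ref{C}, the coefficient $a_i$ in the unique expansion of $s_t$ then equals $|C_i \cap (A \cup B)| \pmod 2$ for $i \neq 0$, and $a_0 = 1$. The crux is to prove $|C \cap (A \cup B)|$ is even for every coset $C \notin \{C_0, C_1\}$. The key observation I would use is that multiplication by $2^k$ modulo $v$ gives a bijection $\phi \colon \{0, 1, \ldots, 2^k - 1\} \to \operatorname{span}(S)$, since $m = 2k - 1$ and $2^m \equiv 1 \pmod v$ cyclically shift the low $k$-bit block onto the $k$ positions of $S$. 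As every cyclotomic coset is invariant under $\phi$, this restricts to $|C \cap \{0, \ldots, 2^k - 1\}| = |C \cap \operatorname{span}(S)|$; for $C \neq C_0, C_1$, the elements $0$ and $1$ do not appear in $C$, so $|C \cap A| = |C \cap B|$ and thus $|C \cap (A \cup B)| = 2|C \cap A|$ is even. A direct count gives $|C_1 \cap (A \cup B)| = k + (k-1) = m$, which is odd, so $1 \in I_s$.

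This yields $I_s = \{0\} \cup C_1$, hence by the remark after Lemma~\ref{C}, $g_s(x) = (x - 1)\, m_{\alpha^{-1}}(x)$, of degree $m + 1$, giving $\dim \mathcal{C}_s = 2^m - 2 - m$. For the minimum distance, the three consecutive powers $\alpha^{-2}, \alpha^{-1}, \alpha^0$ are roots of $g_s(x)$, so the BCH bound gives $d(\mathcal{C}_s) \geq 4$. Conversely, the sphere-packing bound at $t = 2$ requires $1 + v + \binom{v}{2} \leq 2^{v - \dim \mathcal{C}_s} = 2^{m+1}$, which fails for every $m \geq 3$ (the left side is of order $2^{2m-1}$), ruling out $d \geq 5$ and forcing $d = 4$. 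The main obstacle I expect is the parity-by-coset step; once the bijection $\phi$ is identified, the rest is essentially bookkeeping.
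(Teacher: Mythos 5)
Your proposal is correct and arrives at the same index set $I_s=\{0\}\cup C_1$, but the route through the trace computation is genuinely different from the paper's. The paper simplifies $s_t$ directly: writing $h=(m-1)/2$, it expands $(\alpha^t+1)^{2^{h+1}-1}=\sum_{i=0}^{2^{h+1}-1}(\alpha^t)^i$ and $(\alpha^t+1)^{2^m-2^{h+1}+1}=(\alpha^t+1)\sum_{i=0}^{2^h-1}(\alpha^t)^{i2^{h+1}}$, then uses $\operatorname{Tr}(y^{2^h})=\operatorname{Tr}(y)$ to replace the exponents $i\cdot 2^{h+1}+1$ by $i+2^h$, after which everything cancels term by term inside the trace and one obtains the exact identity $s_t=\operatorname{Tr}(\alpha^t)+1$ with no cyclotomic-coset counting at all. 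You expand the same two blocks of exponents (your $A$ and $\operatorname{span}(S)$ are precisely the paper's two index sets), but you establish the conclusion coset by coset, proving $|C\cap(A\cup B)|$ is even for every $C\notin\{C_0,C_1\}$ via the multiplication-by-$2^k$ bijection $\phi$ --- which is exactly the inverse of the Frobenius shift the paper uses to cancel. So both proofs hinge on the same shift observation; the paper's version is shorter and yields the closed form of $s_t$ immediately, while your parity-by-coset bookkeeping is more mechanical and would adapt more readily to cases where the cancellation is only partial (as happens for the trinomials in the later sections of the paper). Your endgame coincides with the paper's: the BCH bound gives $d\ge 4$ (you read off the consecutive roots $\alpha^{-2},\alpha^{-1},\alpha^{0}$ of $\textit{g}_{s}(x)$ itself rather than passing to the reciprocal polynomial, which is equally valid), and the sphere-packing bound rules out $d\ge 5$, forcing $d=4$.
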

\begin{proof}
	For $m$ being odd, $\operatorname{Tr}\left(1\right) = 1$. From Eq. (\ref{B}), we have
	\begin{align}\label{EE1}
		s_{t} &= \operatorname{Tr}\left(f_{1}(\alpha^t+1)\right) \nonumber\\
		&= \operatorname{Tr}\left((\alpha^t+1) + (\alpha^t+1)^{2^{h+1}-1} + (\alpha^t+1)^{1+2^{h+1}\sum_{i=0}^{h-1}2^{i}}\right) \nonumber\\
		&= \operatorname{Tr}\left((\alpha^t+1) + (\alpha^t+1)^{\sum_{i=0}^{h}2^{i}} + (\alpha^t+1)(\alpha^{t}+1)^{\sum_{i=0}^{h-1}2^{h+1+i}}\right) \nonumber\\
		&= \operatorname{Tr}\left((\alpha^t+1) + \sum_{i=0}^{2^{h+1}-1}(\alpha^t)^{i} + (\alpha^t+1)\sum_{i=0}^{2^{h}-1}(\alpha^t)^{i \cdot 2^{h+1}}\right) \nonumber\\
		&= \operatorname{Tr}\left((\alpha^{t}+1) + \sum_{i=0}^{2^{h+1}-1}(\alpha^{t})^{i} + \sum_{i=0}^{2^{h}-1}(\alpha^{t})^{i \cdot 2^{h+1}+1} + \sum_{i=0}^{2^{h}-1}(\alpha^{t})^{i}\right) \nonumber\\
		&= \operatorname{Tr}\left((\alpha^{t}+1) + \sum_{i=2^h}^{2^{h+1}-1}(\alpha^{t})^{i} + \sum_{i=0}^{2^{h}-1}(\alpha^{t})^{i+2^h}\right) \nonumber\\
		&= \operatorname{Tr}\left(\alpha^t\right) + 1.
	\end{align}
	
	The $2$-cyclotomic coset $C_{1}$ is of size $m$. From Eq. (\ref{EE1}), we have $s_{t}=1+\sum_{i\in C_{1}}(\alpha^t)^{i}\text{ for all }t\geq 0$. The index set $I_{s}$ corresponding to the sequence $s^{\infty}$ of (\ref{EE1}) is $C_{1}\cup\{0\}$, and the linear span $L_{s}$ of $s^{\infty}$ is $|I_{s}|=m+1$. As $0$ and $1$ are the only coset leaders in $I_{s}$, the results on the dimension of the code $\mathcal{C}_{s}$ and its generator polynomial follow directly from Lemma \ref{C}.

	Let $d(\mathcal{C}_{s})$ denote the minimum distance of the code $\mathcal{C}_{s}$. The reciprocal of the generator polynomial $\textit{g}_{s}(x)$ has roots $1$, $\alpha$, and $\alpha^{2}$. As we know, the code $\mathcal{C}_{s}$ and the code generated by the reciprocal of $\textit{g}_{s}(x)$ both have the same weight distribution. Hence, $d(\mathcal{C}_{s}) \geq 4$ from the BCH bound. From the dimension of $\mathcal{C}_{s}$ and by the sphere-packing bound, we obtain $d(\mathcal{C}_{s}) \leq 4$. Therefore, $d(\mathcal{C}_{s}) = 4$.
\end{proof}

\subsection{Binary code $\mathcal{C}_{s}$ from the trinomial $x^{3\cdot2^{(m+1)/2}+4}+x^{2^{(m+1)/2}+2}+x^{2^{(m+1)/2}}$}
Let $f_{2}(x)=x^{3\cdot2^{(m+1)/2}+4}+x^{2^{(m+1)/2}+2}+x^{2^{(m+1)/2}}$ over $\mathbb{F}_{2^m}$, where $m$ is an odd integer. This subsection deals with the cyclic code $\mathcal{C}_{s}$ from the permutation trinomial $f_{2}(x)$ over $\mathbb{F}_{2^m}$ (see \cite{SETA16} or Theorem $4$ in \cite{SETA17}).
\begin{theorem}\label{Th2}
	Let $m=2h+1 \geq 7$ and $s^{\infty}$ be the sequence defined in Eq. \textnormal{(\ref{B})} from the trinomial $f_{2}(x)$ over $\mathbb{F}_{2^m}$. Then the binary cyclic code $\mathcal{C}_{s}$ has parameters $[2^m-1, 2^m-2-3m,d(\mathcal{C}_{s})]$, where $4 \leq d(\mathcal{C}_{s}) \leq 8$, with the generator polynomial given by
	\begin{align*}
		\textit{g}_{s}(x) &= (x-1) m_{\alpha^{-3}}(x) m_{\alpha^{-1-2^{h-1}}}(x) m_{\alpha^{-1-2^{h-1}-2^{h}}}(x).
	\end{align*}
	\begin{proof}
		We know that $\operatorname{Tr}(x^{2^{t}}) = \operatorname{Tr}(x)$ for any integer $t \geq 0$ and $x \in \mathbb{F}_{2^m}$. By definition, we have
		\begin{align}\label{E26}
			s_{t} &= \operatorname{Tr}\left(f_{2}(\alpha^{t}+1)\right) \nonumber \\
			&= \operatorname{Tr}\left((\alpha^{t}+1)^{2^{h}+2^{h-1}+1} + (\alpha^{t}+1)^{2^{h}+1} + (\alpha^{t}+1)\right) \nonumber \\
			&= \operatorname{Tr}\left((\alpha^{t})^{2^{h}+2^{h-1}+1} + (\alpha^{t})^{2^{h-1}+1} + (\alpha^{t})^{3}\right) + 1.
		\end{align}
		By Lemma \ref{lem:L1}, we know that the $2$-cyclotomic cosets $C_{3}$, $C_{2^{h-1}+1}$, and $C_{2^{h}+2^{h-1}+1}$ are of size $m$, and their coset leaders are $3$, $2^{h-1}+1$, and $2^{h}+2^{h-1}+1$ respectively. Hence, they are pairwise disjoint. From Eq. (\ref{E26}), we have $s_{t}=1+\sum_{i\in C_{3}}(\alpha^t)^{i}+\sum_{i\in C_{2^{h-1}+1}}(\alpha^t)^{i}+\sum_{i\in C_{2^{h}+2^{h-1}+1}}(\alpha^t)^{i}\text{ for all }t\geq 0$. The index set $I_{s}$ corresponding to the sequence $s^{\infty}$ of (\ref{E26}) is $\{0\}\cup C_{3}\cup C_{2^{h-1}+1}\cup C_{2^{h}+2^{h-1}+1}$, and the linear span $L_{s}$ of $s^{\infty}$ is $|I_{s}|=3m+1$. The results on the dimension of the code $\mathcal{C}_{s}$ and its generator polynomial follow directly from Lemma \ref{lem:technical}.
        \vskip 1pt
        Note that $\mathcal{C}_{s}$ is an even-weight code, and the reciprocal of $\textit{g}_{s}(x)$ has the roots $\alpha^{2^{h}+2^{h-1}}$ and $\alpha^{2^{h}+2^{h-1}+1}$. By the BCH bound and the sphere-packing bound, we conclude that $4 \leq d(\mathcal{C}_{s}) \leq 8$. % The reciprocal of $\mathbb{M}_{s}(x)$ has roots $\alpha^{2^{h}+2^{h-1}}$ and $\alpha^{2^{h}+2^{h-1}+1}$. The lower bound on $d(\mathcal{C}_{s})$ follows by combining the BCH bound and the fact that $\mathcal{C}_{s}$ is an even-weight code.
	\end{proof}
\end{theorem}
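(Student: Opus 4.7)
The plan is to follow the same template that succeeded in Theorem~\ref{E}: reduce $s_{t}=\operatorname{Tr}(f_{2}(\alpha^{t}+1))$ to a trace polynomial in $\alpha^{t}$ whose exponents are the union of a few disjoint $2$-cyclotomic cosets, then read off the index set $I_{s}$ (and hence $L_{s}$, the dimension, and $\textit{g}_{s}(x)$) from Lemma~\ref{lem:technical}.

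First, I will rewrite the exponents of $f_{2}$ in a form friendly to the Frobenius identity $\operatorname{Tr}(x^{2^{k}})=\operatorname{Tr}(x)$. Since $(m+1)/2=h+1$, the exponent $3\cdot 2^{h+1}+4$ equals $4(2^{h}+2^{h-1}+1)$, while $2^{h+1}+2=2(2^{h}+1)$ and $2^{h+1}=2\cdot 2^{h}$. Setting $x=\alpha^{t}$, the Frobenius identity therefore gives
\begin{equation*}
 s_{t}=\operatorname{Tr}\bigl((x+1)^{2^{h}+2^{h-1}+1}\bigr)+\operatorname{Tr}\bigl((x+1)^{2^{h}+1}\bigr)+\operatorname{Tr}(x+1).
\end{equation*}
Now I expand $(x+1)^{2^{h}+2^{h-1}+1}=(x+1)(x^{2^{h-1}}+1)(x^{2^{h}}+1)$ into its eight monomials, and similarly expand $(x+1)^{2^{h}+1}$ and $(x+1)$. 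Applying $\operatorname{Tr}(x^{2^{k}\cdot j})=\operatorname{Tr}(x^{j})$ coset-by-coset collapses each exponent to a representative in $\{1,3,2^{h-1}+1,2^{h}+1,2^{h}+2^{h-1}+1\}$. Over $\mathbb{F}_{2}$ the $\operatorname{Tr}(x)$, $\operatorname{Tr}(x^{2^{h}+1})$ and extra $\operatorname{Tr}(1)$ contributions should cancel in pairs, leaving exactly
\begin{equation*}
 s_{t}=\operatorname{Tr}(x^{2^{h}+2^{h-1}+1})+\operatorname{Tr}(x^{2^{h-1}+1})+\operatorname{Tr}(x^{3})+1.
\end{equation*}
The main bookkeeping obstacle here is precisely the parity tracking of $\operatorname{Tr}(1)$ (which equals $1$ since $m$ is odd) and the duplicated traces of shifted exponents; if the cancellation is off, spurious cosets will appear and corrupt the generator polynomial. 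I would verify this step by re-grouping the eight monomials from the cubic factor before applying Frobenius, rather than trace-by-trace.

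For the algebraic parameters, I will invoke Lemma~\ref{lem:L1}: the integers $3$, $2^{h-1}+1$ and $2^{h}+2^{h-1}+1$ all lie in $\Gamma'=\{1\le i<2^{(m+1)/2}:\ i\text{ odd}\}$, so they are coset leaders and each $C_{i}$ has size $m$. The remark after Lemma~\ref{lem:L1} ensures pairwise disjointness, so $I_{s}=\{0\}\cup C_{3}\cup C_{2^{h-1}+1}\cup C_{2^{h}+2^{h-1}+1}$ has cardinality $3m+1$. Lemma~\ref{lem:technical} then yields $L_{s}=3m+1$, $\dim(\mathcal{C}_{s})=2^{m}-1-(3m+1)=2^{m}-2-3m$, and (via the standard reciprocal-root computation used for $\mathcal{C}_{s}$) the stated generator polynomial.

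For the minimum distance, I would argue from two sides. The factor $(x-1)$ in $\textit{g}_{s}(x)$ forces $\mathcal{C}_{s}$ to be an even-weight code, giving $d(\mathcal{C}_{s})\ge 4$ only after I show it is at least $3$. To that end, I pass to the equivalent code generated by the reciprocal $\textit{g}_{s}^{*}(x)$: among its roots are $\alpha^{2^{h}+2^{h-1}}$ and $\alpha^{2^{h}+2^{h-1}+1}$, which are two consecutive powers, so the BCH bound yields $d\ge 3$, upgraded to $d\ge 4$ by the even-weight property. For the upper bound, I apply the sphere-packing bound with dimension $2^{m}-2-3m$, which rules out $d\ge 9$ and so gives $d(\mathcal{C}_{s})\le 8$. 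Combined, $4\le d(\mathcal{C}_{s})\le 8$. The hard or delicate part of the whole argument is the trace collapse in the first paragraph: everything else is routine once the index set $I_{s}$ is nailed down.
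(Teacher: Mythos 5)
Your proposal is correct and follows essentially the same route as the paper: the same Frobenius reduction of the three exponents to $\{2^{h}+2^{h-1}+1,\,2^{h}+1,\,1\}$, the same collapse to $\operatorname{Tr}(x^{2^{h}+2^{h-1}+1})+\operatorname{Tr}(x^{2^{h-1}+1})+\operatorname{Tr}(x^{3})+1$, the same appeal to Lemma~\ref{lem:L1} and Lemma~\ref{lem:technical} for the dimension and generator polynomial, and the same BCH-plus-even-weight lower bound and sphere-packing upper bound. The cancellation you flag as delicate does work out exactly as you predict (the surviving $\operatorname{Tr}(1)=1$ supplies the constant term), so no gap remains.
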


\begin{example}\label{EX2}
	Let $m=5$ and $\alpha$ be a root of the primitive polynomial $x^5 + x^2 + 1$ over $\mathbb{F}_2$. The generator polynomial of $\mathcal{C}_s$ is $\textit{g}_s(x) = x^6 + x^2 + x + 1$. Then, $\mathcal{C}_s$ is a $[31, 25, 4]$ binary cyclic code and $\mathcal{C}_s^{\perp}$ is a $[31, 6, 15]$ binary cyclic code. According to the database \cite{SETA23}, both $\mathcal{C}_s$ and $\mathcal{C}_s^{\perp}$ are optimal.
	% Both codes are optimal according to the Database.
\end{example}
\vspace{0.5em}
\begin{example}
		Let $m=7$ and $\alpha$ is a root of the primitive polynomial $x^7+x+1$ over $\mathbb{F}_{2}$. The minimal polynomial of $s^{\infty}$ is $\mathbb{M}_{s}(x)=x^{22}+x^{21}+x^{20}+x^{18}+x^{16}+x^{15}+x^{13}+x^{12}+x^{11}+x^{10}+x^8+x^7+x^5+x^3+x^2+1$. Then $\mathcal{C}_{s}$ is a binary $[127, 105, 6]$ cyclic code and its dual $\mathcal{C}_{s}^{\perp}$ is a $[127,22,43]$ cyclic code.
	\end{example}
    
\subsection{Binary code $\mathcal{C}_{s}$ from the trinomial $x+x^3+x^{2^{(m+1)/2}+1}$}\label{3.3}
In 1999, Dobbertin \cite{SETA4} showed the bijectivity of the polynomial $f_{3}(x)=x+x^3+x^{2^{(m+1)/2}+1}$ over $\mathbb{F}_{2^{m}}$, where $m$ is an odd integer. This subsection focuses on the binary code $\mathcal{C}_s$ from the permutation trinomial $f_{3}(x)$ over $\mathbb{F}_{2^{m}}$. %(see Theorem 2 in \cite{SETA4}). %We need the following lemma.
\vspace{0.5em}
\begin{theorem}\label{L2}
	Let $m = 2h + 1 \geq 5$ and $s^{\infty}$ be the sequence defined in Eq. $(\ref{B})$ from the trinomial $f_3(x)$ over $\mathbb{F}_{2^m}$. Then, the binary cyclic code $\mathcal{C}_s$ has parameters $[2^m-1, 2^m-2-3m, 8]$ with the generator polynomial given by
	\begin{flalign}\label{E1}
		\hskip 30pt \textit{g}_s(x) &= (x-1) m_{\alpha^{-1}}(x) m_{\alpha^{-3}}(x) m_{\alpha^{-(2^h+1)}}(x). &
	\end{flalign}
	
	\begin{proof}
		We know that $\operatorname{Tr}(x^{2^h}) = \operatorname{Tr}(x)$ and $x^{2^{2h+1}} = x$ for all $x \in \mathbb{F}_{2^{2h+1}}$. By definition, we have
		\begin{flalign}\label{G}
			\hskip 30pt s_t &= \operatorname{Tr}\left(f_3(\alpha^t + 1)\right) \nonumber \\
			&= \operatorname{Tr}\left((\alpha^t + 1) + (\alpha^t + 1)^3 + (\alpha^t + 1)^{2^{h+1} + 1}\right) \nonumber \\
			&= \operatorname{Tr}\left((\alpha^t)^{2^{h+1} + 1} + (\alpha^t)^3 + (\alpha^t) + 1\right) \nonumber \\
			&= \operatorname{Tr}\left((\alpha^t)^{2^h + 1} + (\alpha^t)^3 + \alpha^t\right) + 1. &
		\end{flalign}
		
		By Lemma \ref{lem:L1}, we know that the $2$-cyclotomic cosets $C_1$, $C_3$, and $C_{2^h + 1}$ are of size $m$ and are pairwise disjoint. With the help of Lemma \ref{C} and Eq. (\ref{G}), the results on the dimension of the code $\mathcal{C}_s$ and its generator polynomial follow similarly to Theorem \ref{Th2}.
		
		Let $\mathcal{A}_s$ be the cyclic code with the generator polynomial $m_{\alpha^{-1}}(x) m_{\alpha^{-(2^k + 1)}}(x) m_{\alpha^{-(2^{2k} + 1)}}(x)$, where $k = h + 1$. Then, $\mathcal{A}_s$ is a triple-error-correcting code with minimal distance equal to $7$, as $\operatorname{gcd}(k, m) = 1$ \cite{SETA6} or Theorem 1 in \cite{SETA13}. Hence, $\mathcal{C}_s$ is the even-weight subcode of $\mathcal{A}_s$. From the sphere-packing bound, the upper bound of the minimum distance of $\mathcal{C}_{s}$ is $8$. By combining these facts, we get the desired conclusion. %that the minimum distance of the code $\mathcal{C}_s$ is equal to $8$.
	\end{proof}
	
\end{theorem}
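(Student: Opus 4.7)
The plan is to compute $s_t$ explicitly via characteristic-$2$ expansion, identify its index set using $2$-cyclotomic cosets, and then bound the minimum distance from above by sphere-packing and from below by reducing to a classical triple-error-correcting BCH-type code.

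First I would expand $f_3(\alpha^t+1)$ term by term. In characteristic $2$, $(\alpha^t+1)^3 = (\alpha^t)^3 + (\alpha^t)^2 + \alpha^t + 1$ and $(\alpha^t+1)^{2^{h+1}+1} = (\alpha^t)^{2^{h+1}+1} + (\alpha^t)^{2^{h+1}} + \alpha^t + 1$. Summing all three pieces, applying $\operatorname{Tr}$, and using $\operatorname{Tr}(x^{2^i}) = \operatorname{Tr}(x)$ together with $\operatorname{Tr}(1) = 1$ (since $m$ is odd) cancels the duplicate pieces modulo $2$. The non-obvious step is replacing $\operatorname{Tr}((\alpha^t)^{2^{h+1}+1})$ by $\operatorname{Tr}((\alpha^t)^{2^h+1})$, which follows from $2^{h+1}(2^h+1) = 2^{2h+1} + 2^{h+1} \equiv 1 + 2^{h+1} \pmod{2^m-1}$, placing $2^{h+1}+1$ and $2^h+1$ in the same $2$-cyclotomic coset so their traces on $\alpha^t$ coincide. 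Next I would invoke Lemma \ref{lem:L1} on the odd integers $1, 3, 2^h+1$, all bounded by $2^{\lceil(m+1)/2\rceil}-1$ when $m \ge 5$, so each is a coset leader of size $m$ and the three cosets are pairwise disjoint. Hence $I_s = \{0\} \cup C_1 \cup C_3 \cup C_{2^h+1}$ with $L_s = 3m+1$, and Lemma \ref{C} together with its remark yield the generator polynomial in Eq. (\ref{E1}) and the dimension $2^m - 2 - 3m$.

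For the minimum distance, the sphere-packing bound gives $d(\mathcal{C}_s) \le 8$ directly. For the matching lower bound, I would set $k = h+1$ and verify $\gcd(k,m) = 1$, $2^{2k}+1 = 2\cdot 2^{2h+1}+1 \equiv 3 \pmod{2^m-1}$, and $C_{2^k+1} = C_{2^h+1}$. These identities show that the code $\mathcal{A}_s$ generated by $m_{\alpha^{-1}}(x)\,m_{\alpha^{-(2^k+1)}}(x)\,m_{\alpha^{-(2^{2k}+1)}}(x)$ has generator polynomial equal to $g_s(x)/(x-1)$, so $\mathcal{C}_s$ is exactly the even-weight subcode of $\mathcal{A}_s$. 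The classical result \cite{SETA6} (also Theorem 1 of \cite{SETA13}) gives $d(\mathcal{A}_s) = 7$, which upgrades to $d(\mathcal{C}_s) \ge 8$ for the even-weight subcode, matching the sphere-packing upper bound.

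The main obstacle is the reduction of $\mathcal{C}_s$ to the even-weight subcode of the Kasami-type triple-error-correcting code, which requires the modular arithmetic identifications above and an appeal to an external classification result; without that reduction, the BCH bound applied to the consecutive roots $1,\alpha,\alpha^2,\alpha^3,\alpha^4$ of the reciprocal generator would yield only $d \ge 6$ in general, falling short of the desired lower bound of $8$. The trace computation itself is routine once the coset identity $C_{2^{h+1}+1}=C_{2^h+1}$ is observed.
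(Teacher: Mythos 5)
Your proposal is correct and follows essentially the same route as the paper: the same trace expansion reducing $s_t$ to $\operatorname{Tr}((\alpha^t)^{2^h+1}+(\alpha^t)^3+\alpha^t)+1$ via the coset identity $C_{2^{h+1}+1}=C_{2^h+1}$, the same use of Lemma \ref{lem:L1} and Lemma \ref{C} for the dimension and generator polynomial, and the same identification of $\mathcal{C}_s$ as the even-weight subcode of the triple-error-correcting code with $k=h+1$ combined with the sphere-packing bound. Your explicit verifications that $2^{2k}+1\equiv 3\pmod{2^m-1}$ and $\gcd(k,m)=1$ merely make precise the identifications the paper leaves implicit.
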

\subsection{Binary code $\mathcal{C}_{s}$ from the trinomial $x+x^3+x^{2^m-2^{(m+3)/2}+2}$}
Define $f_{4}(x)=x+x^3+x^{2^m-2^{(m+3)/2}+2}$ over $\mathbb{F}_{2^m}$, where $m$ is odd. In Theorem 2.3 of \cite{SETA2}, $f_{4}(x)$ is proved to be a permutation over $\mathbb{F}_{2^m}$. This subsection concentrates on studying binary code $\mathcal{C}_{s}$ from the trinomial $f_{4}(x)$ over $\mathbb{F}_{2^m}$. Let $h=\frac{m-1}{2}$. Then we have
\begin{flalign}
	\hskip 15pt \operatorname{Tr}(f_4(x+1)) &= \operatorname{Tr}\left( (x+1) + (x+1)^3 + (x^2+1)(x^{2^{h+2}}+1)^{2^{h-1}-1} \right) \nonumber & \\
	&= \operatorname{Tr}\left(x^2 + x^3 + (x^2 + 1) \sum_{i=0}^{2^{h-1}-1} x^{i \cdot 2^{h+2}} \right) \nonumber & \\
	&= 1 + \operatorname{Tr}\left(x^3 + \sum_{i=1}^{2^{h-1}-1} x^{i+2^h} + \sum_{i=1}^{2^{h-1}-1} x^{i} \right) 
\end{flalign}

The sequence $s^{\infty}$ of $(\ref{B})$ designed from the trinomial $f_4(x)$ is given by
\begin{flalign}\label{E2}
	\hskip 15pt s_t &= 1 + \operatorname{Tr}\left( (\alpha^t)^3 + \sum_{i=1}^{2^{h-1}-1} (\alpha^t)^{i+2^h} + \sum_{i=1}^{2^{h-1}-1} (\alpha^t)^{i} \right), \text{ for all } t \geq 0. &
\end{flalign}
First, we shall follow some notations given in \cite{SETA3} and present some Lemmas, which will be utilized to determine the generator polynomial of the code $\mathcal{C}_{s}$.\\
Let $t$ be a positive integer. For all odd integers $j \in \{1, 2, 3, \dots, 2^{t}-1\}$, define
\vskip 5pt
\begin{math}
	\epsilon_{j}^{(t)} = 
	\begin{cases}
		1, & \text{if } j = 2^t - 1 \\
		\lceil \log_{2} \left(\frac{2^t - 1}{j}\right) \rceil, & \text{if } 1 \leq j < 2^t - 1
	\end{cases}
\end{math}
\vskip 3pt
and
\begin{math}
	\kappa_{j}^{(t)} = \epsilon_{j}^{(t)} \pmod{2}.
\end{math}
\vskip 1pt
Let $B_{j}^{(t)} = \{2^i j : i = 0, 1, 2, \dots, \epsilon_{j}^{(t)} - 1\}$.
\vskip 5pt
In addition, it is not difficult to verify that
\begin{flalign*}
	\hskip 15pt \bigcup_{1 \leq 2i+1 \leq 2^{t} - 1} B_{2i+1}^{(t)} &= \{1, 2, 3, \dots, 2^t - 1\} \text{ and } B_{j_1}^{(t)} \cap B_{j_2}^{(t)} = \emptyset &
\end{flalign*}
for any distinct pair of odd integers $j_1$ and $j_2$ in $\{1, 2, 3, \dots, 2^t - 1\}$.
\vspace{0.5em}
\begin{lemma} \textnormal{(\cite{SETA3})} \label{L10}
	Let $j$ be an odd integer in $\{1, 2, 3, \dots, 2^{t+1} - 1\}$. Then
	\vskip 1pt
	\begin{itemize}
		\item $B_{j}^{(t+1)} = B_{j}^{(t)} \cup \{j 2^{\epsilon_{j}^{(t)}}\}$ if $1 \leq j \leq 2^t - 1$,
		\item $B_{j}^{(t+1)} = \{j\}$ if $2^t + 1 \leq j \leq 2^{t+1} - 1$,
		\item $\epsilon_{j}^{(t+1)} = \epsilon_{j}^{(t)} + 1$ if $1 \leq j \leq 2^t - 1$,
		\item $\epsilon_{j}^{(t+1)} = 1$ if $2^t + 1 \leq j \leq 2^{t+1} - 1$.
	\end{itemize}
\end{lemma}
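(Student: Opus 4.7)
The plan is to reduce the lemma to the two statements about $\epsilon_j^{(t+1)}$, since $B_j^{(t)}$ is entirely determined by the index $j$ and $\epsilon_j^{(t)}$ through the defining formula $B_j^{(t)} = \{2^i j : 0 \leq i \leq \epsilon_j^{(t)}-1\}$. Once I establish $\epsilon_j^{(t+1)} = \epsilon_j^{(t)} + 1$ on the first range and $\epsilon_j^{(t+1)} = 1$ on the second, the set identities $B_j^{(t+1)} = B_j^{(t)} \cup \{j \cdot 2^{\epsilon_j^{(t)}}\}$ and $B_j^{(t+1)} = \{j\}$ drop out immediately.

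For the first range $1 \leq j \leq 2^t - 1$ (with $j$ odd), I would isolate the edge case $j = 2^t - 1$ first: here $\epsilon_j^{(t)} = 1$ by definition, and the ratio $(2^{t+1}-1)/(2^t - 1) = 2 + 1/(2^t - 1)$ lies in $(2, 3)$, so $\lceil \log_2 \rceil = 2 = \epsilon_j^{(t)} + 1$. For $1 \leq j < 2^t - 1$, set $\epsilon = \epsilon_j^{(t)}$. The identity
\[
\log_2\!\left(\frac{2^{t+1}-1}{j}\right) = 1 + \log_2\!\left(\frac{2^t - 1/2}{j}\right)
\]
combined with the fact that adding an integer commutes with $\lceil \cdot \rceil$ reduces the problem to showing $\lceil \log_2((2^t - 1/2)/j) \rceil = \epsilon$. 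The definition of $\epsilon$ gives the sandwich $j \cdot 2^{\epsilon - 1} < 2^t - 1 \leq j \cdot 2^{\epsilon}$; the key improvement is that the right inequality is strict, because $j$ is odd and $\epsilon \geq 1$, so $j \cdot 2^{\epsilon}$ is even while $2^t - 1$ is odd. Hence $j \cdot 2^{\epsilon} \geq 2^t > 2^t - 1/2$, and combined with $2^t - 1/2 > 2^t - 1 > j \cdot 2^{\epsilon - 1}$ this pins the ceiling to $\epsilon$.

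For the second range $2^t + 1 \leq j \leq 2^{t+1} - 1$, the endpoint $j = 2^{t+1} - 1$ is immediate from the definition, while for $2^t + 1 \leq j < 2^{t+1} - 1$ the ratio $(2^{t+1}-1)/j$ lies strictly between $1$ and $2$ (the upper bound follows from $j \geq 2^t + 1$), so its base-$2$ logarithm lies in $(0,1)$ and its ceiling equals $1$. In both sub-cases $\epsilon_j^{(t+1)} = 1$, forcing $B_j^{(t+1)} = \{j\}$.

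The main technical obstacle is precisely the strictness of the inequality $2^t - 1 < j \cdot 2^{\epsilon}$ in the first range: without exploiting parity, one only obtains $\leq$, which is too weak to rule out a jump of the ceiling when the numerator is perturbed from $2^t - 1$ down to $2^t - 1/2$. The observation that $j$ odd and $\epsilon \geq 1$ forces $j \cdot 2^{\epsilon}$ to be even, and hence at least $2^t$, is what bridges this half-unit gap and makes the recurrence go through uniformly.
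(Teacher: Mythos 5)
The paper does not prove this lemma at all --- it is quoted verbatim from \cite{SETA3} --- so there is no in-paper argument to compare against; your proposal therefore has to stand on its own, and it does. The reduction of the two $B$-statements to the two $\epsilon$-statements is immediate from the definition $B_j^{(t)}=\{2^ij: 0\le i\le \epsilon_j^{(t)}-1\}$, the second range is handled correctly by bounding $(2^{t+1}-1)/j$ strictly between $1$ and $2$, and the first range correctly isolates the crux: rewriting $\lceil\log_2((2^{t+1}-1)/j)\rceil = 1+\lceil\log_2((2^t-\tfrac12)/j)\rceil$ and then using the parity of $j\cdot 2^{\epsilon}$ to upgrade $2^t-1\le j\cdot 2^{\epsilon}$ to $2^t\le j\cdot 2^{\epsilon}$, which is exactly what is needed to keep the ceiling from jumping. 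The only blemish is the claim that $(2^{t+1}-1)/(2^t-1)$ lies in the open interval $(2,3)$: at $t=1$ it equals $3$ exactly; the conclusion $\lceil\log_2 3\rceil=2$ is unaffected, so this is a cosmetic slip rather than a gap.
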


\begin{lemma}\label{L4}
	Let $j$ be an odd integer in $\{1, 2, 3, \dots, 2^{t} - 1\}$. Then
	\begin{align*}
		\epsilon_{j}^{(t)} &= 
		\begin{cases}
			t, & \text{if } j = 1, \\
			t - k, & \text{if } 2^k + 1 \leq j \leq 2^{k+1} - 1, \text{ where } k \in \{1, 2, 3, \dots, t-1\}.
		\end{cases}
	\end{align*}
	
	\begin{proof}
		For $t = 1$, we have $j = 1$, and hence $\epsilon_{1}^{(1)} = 1$, which follows directly from the definition.
		
		For all $t \geq 2$, since $2^{t-1} < 2^{t} - 1 < 2^{t}$, we can deduce that
		\[
		\epsilon_{1}^{(t)} = \lceil \log_2(2^{t} - 1) \rceil = t.
		\]
		
		For $t = 2$, $j \in \{1, 3\}$, so we have $\epsilon_{1}^{(2)} = 2$ and, by Lemma \ref{L10}, $\epsilon_{3}^{(2)} = 1$.
		
		Similarly, for $t = 3$, $j \in \{1, 3, 5, 7\}$. In this case, $\epsilon_{1}^{(3)} = 3$, and from Lemma \ref{L10}, we get:
		\[
		\epsilon_{3}^{(3)} = \epsilon_{3}^{(2)} + 1 = 2, \quad \text{and} \quad \epsilon_{j}^{(3)} = 1 \text{ for } j \in \{5, 7\}.
		\]
		
		By continuing this reasoning for all values of $t$, we obtain the desired result.
		
		% Optionally, we can add more details about how this result is obtained for higher values of $t$, but this basic structure will follow recursively.
	\end{proof}
\end{lemma}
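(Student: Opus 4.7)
The plan is to prove Lemma \ref{L4} by induction on $t$, using Lemma \ref{L10} as the engine of the recursion. The formula for $\epsilon_j^{(t)}$ is a layered piecewise description based on which dyadic interval $[2^k+1, 2^{k+1}-1]$ contains the odd integer $j$, and Lemma \ref{L10} describes exactly how $\epsilon_j^{(t)}$ evolves as $t$ increases by one — either it gains a $+1$ (for small $j$) or it resets to $1$ (for large $j$). So induction should line up with the cases in the target formula almost mechanically.

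For the base case $t = 1$, the only odd $j$ in $\{1,\dots,2^1-1\}$ is $j=1$, and the definition of $\epsilon_j^{(t)}$ directly gives $\epsilon_1^{(1)} = 1 = t$, as the formula predicts. For the inductive step, I assume the formula holds for some $t \geq 1$ and prove it for $t+1$ by splitting the odd integers $j \in \{1,\dots,2^{t+1}-1\}$ into three ranges:

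\textbf{Case (a):} $j = 1$. By Lemma \ref{L10}, $\epsilon_1^{(t+1)} = \epsilon_1^{(t)} + 1$, and the inductive hypothesis gives $\epsilon_1^{(t)} = t$, so $\epsilon_1^{(t+1)} = t+1$, matching the claimed value.

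\textbf{Case (b):} $j$ odd with $1 < j \leq 2^t - 1$. Then there is a unique $k \in \{1,\dots,t-1\}$ with $2^k + 1 \leq j \leq 2^{k+1}-1$, and by induction $\epsilon_j^{(t)} = t - k$. Lemma \ref{L10} again yields $\epsilon_j^{(t+1)} = (t-k)+1 = (t+1) - k$, which is exactly what the formula for level $t+1$ predicts (note $k$ stays in $\{1,\dots,t\}$ as required).

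\textbf{Case (c):} $j$ odd with $2^t + 1 \leq j \leq 2^{t+1}-1$. Lemma \ref{L10} immediately gives $\epsilon_j^{(t+1)} = 1 = (t+1) - t$, and here $k = t \in \{1,\dots,t\}$ is in the allowed range for level $t+1$.

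These three cases exhaust all odd $j \in \{1,\dots, 2^{t+1}-1\}$, and in each one the computed value of $\epsilon_j^{(t+1)}$ agrees with the formula, closing the induction. There is no real obstacle here — the main care is simply checking that the three cases do partition the odd integers at level $t+1$ and that the indexing shift $k \mapsto k$ in Case (b) and $k = t$ in Case (c) lines up correctly with the intervals stated in the lemma; this is purely bookkeeping rather than genuine difficulty.
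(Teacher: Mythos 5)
Your proof is correct and takes essentially the same route as the paper: both derive the formula by iterating the recursion in Lemma \ref{L10} starting from the base case. The only difference is that you carry out the induction explicitly with a clean three-case split at level $t+1$, whereas the paper verifies $t=1,2,3$ and appeals to ``continuing this reasoning''; your version is, if anything, the more complete write-up.
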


%\vskip 1pt
For simplicity, we define $\Gamma_{(t)}$ to be the set of all odd integers in $\{1, 2, 3, \dots, 2^{t} - 1\}$, where $t$ is any fixed positive integer. Then for each $j \in \Gamma_{(t)}$ and $i \in B_{j}^{(t)}$, there is a unique $0 \leq \lambda_{ij} \leq \ell_{j} - 1$ such that
\[
i 2^{\lambda_{ij}} \equiv j \pmod{v}.
\]
Then for any $x \in \mathbb{F}_{2^m}$, we have
\begin{align}\label{E3}
	\operatorname{Tr}\left(\sum_{i=1}^{2^{t} - 1} x^i\right) 
	&= \operatorname{Tr}\left(\sum_{j \in \Gamma_{(t)}} \sum_{i \in B_j^{(t)}} x^i \right) \nonumber \\
	&= \sum_{j \in \Gamma_{(t)}} \sum_{i \in B_j^{(t)}} \operatorname{Tr}(x^i) \nonumber \\
	&= \sum_{j \in \Gamma_{(t)}} \kappa_j^{(t)} \operatorname{Tr}(x^j).
\end{align}
For convenience, we define $A=\{1,2,3,\dots,2^{h-1}-1\}$, where $h=\frac{m-1}{2}$.
\vspace{0.5em}
\begin{lemma}
	For any $i, j \in A$ with $i \neq j$, we have $C_{i + 2^h} \cap C_{j + 2^h} = \emptyset$.
	\begin{proof}
		Note that for any $i \in A$, we have $2^h < i + 2^h < 2^{h+1} - 1$. If $i, j \in A$ and $i$ is odd with $i \neq j$, then according to Lemma \ref{lem:L1}, $i + 2^h$ is the coset leader of $C_{i + 2^h}$, and the coset leader of $C_{j + 2^h}$ cannot be equal to $i + 2^h$. Hence, in this case, $C_{i + 2^h} \cap C_{j + 2^h} = \emptyset$.
		
		If $i, j \in A$ and $i$ is even with $i \neq j$, then there is an odd integer $i_1$ such that $i = 2^s i_1$, where $s \in \{1, 2, \dots, h - 2\}$. In this case, $C_{i + 2^h} = C_{i_1 + 2^{h - s}}$. According to Lemma \ref{lem:L1}, $i_1 + 2^{h - s}$ is the coset leader of $C_{i + 2^h}$. Thus, similarly, we have $C_{i + 2^h} \cap C_{j + 2^h} = \emptyset$.
	\end{proof}
\end{lemma}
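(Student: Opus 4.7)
The plan is to show that each coset $C_{i+2^h}$ with $i \in A$ has a coset leader that depends injectively on $i$; since distinct cosets are automatically disjoint, this yields the claim.

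First, I would write each $i \in A$ uniquely in the form $i = 2^s i_1$ with $i_1$ odd and $s \geq 0$. Because $i \leq 2^{h-1}-1$, a short check gives $s \in \{0,1,\dots,h-2\}$ and $i_1 \leq 2^{h-1-s}-1$. Then
\[
i + 2^h \;=\; 2^s i_1 + 2^h \;=\; 2^s\bigl(i_1 + 2^{h-s}\bigr),
\]
and since multiplication by $2$ preserves a $2$-cyclotomic coset modulo $v$, we obtain $C_{i+2^h} = C_{i_1 + 2^{h-s}}$. Now $2^{h-s}$ is even (as $h-s \geq 2$), so $i_1 + 2^{h-s}$ is odd; moreover it lies in $\{1,\dots,2^{h+1}-1\} = \{1,\dots,2^n - 1\}$ with $n = h+1 = \lceil (m+1)/2 \rceil$, so Lemma~\ref{lem:L1} identifies $i_1 + 2^{h-s}$ as the coset leader of $C_{i+2^h}$.

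Next, I would verify injectivity of the map $i \mapsto i_1 + 2^{h-s}$ on $A$. For distinct $i, j \in A$ with decompositions $i = 2^s i_1$ and $j = 2^t j_1$, if $s = t$ then $i_1 \neq j_1$ and the two leaders clearly differ. If $s \neq t$, assume without loss of generality $s < t$; then the leader of $C_{j+2^h}$ satisfies
\[
j_1 + 2^{h-t} \;\leq\; 2^{h-1-t}-1 + 2^{h-t} \;<\; 2^{h-t+1} \;\leq\; 2^{h-s},
\]
while the leader of $C_{i+2^h}$ satisfies $i_1 + 2^{h-s} > 2^{h-s}$. Hence the two leaders are separated by the threshold $2^{h-s}$ and cannot coincide. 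In every case the cosets $C_{i+2^h}$ and $C_{j+2^h}$ have distinct coset leaders, so they are disjoint.

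The only mild obstacle is tracking the bound on $i_1$ precisely so that $i_1 + 2^{h-s}$ lands in the interval where Lemma~\ref{lem:L1} applies; once that is in place, the remaining comparison is a routine inequality between powers of $2$.
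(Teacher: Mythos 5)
Your proof is correct and follows essentially the same route as the paper's: write $i = 2^s i_1$ with $i_1$ odd, observe $C_{i+2^h} = C_{i_1+2^{h-s}}$, and invoke Lemma~\ref{lem:L1} to identify $i_1 + 2^{h-s}$ as the coset leader, so that distinct leaders force disjoint cosets. Your version merely unifies the paper's odd/even case split by allowing $s=0$, and it spells out the power-of-two inequality showing the leaders are pairwise distinct, a detail the paper passes over with ``similarly.''
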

\vspace{0.5em}
\begin{lemma}\label{L3}
	For any $i,j \in A$, where $j$ is odd, we have
	\begin{gather*}
		C_{i + 2^h} \cap C_j = \begin{cases}
			C_j, \quad &\text{if } (i, j) \text{ is of the form } (2^s i_1, i_1 + 2^{h - s}) \\
			\emptyset, \quad &\text{otherwise}  
		\end{cases}
	\end{gather*}
	where $i_1$ ranges over the odd integers in $\{1, 2, 3, \dots, 2^{h - 1 - s} - 1\}$ and $s \in \{2, 3, \dots, h - 2\}$.
	\begin{proof}
		If $i, j \in A$ and $i$ is odd, by Lemma \ref{lem:L1}, the coset leaders of $C_{i + 2^h}$ and $C_j$ are $i + 2^h$ and $j$, respectively. Since $j < i + 2^h$, we have $C_{i + 2^h} \cap C_j = \emptyset$.
		
		If $i, j \in A$ and $i$ is even, then $i = 2^s i_1$ for some positive odd integer $i_1$. According to Lemma \ref{lem:L1}, the coset leaders of $C_{i + 2^h}$ and $C_j$ are $i_1 + 2^{h - s}$ and $j$, respectively. Note that $C_j = C_{i + 2^h}$ is possible only if $j = i_1 + 2^{h - s}$. Since $i < 2^{h - 1}$ and $j < 2^{h - 1}$, we have $i_1 < 2^{h - 1 - s}$, and $s \in \{2, 3, \dots, h - 2\}$. Hence, the result follows.
	\end{proof}
\end{lemma}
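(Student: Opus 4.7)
The plan is to split the argument into two cases according to the parity of $i$, in each case reducing the question of whether $C_{i+2^h} \cap C_j$ is empty or not to a comparison of coset leaders via Lemma~\ref{lem:L1}; since two $2$-cyclotomic cosets either coincide (when their leaders agree) or are disjoint, this is enough.

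First I would dispose of the case $i$ odd. Since $i \in A$, the integer $i + 2^h$ is odd and lies in $(2^h,\, 2^{h+1}-1)$, while $j \in A$ odd lies in $(0,\, 2^{h-1})$. With $n = h+1$ the set $\Gamma'$ of Lemma~\ref{lem:L1} contains both $i+2^h$ and $j$, so each is the coset leader of its own coset. As $j < 2^{h-1} < 2^h < i+2^h$, these leaders differ and therefore $C_{i+2^h} \cap C_j = \emptyset$, which is the ``otherwise'' alternative.

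Next I would handle the case $i$ even. Writing $i = 2^s i_1$ with $i_1$ odd, the constraint $1 \leq i \leq 2^{h-1}-1$ forces $s \in \{1,2,\dots,h-2\}$ and $1 \leq i_1 \leq 2^{h-1-s}-1$. Since multiplication by $2$ permutes $2$-cyclotomic cosets modulo $v$, I can rewrite
\[ C_{i+2^h} \;=\; C_{2^s(i_1+2^{h-s})} \;=\; C_{i_1+2^{h-s}}. \]
The integer $i_1 + 2^{h-s}$ is odd and lies in $(2^{h-s},\, 2^{h-s+1}) \subseteq \Gamma'$, so Lemma~\ref{lem:L1} identifies it as the coset leader of $C_{i+2^h}$, while $j$ is the coset leader of $C_j$. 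Hence $C_{i+2^h} = C_j$ precisely when $j = i_1 + 2^{h-s}$, and the cosets are disjoint otherwise. Requiring this matching $j$ to lie in $A$ forces $2^{h-s} < 2^{h-1}$, i.e.\ $s \geq 2$, which rules out $s = 1$ and gives exactly the parametrization claimed in the statement.

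The main obstacle is simply the range bookkeeping needed to guarantee that every integer to which Lemma~\ref{lem:L1} is applied actually lies in $\Gamma'$, together with the careful exclusion of $s = 1$ from the ``coincidence'' branch; once these inequalities are set up correctly, the rest of the argument is automatic.
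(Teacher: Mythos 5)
Your proposal is correct and follows essentially the same route as the paper: split on the parity of $i$, use Lemma~\ref{lem:L1} to identify the coset leaders of $C_{i+2^h}$ (via the reduction $C_{i+2^h}=C_{i_1+2^{h-s}}$ when $i=2^si_1$) and of $C_j$, and compare them, with the constraint $j\in A$ forcing $s\geq 2$ in the coincidence branch. Your version is if anything slightly more explicit about the range checks placing each integer in $\Gamma'$, but the argument is the same.
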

\vspace{0.5em}
\begin{lemma}\label{L7}
	Let $m \geq 5$ be odd and \( s^\infty \) be the sequence defined in Eq. \((\ref{E2})\). Then the generator polynomial \( \textit{g}_s(x)\) corresponding to the sequence \( s^\infty \) is given by
	\begin{align}
		\textit{g}_s(x) &= \prod_{i \in \Gamma_{\left( \frac{m-3}{2} \right)}} m_{\alpha^{-i-2^{\frac{m-1}{2}}}}(x) 
		\prod_{\substack{j=1 \\ \mathbb{N}_2(j)=0}}^{\frac{m-5}{2}} \left( \prod_{j \in \Gamma_{\left(\frac{m-1}{2}-j \right)}} m_{\alpha^{-i-2^{\frac{m+1}{2}-j}}}(x)\right. \nonumber \\ &\mathrel{\phantom{=}} \left.\kern-\nulldelimiterspace \times\; \prod_{j \in \Gamma_{\left( \frac{m-1}{2}-j \right)} \setminus \Gamma_{\left( \frac{m-3}{2}-j \right)}} m_{\alpha^{-i-2^{\frac{m-1}{2}-j}}}(x)\right) 
		 m_{\alpha^{-3}}(x) m_{\alpha^{-1}}(x) (x-1) \nonumber
	\end{align}
	if \( m \equiv 1 \pmod{4} \); and
	\begin{align}
		\textit{g}_s(x) &= \prod_{i \in \Gamma_{\left( \frac{m-3}{2} \right)}} m_{\alpha^{-i-2^{\frac{m-1}{2}}}}(x) 
		\prod_{\substack{j=1 \\ \mathbb{N}_2(j)=0}}^{\frac{m-5}{2}} \left( \prod_{j \in \Gamma_{\left( \frac{m-1}{2}-j \right)}} m_{\alpha^{-i-2^{\frac{m+1}{2}-j}}}(x)  \right. \nonumber \\ &\mathrel{\phantom{=}} \left.\kern-\nulldelimiterspace \times\; \prod_{j \in \Gamma_{\left( \frac{m-1}{2}-j \right)} \setminus \Gamma_{\left( \frac{m-3}{2}-j \right)}} m_{\alpha^{-i-2^{\frac{m-1}{2}-j}}}(x) \right)
		\times m_{\alpha^{-5}}(x) (x-1) \nonumber
	\end{align}
	if \( m \equiv 3 \pmod{4} \). The linear span \( L_s \) corresponding to the sequence \( s^\infty \) is given by
	\begin{flalign*}
		L_s &= \begin{cases}
			1 + m \left( 2^{\frac{m-3}{2}} + 1 \right), & \text{if } m \equiv 1 \pmod{4}, \\
			1 + m \left( 2^{\frac{m-3}{2}} - 1 \right), & \text{if } m \equiv 3 \pmod{4}.
		\end{cases}
	\end{flalign*}
	where \( \Gamma_{(t)} = \{ 1 \leq j \leq 2^t - 1 : \text{$j$ is odd integer} \} \) for any fixed positive integer \( t \), and the map \( \mathbb{N}_2(\cdot) \) is defined by
	\[
	\mathbb{N}_2(j) = \begin{cases} 
		0 & \text{if } 2 \mid j, \\
		1 & \text{if } 2 \nmid j.
	\end{cases}
	\]
	
	\begin{proof}
		For $t=h-1$, combining Lemma $\ref{L4}$ and Eq. $(\ref{E3})$, we obtain
		\begin{align}\label{E4}
			\operatorname{Tr}\left(\sum_{i=1}^{2^{h-1}-1}x^{i}\right) &=
			\operatorname{Tr}\left(\sum_{k=1}^{h-2}\sum_{j\in \Gamma_{(h-k)}\backslash\Gamma_{(h-k-1)}}\kappa_{j}^{(h-1)}x^{j}\right)+\operatorname{Tr}\left(\kappa_{1}^{(h-1)}x\right) \nonumber \\
			&=
			\sum_{k=1}^{h-2}\operatorname{Tr}\left(\sum_{j\in \Gamma_{(h-k)}\backslash\Gamma_{(h-k-1)}}\kappa_{j}^{(h-1)}x^{j}\right)+((h-1) \text{ mod }2)\operatorname{Tr}\left(x\right)
		\end{align}
		According to Lemma \ref{L4}, $\epsilon_{j}^{(h-1)}=(h-1)-(h-k-1)=k$ for all $j\in\Gamma_{(h-k)}\backslash\Gamma_{(h-k-1)}$. It is clear from the right-hand side of $(\ref{E4})$ that $\kappa_{j}^{(h-1)}$ will vanish only for these $j$'s in $\Gamma_{(h-k)}\backslash\Gamma_{(h-k-1)}$ for which $k$ is even, where $k\in\{1,2,\ldots,h-2\}$. Note that for every $j\in\Gamma_{(h-k)}\backslash\Gamma_{(h-k-1)}$, $x^{j}$ can be rewritten in the form $x^{i+2^{h-k-1}}$, where $i\in\Gamma_{(h-k-1)}$. Depending on whether $h$ is even or odd, the remaining terms on the right-hand side of $(\ref{E4})$ are as follows:
		\begin{align}\label{E5}
			\operatorname{Tr}\left(\sum_{i=1}^{2^{h-1}-1}x^{i}\right) &=
			\operatorname{Tr}\left(\sum_{i\in\Gamma_{(h-2)}}^{}x^{i+2^{h-2}}+\sum_{i\in\Gamma_{(h-4)}}^{}x^{i+2^{h-4}}+\dots+\sum_{i\in\Gamma_{(2)}}^{}x^{i+2^2}+x\right)
		\end{align} if $h$ is even; and 
		%		\begin{cases}
			%	\\ \text{ if $h$ is even.}
			\begin{align}\label{E6}
				\operatorname{Tr}\left(\sum_{i=1}^{2^{h-1}-1}x^{i}\right) &=
				\operatorname{Tr}\left(\sum_{i\in\Gamma_{(h-2)}}^{}x^{i+2^{h-2}}+\sum_{i\in\Gamma_{(h-4)}}^{}x^{i+2^{h-4}}+\dots+\sum_{i\in\Gamma_{(3)}}^{}x^{i+2^3}+x^3\right)
			\end{align} if $h$ is odd.
			%\\ \text{ if $h$ is odd.}
			%		\end{cases}		
		
		\vskip 1pt
		Note that 
		\begin{align}\label{E7}
			\operatorname{Tr}\left(\sum_{i=1}^{2^{h-1}-1}x^{i+2^{h}}\right) &=
			\operatorname{Tr}\left(\sum_{i\in \Gamma_{(h-1)}}^{}x^{i+2^{h}}\right)+\operatorname{Tr}\left(\sum_{i\in A\backslash\Gamma_{(h-1)}}^{}x^{i+2^{h}}\right) \nonumber \\
            &=
			\operatorname{Tr}\left(\sum_{i\in \Gamma_{(h-1)}}^{}x^{i+2^{h}}\right)+\operatorname{Tr}\left(\sum_{i=1}^{2^{h-2}-1}x^{i+2^{h-1}}\right) \nonumber \\
			&= \operatorname{Tr}\left(\sum_{i\in \Gamma_{(h-1)}}^{}x^{i+2^{h}}+\sum_{i\in \Gamma_{(h-2)}}^{}x^{i+2^{h-1}}\right)+\operatorname{Tr}\left(\sum_{i=1}^{2^{h-3}-1}x^{i+2^{h-2}}\right)
		\end{align}
		and 		
		\begin{align}\label{E8}
			\operatorname{Tr}\left(\sum_{i=1}^{2^{h-3}-1}x^{i+2^{h-2}}\right)&=\operatorname{Tr}\left(\sum_{s=2}^{h-2}\sum_{i\in\Gamma_{(h-s-1)}}x^{i+2^{h-s}}\right) \nonumber \\
			&=
			\operatorname{Tr}\left(\sum_{i\in \Gamma_{(h-3)}}x^{i+2^{h-2}}+\sum_{i\in\Gamma_{(h-4)}}x^{i+2^{h-3}}+\dots+\sum_{i\in\Gamma_{(1)}}x^{i+2^{2}}\right)
		\end{align}
		When $h$ is even, from Lemma \ref{L3}, it is clear which terms are the same on the right-hand side of Eq. (\ref{E5}) and (\ref{E8}). 
		\vskip 1pt
		By adding Eq. (\ref{E5}) and (\ref{E8}), we have
		\begin{align}\label{E9}			
			\operatorname{Tr}\left(\sum_{i=1}^{2^{h-1}-1}x^{i}\right)+\operatorname{Tr}\left(\sum_{i=1}^{2^{h-3}-1}x^{i+2^{h-2}}\right) &= \operatorname{Tr}\left(\sum_{i\in\Gamma_{(h-2)}\backslash\Gamma_{(h-3)}}x^{i+2^{h-2}}+\sum_{i\in\Gamma_{(h-4)}}x^{i+2^{h-3}}\right. \nonumber \\   &\mathrel{\phantom{=}} \left.\kern-\nulldelimiterspace +\; \sum_{i\in\Gamma_{(h-4)}\backslash\Gamma_{(h-5)}}x^{i+2^{h-4}}+\dots+\sum_{i\in\Gamma_{(2)}\backslash\Gamma_{(1)}}^{}x^{i+2^2}+x\right)	 
		\end{align}
		With the help of Eq. $(\ref{E7})$ and $(\ref{E9})$, we obtain
		\begin{align}\label{E11}
			\operatorname{Tr}\left( f_{4}(x+1)\right) &= \operatorname{Tr}\left(\sum_{i\in\Gamma_{(h-1)}}x^{i+2^{h}}+\sum_{i\in\Gamma_{(h-2)}}x^{i+2^{h-1}}+\sum_{i\in\Gamma_{(h-2)}\backslash\Gamma_{(h-3)}}x^{i+2^{h-2}}+
			\dots\right. \nonumber \\ &\mathrel{\phantom{=}} \left.\kern-\nulldelimiterspace +\;\sum_{i\in\Gamma_{(2)}}x^{i+2^{3}}+\sum_{i\in\Gamma_{(2)}\backslash\Gamma_{(1)}}x^{i+2^{2}}+x^{3}+x \right)+1
		\end{align}
		Similarly, when $h$ is odd, we obtain
		\begin{align}\label{E10}		
			\operatorname{Tr}\left(f_{4}(x+1)\right) &= \operatorname{Tr}\left(\sum_{i\in\Gamma_{(h-1)}}x^{i+2^{h}}+\sum_{i\in\Gamma_{(h-2)}}x^{i+2^{h-1}}+\sum_{i\in\Gamma_{(h-2)}\backslash\Gamma_{(h-3)}}x^{i+2^{h-2}}\right. \nonumber \\ &\mathrel{\phantom{=}} \left.\kern-\nulldelimiterspace +\;\sum_{i\in\Gamma_{(h-4)}}x^{i+2^{h-3}}+ \dots+\sum_{i\in\Gamma_{(3)}\backslash\Gamma_{(2)}}x^{i+2^3}+\sum_{i\in\Gamma_{(1)}}^{}x^{i+2^2}\right)+1	
		\end{align}
		\vskip 1pt
		Note that, for any integer $t\geq 1$, the number of integers in both sets $\Gamma_{(t)}$ and $\Gamma_{(t+1)}\backslash\Gamma_{(t)}$ is equal to $2^{t-1}$.
		\vskip 1pt
		If $h$ is even, by Lemma \ref{lem:L1} and Eq. (\ref{E11}), we have the linear span of $s^{\infty}$ equals
		\begin{flalign*}
			\hskip 20pt	L_{s}&= (2^{h-2}+2^{h-3}+\dots+2+1)\cdot m+ 2\cdot m+1 &\\
			&=1+m(2^{h-1}+1).
		\end{flalign*}
		\vskip 1pt
		If $h$ is odd, by Lemma \ref{lem:L1} and Eq. (\ref{E10}), we have the linear span of $s^{\infty}$ equals
		\begin{flalign*}
			\hskip 20pt	L_{s}&= (2^{h-2}+2^{h-3}+\dots+2+1)\cdot m+1 &\\
			&=1+m(2^{h-1}-1).
		\end{flalign*}
		Therefore, from Lemma \ref{A} and Eq. $(\ref{E11})$, $(\ref{E10})$ we get the result on the generator polynomial corresponding to the sequence $s^{\infty}$. %Combining $(\ref{E5})$, $(\ref{E7})$, $(\ref{E9})$ and from Eq. $(\ref{E2})$, we have the desired conclusions on the
	\end{proof}
\end{lemma}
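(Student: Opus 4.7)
The plan is to pin down the index set $I_s$ associated with the sequence $s^{\infty}$ in Eq.~(\ref{E2}) by rewriting each trace expression as a sum of traces over $2$-cyclotomic coset leaders modulo $v$, and then keeping track of which coset leaders survive mod~$2$. By Lemma~\ref{C}, once $I_s$ is known we get $L_s = |I_s|$ and $\textit{g}_s(x) = \prod_{i \in I_s \cap \Gamma} m_{\alpha^{-i}}(x)$, so everything reduces to an exact identification of $I_s$.

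First I would handle the pure sum $\operatorname{Tr}\bigl(\sum_{i=1}^{2^{h-1}-1} x^i\bigr)$. Setting $t = h-1$ in Eq.~(\ref{E3}) rewrites this as $\sum_{j \in \Gamma_{(h-1)}} \kappa_j^{(h-1)} \operatorname{Tr}(x^j)$, where $\kappa_j^{(h-1)} = \epsilon_j^{(h-1)} \bmod 2$. Lemma~\ref{L4} then pins down exactly when $\epsilon_j^{(h-1)}$ is odd: this occurs for $j$ in each ``strip'' $\Gamma_{(h-k)} \setminus \Gamma_{(h-k-1)}$ with $k$ odd, plus the special case $j=1$ when $h-1$ is odd. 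After rewriting each surviving $j$ as $i + 2^{h-k-1}$ with $i \in \Gamma_{(h-k-1)}$, the remaining form depends on the parity of $h$, giving the two parity cases $m \equiv 1$ or $3 \pmod 4$.

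Next I would process $\operatorname{Tr}\bigl(\sum_{i=1}^{2^{h-1}-1} x^{i+2^h}\bigr)$. For odd $i$, Lemma~\ref{lem:L1} gives that $i + 2^h$ is already the coset leader of $C_{i+2^h}$. For even $i = 2^s i_1$ with $i_1$ odd and $s \ge 1$, the coset $C_{i+2^h}$ has leader $i_1 + 2^{h-s}$, so the exponent folds back into a smaller range. Iterating this layer by layer collapses the sum to $\sum_{i \in \Gamma_{(h-1)}} x^{i+2^h} + \sum_{i \in \Gamma_{(h-2)}} x^{i+2^{h-1}} + \cdots$. Lemma~\ref{L3} is now the crucial tool: it detects precisely which of these folded exponents coincide with a coset already produced by the first sum, and at every such coincidence the two contributions cancel mod~$2$.

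Finally I would add in the isolated $x^3$ and the constant $1$, split into the cases $h$ even and $h$ odd, tabulate the surviving coset leaders, and count them to obtain $L_s = 1 + m(2^{h-1} \pm 1)$ in the two parities. The main obstacle is the mod-$2$ bookkeeping of the overlap between $\sum x^i$ and the reduced form of $\sum x^{i+2^h}$: a single miscounted coset would shift both the dimension and the list of minimal-polynomial factors of $\textit{g}_s(x)$. Once this overlap is resolved, and the low-degree cosets $C_0$, $C_1$, $C_3$, and $C_5$ are placed correctly according to the parity of $h$, reading off $\textit{g}_s(x)$ from $I_s \cap \Gamma$ yields the two displayed product formulas.
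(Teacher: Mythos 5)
Your proposal follows essentially the same route as the paper's own proof: decomposing $\operatorname{Tr}\bigl(\sum_{i=1}^{2^{h-1}-1}x^i\bigr)$ via Eq.~(\ref{E3}) and Lemma~\ref{L4} into parity strips $\Gamma_{(h-k)}\setminus\Gamma_{(h-k-1)}$, folding the shifted sum $\sum x^{i+2^h}$ layer by layer onto coset leaders, invoking Lemma~\ref{L3} to locate the mod-$2$ cancellations between the two sums, and then counting the surviving cosets (using $|\Gamma_{(t)}|=|\Gamma_{(t+1)}\setminus\Gamma_{(t)}|=2^{t-1}$) to read off $L_s$ and $\textit{g}_s(x)$ from Lemma~\ref{C}. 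The key steps, supporting lemmas, and the identified main difficulty (the overlap bookkeeping) all match the paper's argument, so the plan is correct and not a genuinely different route.
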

\textcolor{black}{Lemma \ref{L7} explicitly determines the generator polynomial $\textit{g}_{s}(x)$ of the code $\mathcal{C}_{s}$ as the product of some minimal polynomials over $\mathbb{F}_{2}$. The defining set of $\mathcal{C}_{s}$ is defined to be the set $Z=\{i\in\mathbb{Z}_{v}: \textit{g}_{s}(\alpha^{i})=0\}$. There are a few key methods to determine the lower bounds that can be applied to cyclic codes: BCH bound, Hartmann-Tzeng bound, Roos bound, and Van Lint-Wilson bound. Some of these bounds are easy to use, while others are hard to employ. However, which bound would be beneficial to determine better lower bounds that should be checked by analyzing the structure of the defining set of the code $\mathcal{C}_{s}$.
\vskip 0pt
Hartmann-Tzeng bound \cite{SETA20} states that if there is a set $S$ that contains $\delta-1$ consecutive elements of the defining set $Z$ of $\mathcal{C}_{s}$ and $T=\{jb\text{ mod }v:0\leq j\leq s\}$, where $\operatorname{gcd}(b,v)<\delta$. If $S+T\subseteq Z$ for some $b$ and $s$. Then $d(\mathcal{C}_{s})\geq \delta+ s$. In the following theorem, we determine the upper and lower bound on the minimum distance of the code $\mathcal{C}_{s}$.} 
\vspace{0.5em}
\begin{theorem}\label{EF16}
	Let $m\geq 5$ be odd. The code $\mathcal{C}_{s}$ designed by the sequence $s^{\infty}$ defined in Eq. $(\ref{E2})$, has parameters $[2^{m}-1,2^{m}-1-L_{s},d(\mathcal{C}_{s})]$, where the linear span $L_{s}$ and the generator polynomial $\textit{g}_{s}(x)$ of $\mathcal{C}_{s}$ corresponding to the sequence $s^{\infty}$ are given in Lemma $\ref{L7}$, and the minimum Hamming weight $d(\mathcal{C}_{s})$ is as follows:
	\begin{align*}
		 & \begin{cases}
			\max\{8, 2^{(m-5)/2}+2\}\leq d(\mathcal{C}_{s})\leq 1 + m \left( 2^{\frac{m-3}{2}} + 1 \right) &\text{ if }  m\equiv 1 \hspace{1mm}(\textnormal{mod}\hspace{1mm}4)  \\
			2^{(m-5)/2}+2 \leq d(\mathcal{C}_{s}) \leq 1 + m \left( 2^{\frac{m-3}{2}} - 1 \right) &\text{ if } m\equiv 3 \hspace{1mm}(\textnormal{mod}\hspace{1mm}4) 
		\end{cases}
	\end{align*}
	\begin{proof}The dimension of the code $\mathcal{C}_{s}$ follows from Lemma $\ref{L7}$. Since $x-1$ is a divisor of the generator polynomial $\textit{g}_{s}(x)$, the minimum weight $d(\mathcal{C}_{s})$ must be even. Hence, by applying the Singleton bound \cite{SETA21}, we have $d(\mathcal{C}_{s})\leq L_{s}$. Let $S=\{1+2^{\frac{m-1}{2}}\}$ and $T=\{2j:0\leq j\leq 2^{\frac{m-1}{2}-2}-1\}$. Note that $\operatorname{gcd}(2,v)<2$ and the reciprocal of the generator polynomial $\textit{g}_{s}(x)$ in Lemma $\ref{L7}$ has roots $\alpha^{j}$ for all $j\in S+T=\{1+2^{\frac{m-1}{2}},3+2^{\frac{m-1}{2}},\dots,2^{\frac{m-3}{2}}-1+2^{\frac{m-1}{2}}\}$. As we know, the code with generator polynomial $\textit{g}_{s}(x)$ in Lemma \ref{L7} and the code generated by the reciprocal of $\textit{g}_{s}(x)$ have identical weight distribution, the minimum weight $d(\mathcal{C}_{s})\geq 2^{(m-5)/2}+1$ by applying the Hartmann-Tzeng bound. Hence, $d(\mathcal{C}_{s})\geq 2^{(m-5)/2}+2$. In the case of $m\equiv 1 \text{(mod 4)}$, we have $\mathcal{C}_{s}$ as a subcode of the code $\mathcal{A}_{s}$, as defined in Theorem \ref{L2}. Therefore, the desired conclusion on $d(\mathcal{C}_{s})$ follows by combining all the cases.
\end{proof}
\end{theorem}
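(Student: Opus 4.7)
The plan is to combine four ingredients. First, the dimension and generator polynomial are read off directly from Lemma \ref{L7}. Second, because $(x-1)$ divides $g_s(x)$, the code is even-weight, so $d(\mathcal{C}_s)$ is even. Third, a Hartmann--Tzeng argument on the reciprocal of $g_s(x)$ produces the universal lower bound $2^{(m-5)/2}+2$. Fourth, for $m \equiv 1 \pmod{4}$, a containment $\mathcal{C}_s \subseteq \mathcal{A}_s$, where $\mathcal{A}_s$ is the triple-error-correcting code appearing in the proof of Theorem \ref{L2}, upgrades the lower bound to $d(\mathcal{C}_s) \geq 8$.

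For the upper bound I would invoke the Singleton bound $d(\mathcal{C}_s) \leq L_s + 1$ together with the evenness of $d(\mathcal{C}_s)$. A direct parity check on the formulas in Lemma \ref{L7} shows $L_s$ is even in each residue class of $m$ modulo $4$: for $m \equiv 1 \pmod{4}$ the exponent $(m-3)/2$ is odd so $2^{(m-3)/2}+1$ is odd, and for $m \equiv 3 \pmod{4}$ the exponent is a positive even integer so $2^{(m-3)/2}-1$ is odd. Hence $L_s + 1$ is odd and $d(\mathcal{C}_s) \leq L_s$, matching the stated upper bounds.

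For the general lower bound, I would work with the reciprocal code (whose weight distribution coincides with $\mathcal{C}_s$) and exhibit a Hartmann--Tzeng progression in its defining set. Taking $S = \{1 + 2^{(m-1)/2}\}$ and $T = \{2j : 0 \leq j \leq 2^{(m-5)/2} - 1\}$, the sum $S + T$ sweeps out exactly $\{i + 2^{(m-1)/2} : i \in \Gamma_{((m-3)/2)}\}$, each of which is a coset leader by Lemma \ref{lem:L1} and appears as an exponent in the first product of Lemma \ref{L7}. With $\delta = 2$, $s = 2^{(m-5)/2} - 1$, and $\gcd(2, 2^m-1) = 1 < 2$, the Hartmann--Tzeng bound yields $d(\mathcal{C}_s) \geq 2^{(m-5)/2} + 1$, which evenness boosts to $2^{(m-5)/2} + 2$.

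Finally, for $m \equiv 1 \pmod{4}$, I would verify that $g_s(x)$ is divisible by the generator $m_{\alpha^{-1}}(x)\, m_{\alpha^{-(2^{(m+1)/2}+1)}}(x)\, m_{\alpha^{-3}}(x)$ of $\mathcal{A}_s$. The outer factors appear explicitly in Lemma \ref{L7}. For the middle one, the identity $(2^{(m+1)/2}+1)\cdot 2^{(m-1)/2} \equiv 2^{(m-1)/2}+1 \pmod{2^m-1}$ places the exponents $2^{(m+1)/2}+1$ and $2^{(m-1)/2}+1$ in the same $2$-cyclotomic coset, so the middle minimal polynomial coincides with the $i=1$ term of the first product in Lemma \ref{L7}. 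Hence $\mathcal{C}_s \subseteq \mathcal{A}_s$, forcing $d(\mathcal{C}_s) \geq 7$, and evenness gives $d(\mathcal{C}_s) \geq 8$. The main obstacle I anticipate is purely bookkeeping: tracking cyclotomic equivalences among the many indices in the intricate product of Lemma \ref{L7} so as to confirm that the $S + T$ indices really are distinct coset leaders and that the factor $m_{\alpha^{-(2^{(m+1)/2}+1)}}(x)$ is actually present.
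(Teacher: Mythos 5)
Your proposal follows essentially the same route as the paper's proof: dimension and generator from Lemma \ref{L7}, evenness from the factor $(x-1)$, the Singleton bound for the upper estimate, the Hartmann--Tzeng bound with exactly the same sets $S=\{1+2^{(m-1)/2}\}$ and $T=\{2j: 0\le j\le 2^{(m-5)/2}-1\}$ applied to the reciprocal code, and the containment $\mathcal{C}_s\subseteq\mathcal{A}_s$ for $m\equiv 1\pmod 4$. The two details you supply that the paper leaves implicit --- the parity check showing $L_s$ is even so that Singleton plus evenness yields $d(\mathcal{C}_s)\le L_s$, and the congruence $(2^{(m+1)/2}+1)\cdot 2^{(m-1)/2}\equiv 2^{(m-1)/2}+1 \pmod{2^m-1}$ identifying the middle factor of the generator of $\mathcal{A}_s$ inside $g_s(x)$ --- are both correct and strengthen the argument.
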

\vspace{0.5em}
\begin{example}\label{EX1}
	Let $m = 5$ and $\alpha$ be the root of the primitive polynomial $x^5+x^2+1$ over $\mathbb{F}_{2}$. Then $\textit{g}_{s}(x)=x^{16}+x^{12}+x^{11}+x^{10}+x^{9}+x^{4}+x+1$. $\mathcal{C}_{s}$ is a binary $[31,15,8]$ cyclic code and $\mathcal{C}_{s}^{\perp}$ is a $[31,16,7]$ cyclic code. According to the Database \cite{SETA23}, both codes are optimal.
\end{example}
\vspace{0.5em}
\begin{example}
	Let $m=7$ and $\alpha$ be the root of  the primitive polynomial $x^{7}+x^{3}+1$ over $\mathbb{F}_{2}$. Then $\textit{g}_{s}(x)=x^{22}+x^{17}+x^{16}+x^{14}+x^{13}+x^{12}+x^{11}+x^{8}+x^{7}+x^{2}+x+1$. $\mathcal{C}_{s}$ is a binary $[127,105,6]$ cyclic code and $\mathcal{C}_{s}^{\perp}$ is a $[127,22,43]$ cyclic code. %where $4\leq d(\mathcal{C}_{s})\leq 8$. 
\end{example}

\subsection{Binary code $\mathcal{C}_{s}$ from the monomial $x^{2^{m-1}-2^{(m-1)/2}+1}$}
Define $f_{5}(x)=x^{2^{2h}-2^{h}+1}$, where $h=\frac{m-1}{2}$. The monomial $f_{5}(x)$ is known as the \textit{Kasami function}, which is APN as $\operatorname{gcd}(m,h)=1$ (\cite{SETA6}). Since $(2^{2h}-2^{h}+1)=\frac{2^{3h}+1}{2^{h}+1}$ and $\operatorname{gcd}(2^{3h}+1,2^{m}-1)=1$, we have $f_{5}(x)$ is also a permutation over $\mathbb{F}_{2^m}$. %This subsection studies the binary code $\mathcal{C}_{s}$ from the monomial $f_{5}(x)$ over $\mathbb{F}_{2^m}$. % $\operatorname{gcd}(2^{2h}-2^{h}+1,2^{2h+1}-1)=1$.% since $(2^{3h}+1)=(2^{h}+1)(2^{2h}-2^{h}+1)$ and $\operatorname{gcd}(2^{3h}+1,2^{2h+1}-1)=1$, implying that .
%\vskip 1pt
%This section considers binary cyclic codes from the permutation monomial $f(x)$.
Let $\Gamma_{(t)}=\{1\leq j\leq 2^{t}-1:\text{$j$ is odd}\}$, where $t$ is any fixed positive integer. Then we have
\begin{flalign}
	\hskip 15pt	\operatorname{Tr}\left(f_{5}(x+1)\right) &= \operatorname{Tr}\left((x+1)(x^{2^{h}}+1)^{\sum_{i=0}^{h-1}2^{i}}\right) \nonumber &\\
	&=\operatorname{Tr}\left((x+1)\sum_{i=0}^{2^{h}-1}x^{i\cdot 2^{h}}\right) \nonumber &\\
	&= \operatorname{Tr}\left(\sum_{i=0}^{2^{h}-1}x^{i+2^{h+1}}+\sum_{i=0}^{2^{h}-1}x^{i}\right) \nonumber &\\
	%	&= 1+\operatorname{Tr}\left( x\right)+ \operatorname{Tr}\left(\sum_{i=1}^{2^{h}-1}x^{i+2^{h+1}}+\sum_{i=1}^{2^{h}-1}x^{i}\right) &\\
	&= 1+\operatorname{Tr}\left( x\right)+\operatorname{Tr}\left(\sum_{i\in\Gamma_{(h)}}x^{i+2^{h+1}}+\sum_{i=1}^{2^{h-1}-1}x^{i+2^{h}}+\sum_{i=1}^{2^{h}-1}x^{i}\right) \nonumber &
\end{flalign}
The sequence $s^{\infty}$ of $(\ref{B})$ designed from the monomial $f_{5}(x)$ is given by 
\begin{flalign}\label{E14}
	\hskip 30pt	s_{t}&=1+\operatorname{Tr}\left( \alpha^{t}\right)+\operatorname{Tr}\left(\sum_{i\in\Gamma_{(h)}}(\alpha^{t})^{i+2^{h+1}}+\sum_{i=1}^{2^{h-1}-1}(\alpha^{t})^{i+2^{h}}+\sum_{i=1}^{2^{h}-1}(\alpha^{t})^{i}\right), \textnormal{ for all $t\geq 0$.}&
\end{flalign}
This subsection studies the code $\mathcal{C}_{s}$ designed by the sequence $s^{\infty}$ of (\ref{E14}). First, we need to prove some important Lemmas.
\vspace{0.5em}
\begin{lemma}\label{L12}
	For any $j\in\Gamma_{(h)}$, we have
	\begin{enumerate}[(i)]
		\item $j+2^{h+1}$ is the coset leader of $C_{j+2^{h+1}}$ for $j\neq 1$, and the coset leader of $C_{1+2^{h+1}}$ is $1+2^{h}$.
		\item $\ell_{j+2^{h+1}}=|C_{j+2^{h+1}}|=m$.
	\end{enumerate}
	\begin{proof}
		We will start with the first statement. Let $j\in\Gamma_{(h)}$ with $\operatorname{w}_{2}(j)=k\geq 2$, then $j=1+2^{j_{1}}+2^{j_{2}}+\dots+2^{j_{k-1}}$, where $1\leq j_{1}<j_{2}<\dots<j_{k-1}\leq h-1$. According to Lemma \ref{L5}, the coset leader of $C_{j+2^{h+1}}$ must be odd, which means that the coset leader of $C_{j+2^{h+1}}$ must be one of $(j+2^{h+1})2^{m-j_{t}}\text{ (mod $v$)}$ for some $t\in\{1,2,\dots,k-1\}$ or $1+j2^{h}$ or $j+2^{h+1}$ itself. However, since $h+2\leq m-j_{t}\leq m-1$ for each $t\in\{1,2,\dots,k-1\}$, it is not difficult to check that $j+2^{h+1}$ is the smallest odd number in $C_{j+2^{h+1}}$, hence the coset leader. Similarly, $1+2^{h}$ is the coset leader of $C_{1+2^{h+1}}$. \\
		Now we show the second statement. Note that for any $j\in\Gamma_{(h)}$, $(j+2^{h+1})\cdot2^{\ell}< 2^{m}-1$ for any $0\leq \ell\leq h-1$. That means $|C_{j+2^{h+1}}|\geq h=\frac{m-1}{2}$. Since $\operatorname{gcd}(m,2)=1$ and $\ell_{j+2^{h+1}}$ is divisible by $m$, the size of $C_{j+2^{h+1}}$ must be $m$ for all $j\in\Gamma_{(h)}$.
	\end{proof}
\end{lemma}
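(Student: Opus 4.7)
The plan is to exploit the $2$-adic expansion of $j + 2^{h+1}$ for both parts. Since $j \in \Gamma_{(h)}$ is odd with $1 \leq j \leq 2^h - 1$, I would write $j = 1 + 2^{j_1} + \cdots + 2^{j_{k-1}}$ where $k = \operatorname{w}_{2}(j)$ and $1 \leq j_1 < \cdots < j_{k-1} \leq h-1$. Then the $m$-bit representation of $j + 2^{h+1}$ has ones exactly at the positions $\{0, j_1, \ldots, j_{k-1}, h+1\}$ and a zero at position $h$.

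For part (i), I would enumerate all odd elements of $C_{j+2^{h+1}}$ and invoke Lemma \ref{L5} to single out the smallest as the coset leader. A cyclic shift by $s$ (i.e., multiplication by $2^s$ modulo $v$) produces an odd element iff $s \in \{0,\, m-j_1,\, \ldots,\, m-j_{k-1},\, h\}$, where the last value comes from $s = m-(h+1) = h$. For $s = m - j_t$ with $t \geq 1$, the bit originally at position $0$ lands at position $m - j_t \geq h+2$, so the resulting integer is at least $2^{h+2}$, which exceeds $j + 2^{h+1} < 3\cdot 2^h < 2^{h+2}$. For $s = h$, the shifted value equals $1 + 2^h + \sum_{t=1}^{k-1} 2^{h+j_t}$; its top set bit sits at position $h + j_{k-1}$, which strictly exceeds $h+1$ when $j_{k-1} \geq 2$, and in the borderline subcase $j_{k-1} = 1$ (only possible when $j = 3$, which requires $h \geq 2$) a one-line comparison of $1 + 2^h + 2^{h+1}$ with $1 + 2 + 2^{h+1}$ shows the shifted value is still larger. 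Hence for $j \neq 1$ the minimum among candidates is attained at $s = 0$, giving coset leader $j + 2^{h+1}$. For $j = 1$ the only candidates are $1 + 2^{h+1}$ and $1 + 2^h$, and the latter wins.

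For part (ii), the goal is $|C_{j+2^{h+1}}| = m$. Since $|C_{j+2^{h+1}}|$ always divides $m = 2h+1$, it suffices to exhibit $h$ distinct elements in the coset. The straightforward estimate $(j + 2^{h+1}) \cdot 2^{h-1} \leq (2^h - 1 + 2^{h+1}) \cdot 2^{h-1} = 3 \cdot 2^{2h-1} - 2^{h-1} < 2^{2h+1} - 1 = v$ shows that the integers $(j+2^{h+1}) \cdot 2^\ell$ for $\ell = 0, 1, \ldots, h-1$ are not reduced modulo $v$, and so are pairwise distinct. This forces $|C_{j+2^{h+1}}| \geq h$. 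Because every proper divisor of the odd integer $m$ is bounded by $m/3$, which is strictly less than $h$ whenever $m \geq 5$, the size must equal $m$; the sporadic case $m = 3$ reduces to $j = 1$ and can be checked by direct inspection of $C_5$ modulo $7$.

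The hard part will be the case analysis in part (i): although each inequality is elementary, one must precisely identify which shifts yield odd elements and compare their magnitudes, with special care in the borderline case $j = 3$ where simply comparing the highest bit positions does not settle the matter. Once (i) is established, part (ii) follows cleanly from the multiplicative bound and the divisibility constraint.
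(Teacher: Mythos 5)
Your proposal is correct and follows essentially the same route as the paper: identify the odd elements of $C_{j+2^{h+1}}$ via the $2$-adic expansion and the shifts $s\in\{0,\,m-j_1,\dots,m-j_{k-1},\,h\}$, show $j+2^{h+1}$ (resp.\ $1+2^h$ when $j=1$) is the smallest, then get (ii) from the bound $(j+2^{h+1})2^{h-1}<2^m-1$ together with the fact that $\ell_{j+2^{h+1}}$ divides the odd integer $m$. You merely make explicit what the paper compresses into ``it is not difficult to check,'' in particular the comparison at the shift $s=h$ and the borderline case $j=3$.
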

\vspace{0.5em}
\begin{lemma}\label{L13}
	For any $i\in A$ and $j\in\Gamma_{(h)}$, where $A$ is as defined in Lemma \textnormal{\ref{L3}}, we have
	\begin{enumerate}[(i)]
		\item $C_{i+2^{h}}\cap C_{j+2^{h+1}}\neq\emptyset$ only if $(i,j)=(1,1)$. Moreover, $C_{j}\cap C_{j+2^{h+1}}=\emptyset$.
		\item \begin{gather*}
			C_{i+2^h}\cap C_{j}=	\begin{cases}
				C_{j},\quad \text{ if }(i,j)\text{ is of the form }(2^{s}i_{1},i_{1}+2^{h-s}) \\
				\emptyset,\quad \text{otherwise}  
			\end{cases}
		\end{gather*}
		when $i_{1}$ ranges over the integers in $\Gamma_{(h-1-s)}$ and $s\in\{1,2,3,\dots,h-2\}$.
	\end{enumerate}
	\begin{proof}
		We commence with the first statement. Note that $i+2^{h}<j+2^{h+1}$ for all $i\in A$ and $j\in\Gamma_{(h)}$. When $j\neq 1$, the coset leaders of $C_{i+2^{h}}$ and $C_{j+2^{h+1}}$ are distinct. Hence, in this case, $C_{i+2^{h}}\cap C_{j+2^{h+1}}=\emptyset$.
		\vskip 1pt
		When $j=1$ and $i$ is even, then $i=2^{s}i_{1}$ for some odd positive integer $i_{1}$ with $s\in\{1,2,3,\dots,h-2\}$. According to Lemma \ref{lem:L1}, \ref{L12}, the coset leaders of $C_{i+2^{h}}$ and $C_{1+2^{h+1}}$ are distinct, respectively, $i_{1}+2^{h-s}$ and $1+2^{h}$. Hence, also in this case, $C_{i+2^{h}}\cap C_{j+2^{h+1}}=\emptyset$.
		\vskip 1pt
		When $j=1$ and $i$ is odd. Note that $C_{i+2^{h}}=C_{1+2^{h+1}}$ would imply $\operatorname{w}_{2}(i+2^{h})=\operatorname{w}_{2}(1+2^{h+1})=2$. But, $\operatorname{w}_{2}(i+2^{h})>2$ for $i\neq 1$, and $C_{1+2^{h}}=C_{1+2^{h+1}}$ is obvious. Therefore, $C_{i+2^{h}}\cap C_{1+2^{h+1}}\neq\emptyset$ only if $i=1$. Similarly, one can prove that $C_{j}\cap C_{j+2^{h+1}}=\emptyset$. Hence, the proof of the first statement follows. \\
		\vspace{0.5 mm}
		The second statement can be accomplished in a similar manner as Lemma \ref{L3}.
	\end{proof}
\end{lemma}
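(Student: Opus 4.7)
The plan is to prove the two statements separately by explicit identification of coset leaders, invoking Lemma \ref{lem:L1} to pin down the leaders of $C_{i+2^h}$ and $C_j$, and Lemma \ref{L12} for the leaders of $C_{j+2^{h+1}}$. Throughout I would rely on the basic principle that two $2$-cyclotomic cosets are disjoint as soon as their coset leaders differ. The $2$-weight function $\operatorname{w}_{2}(\cdot)$, which is invariant on each coset, is kept in reserve as a secondary tool for the one borderline case that resists a direct leader comparison.

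For part (i), I would first dispose of the generic case $j \neq 1$. By Lemma \ref{L12} the coset leader of $C_{j+2^{h+1}}$ is $j + 2^{h+1}$ itself. For $C_{i+2^h}$, I split on the parity of $i$: if $i$ is odd, Lemma \ref{lem:L1} yields leader $i + 2^h$, which is strictly smaller than $j + 2^{h+1}$; if $i$ is even, writing $i = 2^s i_1$ with $i_1$ odd and $s \in \{1, \dots, h-2\}$ gives $C_{i+2^h} = C_{i_1 + 2^{h-s}}$ with leader $i_1 + 2^{h-s} < 2^h < j + 2^{h+1}$. In either case the leaders differ, so the cosets are disjoint. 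The delicate case is $j = 1$, where Lemma \ref{L12} produces the unexpected leader $1 + 2^h$ for $C_{1+2^{h+1}}$ (this reflects the congruence $(1+2^h)\cdot 2^{h+1} \equiv 1 + 2^{h+1} \pmod{v}$). Here the case $i = 1$ yields the trivial equality $C_{1+2^h} = C_{1+2^{h+1}}$, producing the unique exception $(i,j)=(1,1)$. For $i$ even, the leader $i_1 + 2^{h-s}$ of $C_{i+2^h}$ equals $1 + 2^h$ only when $s = 0$, which is excluded. For $i$ odd with $i \neq 1$, the leader $i + 2^h$ differs from $1 + 2^h$ directly; alternatively one can observe that $\operatorname{w}_{2}(1 + 2^{h+1}) = 2$ while $\operatorname{w}_{2}(i + 2^h) \geq 3$ (since $i < 2^h$, no carry occurs). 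The ``moreover'' statement $C_j \cap C_{j+2^{h+1}} = \emptyset$ follows analogously, as the leader of $C_j$ is $j$ itself while the leader of $C_{j+2^{h+1}}$ is either $j + 2^{h+1}$ or $1 + 2^h$, both strictly exceeding $j$.

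For part (ii), the argument is a direct adaptation of Lemma \ref{L3}. When $i$ is odd, the leader of $C_{i+2^h}$ is $i + 2^h \geq 1 + 2^h$, which is strictly larger than $j \leq 2^h - 1$, so the cosets are disjoint. When $i$ is even, write $i = 2^s i_1$ with $i_1$ odd; then $i + 2^h = 2^s(i_1 + 2^{h-s})$, so $C_{i+2^h} = C_{i_1 + 2^{h-s}}$, whose leader is $i_1 + 2^{h-s}$ by Lemma \ref{lem:L1}. Equality $C_{i+2^h} = C_j$ is therefore equivalent to $j = i_1 + 2^{h-s}$, which is precisely the stated condition. The constraint $i \leq 2^{h-1} - 1$ translates to $i_1 \in \Gamma_{(h-1-s)}$ and forces $s \in \{1, \dots, h-2\}$.

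The main obstacle is the $j = 1$ branch of part (i), because Lemma \ref{L12} shows that the coset leader of $C_{1+2^{h+1}}$ is not $1 + 2^{h+1}$ but $1 + 2^h$, so one cannot simply reuse the ``larger leader'' argument from the generic case. One must then carefully verify that no other $i \in A$ produces a coset colliding with the exceptional one, which is where either a refined coset-leader comparison or the $\operatorname{w}_{2}$-invariance argument becomes essential.
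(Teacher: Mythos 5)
Your proposal is correct and follows essentially the same route as the paper: comparison of coset leaders via Lemma \ref{lem:L1} and Lemma \ref{L12} (with the $2$-weight invariance held in reserve for the $j=1$, $i$ odd case, which the paper actually uses there and which you correctly note as an alternative to the direct leader comparison), and part (ii) handled as a direct adaptation of Lemma \ref{L3}. Your write-up is somewhat more explicit than the paper's, but no new idea or different decomposition is involved.
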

\vspace{0.5em}
\begin{lemma}\label{L16}
Let $m \geq 7$ be an odd integer and $s^{\infty}$ be the sequence defined in Eq.~\textnormal{(\ref{E14})}. Then the generator polynomial $\textit{g}_s(x)$ corresponding to the sequence $s^{\infty}$ is given by

\begin{flalign*}
	\textit{g}_s(x) &= \prod_{i \in \Gamma_{(\frac{m-1}{2})} \setminus \{1\}} m_{\alpha^{-i - 2^{\frac{m+1}{2}}}}(x)
	\prod_{i \in \Gamma_{(\frac{m-3}{2})} \setminus \{1\}} m_{\alpha^{-i - 2^{\frac{m-1}{2}}}}(x) \prod_{\substack{j=1\\ \mathbb{N}_2(j)=1}}^{\frac{m-7}{2}} 
	\left( \prod_{i \in \Gamma_{(\frac{m-1}{2} - j)} \setminus \Gamma_{(\frac{m-3}{2} - j)}} 
	m_{\alpha^{-i - 2^{\frac{m-1}{2} - j}}}(x) \right. \nonumber \\ &\mathrel{\phantom{=}} \left.\kern-\nulldelimiterspace \times\; \prod_{i \in \Gamma_{(\frac{m-5}{2} - j)}} m_{\alpha^{-i - 2^{\frac{m-3}{2} - j}}}(x) \right) 
	m_{\alpha^{-3}}(x) m_{\alpha^{-1}}(x) (x-1), &
\end{flalign*}

if $m \equiv 1 \pmod{4}$; and

\begin{flalign*}
	\textit{g}_s(x) &= \prod_{i \in \Gamma_{(\frac{m-1}{2})} \setminus \{1\}} m_{\alpha^{-i - 2^{\frac{m+1}{2}}}}(x)
	\prod_{i \in \Gamma_{(\frac{m-3}{2})} \setminus \{1\}} m_{\alpha^{-i - 2^{\frac{m-1}{2}}}}(x) \prod_{\substack{j=1\\ \mathbb{N}_2(j)=1}}^{\frac{m-7}{2}} 
	\left( \prod_{i \in \Gamma_{(\frac{m-1}{2} - j)} \setminus \Gamma_{(\frac{m-3}{2} - j)}} 
	m_{\alpha^{-i - 2^{\frac{m-1}{2} - j}}}(x)  \right. \nonumber \\ &\mathrel{\phantom{=}} \left.\kern-\nulldelimiterspace \times\; \prod_{i \in \Gamma_{(\frac{m-5}{2} - j)}} m_{\alpha^{-i - 2^{\frac{m-3}{2} - j}}}(x) \right) 
	m_{\alpha^{-7}}(x) (x-1), &
\end{flalign*}

if $m \equiv 3 \pmod{4}$. The linear span $L_s$ corresponding to the sequence $s^{\infty}$ is given by
\begin{flalign*}
	L_s &= 
	\begin{cases}
		1 + m(2^{(m-1)/2} - 1), & \text{if } m \equiv 1 \pmod{4}, \\
		1 + m(2^{(m-1)/2} - 3), & \text{if } m \equiv 3 \pmod{4}.
	\end{cases} &
\end{flalign*}	
where $\Gamma_{(t)} = \{1 \leq j \leq 2^t - 1 : \text{$j$ is an odd integer}\}$ for any fixed positive integer $t$, and the map $\mathbb{N}_2(\cdot)$ is defined by \[
\mathbb{N}_2(j) = \begin{cases} 
	0 & \text{if } 2 \mid j, \\
	1 & \text{if } 2 \nmid j.
\end{cases}
\]
\begin{proof}
	For $t=h$, combining Lemma \ref{L4} and Eq. (\ref{E3}), we obtain
	\begin{align}\label{E16}
		\operatorname{Tr}\left(\sum_{i=1}^{2^{h}-1}x^{i}\right)&= \operatorname{Tr}\left(\sum_{k=1}^{h-1}\sum_{j\in\Gamma_{(h-k+1)}\setminus\Gamma_{(h-k)}}\kappa_{j}^{(h)}x^{j}\right)+\operatorname{Tr}\left(\kappa_{1}^{(h)}x\right) \nonumber \\
		&= \sum_{k=1}^{h-1}\operatorname{Tr}\left(\sum_{j\in\Gamma_{(h-k+1)}\setminus\Gamma_{(h-k)}}\kappa_{j}^{(h)}x^{j}\right)+\left(h \text{ mod }2\right)\operatorname{Tr}\left(x\right)
	\end{align}
	According to Lemma \ref{L4}, $\epsilon_{j}^{(h)}=h-(h-k)=k$ for all $j\in\Gamma_{(h-k+1)}\setminus\Gamma_{(h-k)}$. It is clear from the right-hand side of (\ref{E16}) that $\kappa_{j}^{(h)}$ will vanish only for these $j$'s in $\Gamma_{(h-k+1)}\setminus\Gamma_{(h-k)}$ for which $k$ is even, where $k\in\{1,2,\dots, h-1\}$. Note that for every $j\in\Gamma_{(h-k+1)}\setminus\Gamma_{(h-k)}$, $x^{j}$ can be rewritten in the form $x^{i+2^{h-k}}$, where $i\in\Gamma_{(h-k)}$. Depending on whether $h$ is even or odd, the remaining terms on the right-hand side of (\ref{E16}) are as follows:
	\begin{equation}\label{E17}
		\operatorname{Tr}\left(\sum_{i=1}^{2^{h}-1}x^{i}\right)=\operatorname{Tr}\left(\sum_{i\in\Gamma_{(h-1)}}x^{i+2^{h-1}}+\sum_{i\in\Gamma_{(h-3)}}x^{i+2^{h-3}}+\dots+\sum_{i\in\Gamma_{(3)}}x^{i+2^{3}}+x^{3}\right)
	\end{equation}
	if $h$ is even; and
	\begin{equation}\label{E18}
		\operatorname{Tr}\left(\sum_{i=1}^{2^{h}-1}x^{i}\right)=\operatorname{Tr}\left(\sum_{i\in\Gamma_{(h-1)}}x^{i+2^{h-1}}+\sum_{i\in\Gamma_{(h-3)}}x^{i+2^{h-3}}+\dots+\sum_{i\in\Gamma_{(2)}}x^{i+2^{2}}+x\right)
	\end{equation}
	if $h$ is odd.
	\vskip 1pt
	From Eq. (\ref{E7}) and (\ref{E8}), it is easy to note that
	\begin{equation}\label{E15}
		\operatorname{Tr}\left(\sum_{i=1}^{2^{h-1}-1}x^{i+2^{h}}\right)=\operatorname{Tr}\left(\sum_{i\in\Gamma_{(h-1)}}x^{i+2^{h}}+\sum_{i\in\Gamma_{(h-2)}}x^{i+2^{h-1}}+\dots+\sum_{i\in\Gamma_{(2)}}x^{i+2^{3}}+\sum_{i\in\Gamma_{(1)}}x^{i+2^{2}}\right)
	\end{equation}
	%According to Lemma \ref{lem:L1}, it is evident that on the right-hand side of (\ref{E15}), none of the terms  will mutually cancel out. 
	%\vskip 1pt
	When $h$ is even, by adding (\ref{E17}) and (\ref{E15}), we have
	\begin{align}\label{E20}
		\operatorname{Tr}\left(\sum_{i=1}^{2^{h-1}-1}x^{i+2^{h}}\right)+\operatorname{Tr}\left(\sum_{i=1}^{2^{h}-1}x^{i}\right)&=\operatorname{Tr}\left(\sum_{i\in\Gamma_{(h-1)}}x^{i+2^{h}}+\sum_{i\in\Gamma_{(h-1)}\setminus\Gamma_{(h-2)}}x^{i+2^{h-1}}+\dots\right. \nonumber \\ &\mathrel{\phantom{=}} \left.\kern-\nulldelimiterspace +\;\sum_{i\in\Gamma_{(3)}\setminus\Gamma_{(2)}}x^{i+2^{3}}+\sum_{i\in\Gamma_{(1)}}x^{i+2^{2}}+x^{3}\right)
	\end{align}
	By Lemma \ref{L12}, we conclude that $\operatorname{Tr}\left(x^{1+2^{h+1}}\right)=\operatorname{Tr}\left(x^{1+2^{h}}\right)$. Therefore, %it is obvious that only the term $x^{1+2^{h}}$ in the function of (\ref{E20}) cancels the monomial $x^{1+2^{h+1}}$ in the function $\operatorname{Tr}\left(\sum_{i\in\Gamma_{(h)}}x^{i+2^{h+1}}\right)$. Hence, we have
	\begin{align}\label{E19}
		\operatorname{Tr}\left(f_{5}(x+1)\right)&=\operatorname{Tr}\left(\sum_{i\in\Gamma_{(h)}\setminus\{1\}}x^{i+2^{h+1}}+\sum_{i\in\Gamma_{(h-1)}\setminus\{1\}}x^{i+2^{h}}+\sum_{i\in\Gamma_{(h-1)}\setminus\Gamma_{(h-2)}}x^{i+2^{h-1}}+\dots\right. \nonumber \\ &\mathrel{\phantom{=}} \left.\kern-\nulldelimiterspace +\;\sum_{i\in\Gamma_{(3)}\setminus\Gamma_{(2)}}x^{i+2^{3}}+\sum_{i\in\Gamma_{(1)}}x^{i+2^{2}}+x^{3}+x\right)+1
	\end{align}
	Similarly, when $h$ is odd, we obtain
	\begin{align}\label{E20}
		\operatorname{Tr}\left(f_{5}(x+1)\right)&=\operatorname{Tr}\left(\sum_{i\in\Gamma_{(h)}\setminus\{1\}}x^{i+2^{h+1}}+\sum_{i\in\Gamma_{(h-1)}\setminus\{1\}}x^{i+2^{h}}+\sum_{i\in\Gamma_{(h-1)}\setminus\Gamma_{(h-2)}}x^{i+2^{h-1}}+\dots\right. \nonumber \\ &\mathrel{\phantom{=}} \left.\kern-\nulldelimiterspace +\;\sum_{i\in\Gamma_{(4)}\setminus\Gamma_{(3)}}x^{i+2^{4}}+\sum_{i\in\Gamma_{(2)}}x^{i+2^{3}}+\sum_{i\in\Gamma_{(2)}\setminus\Gamma_{(1)}}x^{i+2^{2}}\right)+1
	\end{align}
    From Lemma \ref{lem:L1} and \ref{L13}, it is evident that none of the terms on the right-hand side of Eq. (\ref{E19}) and (\ref{E20}) will mutually cancel out.  
    \vskip 1pt
		Note that, for any integer $t\geq 1$, the number of integers in both sets $\Gamma_{(t)}$ and $\Gamma_{(t+1)}\backslash\Gamma_{(t)}$ is equal to $2^{t-1}$.
        \vskip 1pt
        When $h$ is even, by Lemma \ref{lem:L1}, \ref{L12} and Eq. (\ref{E19}), we have the linear span of $s^{\infty}$ as follows:
        \begin{flalign*}
			\hskip 20pt	L_{s}&= \{(2^{h-1}-1)+(2^{h-2}-1)+2^{h-3}+\dots+2+1)\}\cdot m+ 2\cdot m+1 &\\
			&=1+m(2^{h}-1).
		\end{flalign*}
        When $h$ is odd, by Lemma \ref{lem:L1}, \ref{L12} and Eq. (\ref{E20}), we have the linear span of $s^{\infty}$ as follows:
        \begin{flalign*}
			\hskip 20pt	L_{s}&= \{(2^{h-1}-1)+(2^{h-2}-1)+2^{h-3}+\dots+2+1)\}\cdot m+1 &\\
			&=1+m(2^{h}-3).
		\end{flalign*}
      Therefore, from Lemma \ref{C} and Eq. (\ref{E19}), (\ref{E20}) we get the result on the generator polynomial corresponding to the sequence $s^{\infty}$.
\end{proof}
\end{lemma}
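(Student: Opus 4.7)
The plan is to determine the index set $I_s$ for the sequence $s^{\infty}$ of Eq.~(\ref{E14}) by expanding $\operatorname{Tr}(f_5(x+1))$ as a sum of the form $\sum_j \operatorname{Tr}(x^j)$ where the $j$'s range over distinct $2$-cyclotomic cosets, and then invoking Lemma~\ref{C} to read off both the generator polynomial $\textit{g}_s(x)$ and the linear span $L_s$.

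First I would handle each of the three $\operatorname{Tr}$-sums in Eq.~(\ref{E14}) separately. For $\operatorname{Tr}\bigl(\sum_{i=1}^{2^h-1}x^i\bigr)$, applying Eq.~(\ref{E3}) with $t=h$ together with Lemma~\ref{L4} produces Eq.~(\ref{E17}) when $h$ is even and Eq.~(\ref{E18}) when $h$ is odd, exactly as in the proof of Lemma~\ref{L7}. For $\operatorname{Tr}\bigl(\sum_{i=1}^{2^{h-1}-1}x^{i+2^h}\bigr)$, the same expansion that gave Eqs.~(\ref{E7})--(\ref{E8}) of Lemma~\ref{L7} yields Eq.~(\ref{E15}). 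Adding these two expansions and reducing modulo $2$ collapses the overlaps into sums over $\Gamma_{(k)}\setminus\Gamma_{(k-1)}$ at each scale $2^k$, in the spirit of Eq.~(\ref{E20}) of Lemma~\ref{L7}.

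The new ingredient compared with Lemma~\ref{L7} is the sum $\operatorname{Tr}\bigl(\sum_{i\in\Gamma_{(h)}}x^{i+2^{h+1}}\bigr)$. By Lemma~\ref{L12}(i), for every $i\in\Gamma_{(h)}\setminus\{1\}$ the integer $i+2^{h+1}$ is the coset leader of its own cyclotomic coset, while $C_{1+2^{h+1}}=C_{1+2^h}$; consequently the $i=1$ summand cancels against the $i=1$ summand of $\sum_{i\in\Gamma_{(h-1)}}x^{i+2^h}$ coming from Eq.~(\ref{E15}), which explains why the stated $\textit{g}_s(x)$ is indexed by $\Gamma_{(\frac{m-1}{2})}\setminus\{1\}$ and $\Gamma_{(\frac{m-3}{2})}\setminus\{1\}$ rather than the full sets. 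Lemma~\ref{L13} then certifies that the remaining index sets $\{i+2^{h-k}:i\in\Gamma_{(h-k)}\setminus\Gamma_{(h-k-1)}\}$ at different scales lie in pairwise disjoint cosets and are disjoint from the cosets appearing in the first two surviving sums. Putting everything together with $\operatorname{Tr}(\alpha^t)$ and the constant $1$ delivers Eq.~(\ref{E19}) when $h$ is even and Eq.~(\ref{E20}) when $h$ is odd, which are the two required identities.

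Finally, I would count the surviving coset leaders to obtain $L_s$: each $\Gamma_{(k)}$ contributes $2^{k-1}$ cyclotomic cosets of size $m$ (by Lemmas~\ref{lem:L1} and~\ref{L12}), and a geometric sum gives $1+m(2^h-1)$ if $h$ is even and $1+m(2^h-3)$ if $h$ is odd, matching the two cases $m\equiv 1,3\pmod{4}$. The stated generator polynomial is then the product of the minimal polynomials corresponding to these coset leaders, as prescribed by Lemma~\ref{C}. The main obstacle is the bookkeeping: tracking which indices cancel across the three sums, distinguishing the surviving index ranges $\Gamma_{(k)}\setminus\Gamma_{(k-1)}$ from $\Gamma_{(k)}$, and verifying that the single coset collapse $C_{1+2^{h+1}}=C_{1+2^h}$ is the only non-trivial coincidence between distinct groups of terms, while handling the parity casework in $h$ symmetrically throughout.
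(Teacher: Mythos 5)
Your proposal is correct and follows essentially the same route as the paper's proof: the same three-way decomposition of Eq.~(\ref{E14}), the expansions via Eq.~(\ref{E3}) and Lemma~\ref{L4} giving (\ref{E17})/(\ref{E18}), the telescoping of the shifted sum as in (\ref{E7})--(\ref{E8}) giving (\ref{E15}), the identification of $C_{1+2^{h+1}}=C_{1+2^{h}}$ via Lemma~\ref{L12} as the unique cancellation forcing the $\setminus\{1\}$ exclusions, disjointness via Lemmas~\ref{lem:L1} and~\ref{L13}, and the same coset-counting for $L_s$. No substantive differences to report.
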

\vspace{0.5em}
\begin{theorem}\label{T2}
	Let $m\geq 7$ be odd. The code $\mathcal{C}_{s}$ designed by the sequence $s^{\infty}$ defined in Eq. \textnormal{(\ref{E14})}, has parameters $[2^{m}-1,2^{m}-1-L_{s}, d(\mathcal{C}_{s})]$, where the linear span $L_{s}$ and the generator polynomial $\textit{g}_{s}(x)$ of $\mathcal{C}_{s}$ corresponding to the sequence $s^{\infty}$ are given in Lemma \textnormal{\ref{L16}}, and the minimum Hamming weight $2^{\frac{m-3}{2}}\leq d(\mathcal{C}_{s})\leq L_{s}$.
	\begin{proof}
		The dimension of the code $\mathcal{C}_{s}$ follows from Lemma \ref{L16}. As $\mathcal{C}_{s}$ is an even-weight code, we have the minimum weight $d(\mathcal{C}_{s})\leq L_{s}$ by applying the Singleton bound. Let $S=\{3+2^{h+1}\}$ and $T=\{2j:0\leq j\leq 2^{h-1}-2\}$, it is not difficult to verify that the reciprocal of $\textit{g}_{s}(x)$ given in Lemma \textnormal{\ref{L16}} has the roots $\alpha^{j}$ for all $j$ in $S+T$. Since $\operatorname{gcd}(2,2^m-1)<2$, by applying the Hartmann-Tzeng bound, we obtain the minimum weight $d(\mathcal{C}_{s})\geq 2^{h-1}$. Hence, the desired result follows.
	\end{proof}
\end{theorem}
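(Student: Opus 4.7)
The plan is to prove Theorem \ref{T2} in three steps, using the explicit structure of the generator polynomial already supplied by Lemma \ref{L16}, together with the Singleton and Hartmann--Tzeng bounds. First, since Lemma \ref{L16} gives $\deg(\textit{g}_s(x)) = L_s$ in both residue classes of $m$ modulo $4$, the dimension $2^m - 1 - L_s$ of $\mathcal{C}_s$ is immediate.

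For the upper bound, I would first note that $(x-1) \mid \textit{g}_s(x)$, so every codeword of $\mathcal{C}_s$ has even Hamming weight. The Singleton bound gives $d(\mathcal{C}_s) \leq n - k + 1 = L_s + 1$. Since $m$ is odd and both $2^{(m-1)/2} - 1$ and $2^{(m-1)/2} - 3$ are odd, the products $m(2^{(m-1)/2} - 1)$ and $m(2^{(m-1)/2} - 3)$ are odd, so $L_s$ is even and $L_s + 1$ is odd. Combined with the even-weight restriction, this forces $d(\mathcal{C}_s) \leq L_s$, matching the claim.

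For the lower bound, I would apply the Hartmann--Tzeng bound to the code generated by the reciprocal of $\textit{g}_s(x)$, which has the same weight distribution as $\mathcal{C}_s$. Writing $h = (m-1)/2$, I would take $S = \{3 + 2^{h+1}\}$ (so that $\delta - 1 = 1$, hence $\delta = 2$) and $T = \{2j : 0 \leq j \leq 2^{h-1} - 2\}$ (an arithmetic progression of common difference $b = 2$ with $s + 1 = 2^{h-1} - 1$ elements). The central verification is
\[
S + T = \{3 + 2^{h+1}, \, 5 + 2^{h+1}, \, \ldots, \, 2^{h+1} + 2^h - 1\} = \{i + 2^{h+1} : i \in \Gamma_{(h)} \setminus \{1\}\},
\]
which sits inside the defining set of the reciprocal code by virtue of the leading product $\prod_{i \in \Gamma_{(h)} \setminus \{1\}} m_{\alpha^{-i - 2^{h+1}}}(x)$ in the formula of Lemma \ref{L16}. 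Since $\gcd(b, v) = \gcd(2, 2^m - 1) = 1 < 2 = \delta$, the Hartmann--Tzeng bound then gives
\[
d(\mathcal{C}_s) \geq \delta + s = 2 + (2^{h-1} - 2) = 2^{h-1} = 2^{(m-3)/2}.
\]

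The main obstacle is the bookkeeping in the verification $S + T = \{i + 2^{h+1} : i \in \Gamma_{(h)} \setminus \{1\}\}$: one must check that the odd integers $3, 5, \ldots, 2^h - 1$ (which constitute $\Gamma_{(h)} \setminus \{1\}$) exactly match the arithmetic progression produced by $S + T$, with no modular wrap-around. The size match $|T| = 2^{h-1} - 1 = |\Gamma_{(h)} \setminus \{1\}|$ is reassuring, and since the largest element $3 \cdot 2^h - 1$ is well below $v = 2^{2h+1} - 1$, no reduction modulo $v$ occurs. Once this index alignment is written out cleanly, the HT hypothesis is satisfied and the proof concludes.
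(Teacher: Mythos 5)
Your proposal is correct and follows essentially the same route as the paper: dimension from Lemma \ref{L16}, the Singleton bound sharpened by the even-weight property (via the factor $x-1$) for the upper bound, and the Hartmann--Tzeng bound with exactly the same choice $S=\{3+2^{h+1}\}$, $T=\{2j:0\leq j\leq 2^{h-1}-2\}$ applied to the reciprocal code for the lower bound. You merely fill in two details the paper leaves implicit --- the parity check that $L_s+1$ is odd and the explicit identification $S+T=\{i+2^{h+1}:i\in\Gamma_{(h)}\setminus\{1\}\}$ --- both of which are verified correctly.
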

\vspace{0.5em}
\begin{example}
	Let $m=7$ and $\alpha$ be the root of the primitive polynomial $x^7+x^3+1$ over $\mathbb{F}_{2}$. Then
	$\textit{g}_{s}(x)=x^{36}+x^{34}+x^{32}+x^{31}+x^{29}+x^{28}+x^{26}+x^{22}+x^{20}+x^{18}+x^{17}+x^{15}+x^{12}+x^{11}+x^{10}+x^9+x^8+x^7+x^6+x^5+x^4+x^3+x^2+1$. Hence, $\mathcal{C}_{s}$ is a binary $[127,91,8]$ cyclic code.% and its dual $\mathcal{C}_{s}^{\perp}$ is a $[127, 36, 28]$ cyclic code.
\end{example}
\vspace{0.5em}
\begin{remark}
	\textcolor{black}{It is interesting to mention that both the trinomials $f_{1}(x)$ in Section $\ref{3.1}$ and $f_{3}(x)$ in Section $\ref{3.3}$ produce an optimal family of cyclic codes with parameters $[2^m-1,2^m-2-m,4]$, where $m\geq 3$ is odd and $[2^m-1,2^m-2-3m,8]$, where $m\geq 5$ is odd, respectively. When $m=5$, the code $\mathcal{C}_{s}$ from the trinomial $f_{5}(x)$ over $\mathbb{F}_{2^m}$ is the same as in Example $7$ of $\cite{SETA3}$. Although for $m=5$, the codes $\mathcal{C}_{s}$ and $\mathcal{C}_{s}^{\perp}$ constructed from the trinomials $f_{2}(x)$ in Example $\ref{EX2}$, $f_{4}(x)$ in Example $\ref{EX1}$, and $f_{5}(x)$ give optimal parameters, but for larger values of $m$ the codes $\mathcal{C}_{s}$ and $\mathcal{C}_{s}^{\perp}$ do not guarantee optimality. Therefore, we must carefully choose a permutation polynomial (or a polynomial with low differential uniformity) that could produce cyclic codes with optimal parameters or cyclic codes with dimensions larger than half its length and the minimum distance close to the square root bound.} %on the other hand, in Example \ref{EX1}, in which the code $\mathcal{C}_{s}$ constructed from the trinomial $f_{4}(x)=x+x^3+x^{2^m-2^{(m+3)/2}+2}$, in both examples $\mathcal{C}_{s}$ and its dual $\mathcal{C}_{s}^{\perp}$ have the same optimal parameters. Hence, a suitable permutation trinomial could also produce results similar to those of low differential uniform functions.
\end{remark}

\section{Binary cyclic codes from polynomials over $\mathbb{F}_{2^m}$, $m$ is even}
This section focuses on the cyclic codes $\mathcal{C}_{s}$ designed by the sequence $s^{\infty}$ defined in Eq. (\ref{B}) from three classes of permutation trinomials of the form
\begin{align}\label{E21}
	F(x)=x+x^{s(2^{m/2}-1)+1}+x^{t(2^{m/2}-1)+1}
\end{align}
where $m$ is even and $1\leq s,t\leq 2^{m/2}$. 
\vskip 1pt
According to Theorem 3.4 in \cite{SETA2}, for the case when $t=-s$ the polynomial $F(x)$ in Eq. (\ref{E21}) is a permutation over $\mathbb{F}_{2^m}$ if and only if either $m\equiv 0\hspace{1 mm}(\text{mod 4})$ or $m\equiv 2\hspace{1 mm}(\text{mod 4})$ and $\operatorname{exp}_{3}(l)\geq\operatorname{exp}_{3}(2^{m/2}+1)$, where $\operatorname{exp}_{3}(i)$ denotes the exponent of $3$ in the canonical factorization of $i$. Since $(2^{m/2}+1)\equiv 0\hspace{1 mm}(\text{mod 3})$ for any integer $m$ satisfying $m\equiv 2\hspace{1 mm}(\text{mod 4})$, $F(x)$ is not a permutation over $\mathbb{F}_{2^m}$ when $m\equiv 2\hspace{1 mm}(\text{mod 4})$ and $(s,t)\in\{(1,-1),(2,-2)\}$.
\vskip 0pt
According to Theorem 4.7 in \cite{SETA15}, for the case when $(s,t)=(1,2^{\frac{m}{2}-1})$ the trinomial $F(x)$ in Eq. (\ref{E21}) is a permutation over $\mathbb{F}_{2^m}$ if and only if $m\not\equiv 0\hspace{1 mm}(\text{mod 6})$.
\vskip 0pt
Throughout this section, we define $h=\frac{m}{2}$ and use the notation $\Gamma_{(t)}$ as defined before.

\subsection{Binary code $\mathcal{C}_{s}$ from the trinomial of the form (\ref{E21}), where $(s,t)=(1,-1)$}
Define $f_{6}(x)=x+x^{2^{m/2}}+x^{2^m-2^{m/2}+1}$, where $m$ is an even integer. Then we have
\begin{flalign}\label{E13}
	\hskip 30pt	s_{t}&=\operatorname{Tr}\left(f_{6}(\alpha^{t}+1)\right)\nonumber\\ &= \operatorname{Tr}(\alpha^{t}+1)+\operatorname{Tr}\left((\alpha^{t}+1)^{2^{h}}\right)+\operatorname{Tr}\left((\alpha^{t}+1)^{2^{2h}-2^{h}+1}\right) \nonumber \\
	%		&=\operatorname{Tr}\left((\alpha^t+1)^{2^m-2^{m/2}+1}\right) \\
	&= \operatorname{Tr}\left((\alpha^{t}+1)(\alpha^{t}+1)^{\sum_{i=0}^{h-1}2^{h+i}}\right)\nonumber \\
	&=\operatorname{Tr}\left((\alpha^{t}+1)\sum_{i=0}^{2^{h}-1}(\alpha^{t})^{i\cdot 2^{h}}\right)\nonumber \\
	&=\operatorname{Tr}\left(\sum_{i=0}^{2^{h}-1}(\alpha^{t})^{i+2^{h}}\right)+\operatorname{Tr}\left(\sum_{i=0}^{2^{h}-1}(\alpha^{t})^{i}\right) \nonumber \\
	&=\operatorname{Tr}\left(\sum_{i=0}^{2^{h+1}-1}(\alpha^{t})^{i}\right) &
	%	&= \operatorname{Tr}\left(\sum_{i\in\Gamma_{(h+1)}}x^{i}\right)+\operatorname{Tr}\left(\sum_{i=0}^{2^{h}-1}x^{i}\right) \nonumber \\
	%	&= 	 \operatorname{Tr}\left(\sum_{i\in\Gamma_{(h)}}x^{i}+\sum_{i\in\Gamma_{(h)}}x^{i+2^{h}}\right)+\operatorname{Tr}\left(\sum_{i\in\Gamma_{(h)}}x^{i}+\sum_{i=0}^{2^{h-1}-1}x^{i}\right) \nonumber \\
	%	&= \operatorname{Tr}\left(\sum_{i\in\Gamma_{(h)}}x^{i+2^{h}}\right)+\operatorname{Tr}\left(\sum_{i=0}^{2^{h-1}-1}x^{i}\right)\nonumber
\end{flalign}
\vspace{0.5em}
\begin{theorem}\label{L17}
	Let $m\geq 4$ be even and $s^{\infty}$ be the sequence defined in Eq. $(\ref{E13})$. Then the generator polynomial $\textit{g}_{s}(x)$ corresponding to the sequence $s^{\infty}$ is given by
	\begin{flalign*}
		\hskip 30pt		\textit{g}_{s}(x) &= \prod_{i\in\Gamma_{(\frac{m}{2})}\setminus\{1\}}m_{\alpha^{-i-2^{\frac{m}{2}}}}(x)\prod_{\substack{j=1\\ \mathbb{N}_{2}(j)=0}}^{\frac{m-4}{2}}\left(\prod_{i\in\Gamma_{(j)}}m_{\alpha^{-i-2^{j}}}(x)\right)m_{\alpha^{-1}}(x), &
	\end{flalign*}if $m\equiv 0\hspace{1mm}(mod\hspace{1mm}4)$ and
	\begin{flalign*}
		\hskip 30pt		\textit{g}_{s}(x) &= \prod_{i\in\Gamma_{(\frac{m}{2})}\setminus\{1\}}m_{\alpha^{-i-2^{\frac{m}{2}}}}(x)\prod_{\substack{j=1\\ \mathbb{N}_{2}(j)=1}}^{\frac{m-4}{2}}\left(\prod_{i\in\Gamma_{(j)}}m_{\alpha^{-i-2^{j}}}(x)\right), &
	\end{flalign*}if $m\equiv 2\hspace{1mm}(mod\hspace{1mm}4)$; where the map $\mathbb{N}_{2}(\cdot)$ is defined by	\[
	\mathbb{N}_2(j) = \begin{cases} 
	0 & \text{if } 2 \mid j, \\
	1 & \text{if } 2 \nmid j.
	\end{cases}
	\]   
	The linear span $L_{s}$ corresponding to the sequence $s^{\infty}$ is given by
	\begin{flalign*}
		\hskip 30pt	L_{s}&= \begin{cases}
			m\left(\frac{2^{\frac{m}{2}+1}-2}{3}\right),\text{ if }m\equiv 0 \hspace{1mm}(mod\hspace{1mm}4)   \\
			m\left(\frac{2^{\frac{m}{2}+1}-4}{3}\right),\text{ if } m\equiv 2\hspace{1mm}(mod\hspace{1mm}4).
		\end{cases} &
	\end{flalign*}
	Moreover, the code $\mathcal{C}_{s}$ has parameters $[2^{m}-1, 2^{m}-1-L_{s}, d(\mathcal{C}_{s})]$, where $2^{\frac{m-2}{2}}\leq d(\mathcal{C}_{s})\leq L_{s}+1$.
	\begin{proof}
		Note that $\operatorname{Tr}(1)=0$ as $m$ is even and $\ell_{2^{m/2}+1}=|C_{2^{m/2}+1}|=m/2$. By using the properties of the trace function we have $\operatorname{Tr}(x^{2^{m/2}+1})=0$ for all $x\in\mathbb{F}_{2^{m}}$. For $t=h+1$, proceeding similarly to Lemma \ref{L7}, we obtain
		\begin{align}\label{E24}
			\operatorname{Tr}\left(\sum_{i=0}^{2^{h+1}-1}x^{i}\right) &=\operatorname{Tr}\left(\sum_{i\in\Gamma_{(h)}\setminus\{1\}}x^{i+2^{h}}+\sum_{i\in\Gamma_{(h-2)}}x^{i+2^{h-2}}+\dots+\sum_{i\in\Gamma_{(2)}}x^{i+2^{2}}+x\right)
		\end{align}
		if $h$ is even; and
		\begin{align}\label{E25}
			\operatorname{Tr}\left(\sum_{i=0}^{2^{h+1}-1}x^{i}\right) &=\operatorname{Tr}\left(\sum_{i\in\Gamma_{(h)}\setminus\{1\}}x^{i+2^{h}}+\sum_{i\in\Gamma_{(h-2)}}x^{i+2^{h-2}}+\dots+\sum_{i\in\Gamma_{(3)}}x^{i+2^{3}}+x^{3}\right)
		\end{align}
		if $h$ is odd.
		\vskip 1pt
		Note that, for any integer $t\geq 1$, the number of integers in the set $\Gamma_{(t)}$ is equal to $2^{t-1}$. If $h$ is even, by Lemma \ref{lem:L1} and Eq. (\ref{E24}), we have the linear span of $s^{\infty}$ equals
		\begin{flalign*}
			\hskip 20pt	L_{s}&= \left((2^{h-1}-1)+2^{h-3}+\dots+2+1\right)\cdot m & \\
			%		&=(2^{h-1}-1)\cdot m+\left(\frac{2^{h-1}-2}{3}+1\right)\cdot m &\\
			&=\frac{m(2^{h+1}-2)}{3}.
		\end{flalign*}
		If $h$ is odd, by Lemma \ref{lem:L1} and Eq. (\ref{E25}), we have the linear span of $s^{\infty}$ equals 
		\begin{flalign*}
			\hskip 20pt	L_{s}&= \left((2^{h-1}-1)+2^{h-3}+\dots+2^2+1\right)\cdot m & \\
			%		&=(2^{h-1}-1)\cdot m+\left(\frac{2^{h-1}-2}{3}+1\right)\cdot m &\\
			&=\frac{m(2^{h+1}-4)}{3}.
		\end{flalign*}
		From Lemma \ref{A} and Eq. (\ref{E24}) and (\ref{E25}) we get the result on the generator polynomial corresponding to the sequence $s^{\infty}$. 
        \vskip 1pt
        The upper bound of the minimum weight $d(\mathcal{C}_{s})$ follows from the Singleton bound. Let $S=\{3+2^{h}\}$ and $T=\{2j:0\leq j\leq 2^{h-1}-2\}$, it can be easily checked that the reciprocal of the generator polynomial $\textit{g}_{s}(x)$ has the roots $\alpha^{j}$ for all $j\in S+T$. Since $\operatorname{gcd}(2,2^{m}-1)<2$, by applying the Hartmann-Tzeng bound, we have the minimum Hamming weight $d(\mathcal{C}_{s})\geq 2^{h-1}$.
	\end{proof}
\end{theorem}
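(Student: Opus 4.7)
The plan is to mirror the strategy used in Lemma~\ref{L7} and Lemma~\ref{L16}: start from the closed form $s_t=\operatorname{Tr}\!\bigl(\sum_{i=0}^{2^{h+1}-1}(\alpha^t)^i\bigr)$ already established in Eq.~(\ref{E13}), unfold the inner sum over cyclotomic coset leaders via Eq.~(\ref{E3}) with $t=h+1$, and then apply Lemma~\ref{L4} to locate the nonvanishing $\kappa_j^{(h+1)}$. Since $m$ is even, $\operatorname{Tr}(1)=0$, so the $i=0$ term drops out immediately. For $j=1$ we have $\epsilon_1^{(h+1)}=h+1$, and for $j\in\Gamma_{(k+1)}\setminus\Gamma_{(k)}$ with $1\le k\le h$ we have $\epsilon_j^{(h+1)}=h+1-k$. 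Thus $\kappa_j^{(h+1)}=1$ precisely when $k$ has the opposite parity of $h+1$: when $h$ is even the surviving indices are $j=1$ together with $k\in\{2,4,\ldots,h\}$, and when $h$ is odd only $k\in\{1,3,\ldots,h\}$ survive.

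Next, at the extreme index $k=h$, the would-be term $\operatorname{Tr}(x^{1+2^h})$ identically vanishes on $\mathbb{F}_{2^m}$: indeed $x^{1+2^{m/2}}$ is the norm from $\mathbb{F}_{2^m}$ to $\mathbb{F}_{2^{m/2}}$, so its absolute trace factors as $\operatorname{Tr}_{\mathbb{F}_{2^{m/2}}/\mathbb{F}_2}\!\bigl(\operatorname{Tr}_{\mathbb{F}_{2^m}/\mathbb{F}_{2^{m/2}}}(x^{1+2^{m/2}})\bigr)=\operatorname{Tr}_{\mathbb{F}_{2^{m/2}}/\mathbb{F}_2}(2\cdot x^{1+2^{m/2}})=0$. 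This forces the outermost summation to range over $\Gamma_{(h)}\setminus\{1\}$ rather than all of $\Gamma_{(h)}$, which explains both the exclusion in the stated formula for $g_s(x)$ and the fact that the $L_s$ formula differs from the naive geometric sum by one missing coset.

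With the expansions (\ref{E24}) and (\ref{E25}) in hand, the next task is to prove that all of the cyclotomic cosets $C_{i+2^k}$ that appear are pairwise disjoint. Each of the relevant exponents $i+2^k$ lies in $\Gamma'$ of Lemma~\ref{lem:L1}, so it is its own coset leader and its coset has size $m$ (the only coset of smaller size in that range is $C_{2^h+1}$, which we have already removed). Thus the minimal polynomial $g_s(x)$ factors into the corresponding $m_{\alpha^{-(i+2^k)}}(x)$, giving the two formulas of the theorem. Counting the $|\Gamma_{(k)}|=2^{k-1}$ exponents contributed at level $k$, together with the lone coset $C_1$ when $h$ is even, and summing the resulting geometric series yields $L_s=m(2^{h-1}+\tfrac{2(2^{h-2}-1)}{3})=\tfrac{m(2^{h+1}-2)}{3}$ when $h$ is even, and $L_s=\tfrac{m(2^{h+1}-4)}{3}$ when $h$ is odd.

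Finally, for the bounds on $d(\mathcal{C}_s)$, the upper bound $d(\mathcal{C}_s)\le L_s+1$ is the Singleton bound applied to the code of dimension $2^m-1-L_s$. For the lower bound, I would verify that the reciprocal of $g_s(x)$ vanishes at $\alpha^j$ for every $j\in S+T$, where $S=\{3+2^h\}$ and $T=\{2j:0\le j\le 2^{h-1}-2\}$; equivalently, that $\{3+2^h,5+2^h,\ldots,2^h+2^h-1\}\subseteq\{i+2^k\}$ after reciprocation, which can be read off from the index set described in the generator polynomial. Since $\gcd(2,2^m-1)=1<2$, the Hartmann-Tzeng bound then yields $d(\mathcal{C}_s)\ge 2^{h-1}=2^{(m-2)/2}$. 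The main technical obstacle is the second paragraph's cancellation argument combined with the disjointness verification in the third: tracking how the exponents $\{i+2^k\}$ arising at different levels $k$ avoid collision requires the same $2$-weight-based case analysis as in Lemma~\ref{L3} and Lemma~\ref{L13}, which I would invoke verbatim.
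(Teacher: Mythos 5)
Your proposal is correct and follows essentially the same route as the paper: expand $\operatorname{Tr}\bigl(\sum_{i=0}^{2^{h+1}-1}x^i\bigr)$ via Eq.~(\ref{E3}) and Lemma~\ref{L4}, kill the $x^{1+2^{h}}$ term using $\operatorname{Tr}(x^{2^{m/2}+1})=0$, read off the surviving cosets from Lemma~\ref{lem:L1}, and obtain the distance bounds from the Singleton and Hartmann--Tzeng bounds with the same $S$ and $T$. Your closing worry about needing a Lemma~\ref{L3}/\ref{L13}-style collision analysis is unnecessary here, since (as you already observe) every surviving exponent $i+2^{k}$ is a distinct odd integer below $2^{h+1}$ and hence its own coset leader by Lemma~\ref{lem:L1}, which settles disjointness outright.
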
	
\vspace{0.5em}
\begin{example}
	Let $m=4$ and $\alpha$ be the root of the primitive polynomial $x^4+x+1$ over $\mathbb{F}_{2}$. The generator polynomial of $\mathcal{C}_{s}$ is $\textit{g}_{s}(x)=x^8+x^7+x^5+x^4+x^3+x+1$. Then $\mathcal{C}_{s}$ is a $[15,7,3]$ cyclic code and $\mathcal{C}_{s}^{\perp}$ is an optimal $[15,8,4]$ cyclic code. The optimal binary $[15,8,4]$ linear code is not cyclic in the Database $\cite{SETA23}$.
\end{example}
\vspace{0.5em}
\begin{example}
    Let $m=6$ and $\alpha$ be the root of the primitive polynomial $x^6+x+1$ over $\mathbb{F}_{2}$. The generator polynomial of $\mathcal{C}_{s}$ is $\textit{g}_{s}(x)=x^{24}+x^{23}+x^{20}+x^{16}+x^{13}+x^{12}+x^{11}+x^8+x^4+x+1$. Then $\mathcal{C}_{s}$ is a $[63,39,7]$ binary cyclic code and its dual $\mathcal{C}_{s}^{\perp}$ is a $[63,24,12]$ cyclic code.
\end{example}
\vspace{0.5em}
\begin{example}
    Let $m=8$ and $\alpha$ be the root of the primitive polynomial $x^8 + x^4 + x^3 + x + 1$ over $\mathbb{F}_{2}$. The generator polynomial of $\mathcal{C}_{s}$ is $\textit{g}_{s}(x)=x^{80} + x^{79} +
	x^{78} + x^{77} + x^{76} + x^{75} + x^{72} + x^{71} + x^{65} + x^{63} + x^{62} + x^{59} + x^{57} +
	x^{56} + x^{53} + x^{49} + x^{48} + x^{46} + x^{45} + x^{44} + x^{43} + x^{40} + x^{34} + x^{33} +
	x^{32} + x^{31} + x^{30} + x^{29} + x^{27} + x^{22} + x^{21} + x^{18} + x^{15} + x^{13} + x^{10} +
	x^7 + x^6 + x^4 + x^2 + x + 1$. Then, by using a Magma program, we have $\mathcal{C}_{s}$ is a $[255,175,d(\mathcal{C}_{s})]$ binary cyclic code, where $15\leq d(\mathcal{C}_{s})\leq 17$ and its dual $\mathcal{C}_{s}^{\perp}$ is a $[255,80,40]$ cyclic code.
\end{example}
\subsection{ Binary code $\mathcal{C}_{s}$ from the trinomial of the form (\ref{E21}), where $(s,t)=(2,-2)$}
Define $f_{7}(x)=x+x^{2^{m/2+1}-1}+x^{2^{m}-2^{m/2+1}+2}$, where $m$ is an even integer. As $m$ being even, $\operatorname{Tr}\left(1\right)=0$. Then we have
\begin{flalign}\label{E23}
	\hskip 30pt	s_{t}&=\operatorname{Tr}\left(f_{7}(\alpha^{t}+1)\right)\nonumber\\
	&=\operatorname{Tr}\left(\alpha^{t}+1\right)+\operatorname{Tr}\left((\alpha^{t}+1)^{2^{h+1}-1}\right)+\operatorname{Tr}\left(((\alpha^{t})^{2}+1)((\alpha^{t})^{2^{h+1}}+1)^{2^{h-1}-1}\right) \nonumber\\
	&= \operatorname{Tr}\left(\alpha^{t}+1\right)+\operatorname{Tr}\left(\sum_{i=0}^{2^{h+1}-1}(\alpha^{t})^{i}\right)+\operatorname{Tr}\left(((\alpha^{t})^{2}+1)\sum_{i=0}^{2^{h-1}-1}(\alpha^{t})^{i\cdot2^{h+1}}\right) \nonumber\\
	&=\operatorname{Tr}\left(\sum_{i\in\Gamma_{(h)}}(\alpha^{t})^{i+2^{h}}+\sum_{i=0}^{2^{h-1}-1}(\alpha^{t})^{i}\right)+\operatorname{Tr}\left(\sum_{i=1}^{2^{h-1}-1}(\alpha^{t})^{i+2^{h}}+\sum_{i=0}^{2^{h-1}-1}(\alpha^{t})^{i}\right) \nonumber\\
	&=\operatorname{Tr}\left(\sum_{i\in\Gamma_{(h)}}(\alpha^{t})^{i+2^{h}}+\sum_{i=1}^{2^{h-1}-1}(\alpha^{t})^{i+2^{h}}\right)&
\end{flalign}
For convenience, we define $\bar{A}=\{1,2,3,\dots,2^{h-1}-1\}$, where $h=\frac{m}{2}$.
\vspace{0.5em}
\begin{lemma}\label{L15}
	For any $i\in \bar{A}$ and $j\in\Gamma_{(h)}$, we have $C_{i+2^{h}}\cap C_{j+2^{h}}\neq \emptyset$ only if $i=j$ with $j\in\Gamma_{(h-1)}$.
	\begin{proof}
		Note that when $i\in \bar{A}$ and $j\in\Gamma_{(h)}\setminus\Gamma_{(h-1)}$, we have $i+2^{h}<j+2^{h}$. According to Lemma \ref{lem:L1}, $j+2^{h}$ is the coset leader of $C_{j+2^{h}}$. This implies that the coset leaders of $C_{i+2^{h}}$ and $C_{j+2^{h}}$ are distinct. Hence, in this case, $C_{i+2^{h}}\cap C_{j+2^{h}}= \emptyset$. \\
		When $i\in \bar{A}$ and $j\in\Gamma_{(h-1)}$, the coset leaders of $C_{i+2^{h}}$ and $C_{j+2^{h}}$ are equal, only if $i=j$. Hence, the result follows.
	\end{proof}
\end{lemma}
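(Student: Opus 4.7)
The plan is to decide when $C_{i+2^h}=C_{j+2^h}$ by identifying the coset leader of each side via Lemma \ref{lem:L1} and then comparing. Since $m=2h$, we have $n=\lceil(m+1)/2\rceil=h+1$, so every odd integer in $\{1,\ldots,2^{h+1}-1\}$ is the coset leader of its own $2$-cyclotomic coset. The natural case split is on the parity of $i$, and within that on whether $j\in\Gamma_{(h-1)}$ or $j\in\Gamma_{(h)}\setminus\Gamma_{(h-1)}$.

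When $i$ is odd, $i+2^h$ is odd with $2^h<i+2^h<2^h+2^{h-1}<2^{h+1}$, so by Lemma \ref{lem:L1} it is the coset leader of $C_{i+2^h}$; similarly $j+2^h$ is the coset leader of $C_{j+2^h}$ since $j$ is odd and at most $2^h-1$. The two cosets coincide iff $i+2^h=j+2^h$, i.e.\ $i=j$. Because $i\in\bar{A}$ forces $i<2^{h-1}$, this also forces $j\in\Gamma_{(h-1)}$. If on the other hand $j\in\Gamma_{(h)}\setminus\Gamma_{(h-1)}$, then $j>2^{h-1}>i$, so the two coset leaders are distinct and the intersection is empty.

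When $i$ is even, write $i=2^s i_1$ with $i_1$ odd and $1\leq s\leq h-2$; then $i+2^h=2^s(i_1+2^{h-s})$, so $C_{i+2^h}=C_{i_1+2^{h-s}}$. Using $i<2^{h-1}$, we get $i_1<2^{h-1-s}$, hence $i_1+2^{h-s}<2^{h+1-s}\leq 2^h$; also $i_1+2^{h-s}$ is odd, so by Lemma \ref{lem:L1} it is the coset leader of $C_{i+2^h}$. However the coset leader of $C_{j+2^h}$ is $j+2^h>2^h$, so the leaders are distinct in both subcases of $j$, and $C_{i+2^h}\cap C_{j+2^h}=\emptyset$.

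Combining, $C_{i+2^h}\cap C_{j+2^h}\neq\emptyset$ happens only when $i$ is odd, $i=j$, and $j\in\Gamma_{(h-1)}$, which is the claim. The structure is parallel to Lemma \ref{L13}, and the main obstacle I anticipate is simply book-keeping for the even-$i$ branch: I need the sharp estimate $i_1+2^{h-s}<2^h$ to guarantee that the reduced coset leader lies strictly below $j+2^h$, and I must be mindful of the small exceptional case $j=1$, $j+2^h=2^h+1$, where $\ell_{2^h+1}=h$ rather than $m$ but the integer $2^h+1$ is still the coset leader, so the comparison argument is unaffected.
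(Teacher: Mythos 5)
Your proof is correct and follows essentially the same route as the paper: identify the coset leader of each of $C_{i+2^h}$ and $C_{j+2^h}$ via Lemma \ref{lem:L1} (splitting on the parity of $i$, exactly as the paper does in the companion Lemma \ref{L3}) and compare. Your version merely spells out the even-$i$ bookkeeping and the harmless exception at $2^h+1$ that the paper's terser proof leaves implicit.
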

\vspace{0.5em}
\begin{theorem}\label{L18}
	Let $m\geq 6$ be even and $s^{\infty}$ be the sequence defined in Eq. $(\ref{E23})$. Then the generator polynomial $\textit{g}_{s}(x)$ corresponding to the sequence $s^{\infty}$ is given by
	\begin{flalign*}
		\hskip 30pt		\textit{g}_{s}(x)&=\prod_{i\in\Gamma_{(\frac{m}{2})}\setminus\Gamma_{(\frac{m-2}{2})}}m_{\alpha^{-i-2^{\frac{m}{2}}}}(x)\prod_{j=1}^{\frac{m-4}{2}}\left(\prod_{i\in\Gamma_{(j)}}m_{\alpha^{-i-2^{j+1}}}(x)\right)&
	\end{flalign*} and the linear span corresponding to the sequence $s^{\infty}$ is equal to $m\cdot(2^{(m-2)/2}-1)$ and 
	Moreover, the code $\mathcal{C}_{s}$ has parameters $[2^{m}-1,2^{m}-1-m(2^{(m-2)/2}-1), d(\mathcal{C}_{s})]$, where $2^{\frac{m-4}{2}}+1\leq d(\mathcal{C}_{s})\leq 1+m(2^{(m-2)/2}-1)$. 
	\begin{proof}
		The proof of this lemma can be easily carried out with the help of Lemma \ref{L15} and \ref{lem:L1}, similar to Lemma \ref{L7}. The upper and lower bound on $d(\mathcal{C}_{s})$ follows from the Singleton bound and the Hartmann-Tzeng bound, respectively.
		%	\vskip 1pt
		%	The desired conclusion on the minimum distance of the code $\mathcal{C}_{s}$ can be achieved similarly to Theorem \ref{T2}.
	\end{proof}
\end{theorem}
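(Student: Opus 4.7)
The plan is to mirror the derivation of Lemma~\ref{L7}. Starting from the identity~(\ref{E23}), $s_t=\operatorname{Tr}\bigl(\sum_{i\in\Gamma_{(h)}}(\alpha^t)^{i+2^h}+\sum_{i=1}^{2^{h-1}-1}(\alpha^t)^{i+2^h}\bigr)$, I would first expand the second sum by repeatedly splitting the summation range according to the parity of $i$ and applying $\operatorname{Tr}(y^2)=\operatorname{Tr}(y)$: the odd indices in $[1,2^{h-1}-1]$ directly give $\sum_{i\in\Gamma_{(h-1)}}x^{i+2^h}$, while the even indices $i=2i'$ yield $\operatorname{Tr}(x^{i'+2^{h-1}})$, which I would recurse on. Iterating $h-1$ times produces the telescoping identity $\operatorname{Tr}(\sum_{i=1}^{2^{h-1}-1}x^{i+2^h})=\sum_{j=1}^{h-1}\operatorname{Tr}(\sum_{i\in\Gamma_{(j)}}x^{i+2^{j+1}})$.

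The $j=h-1$ summand $\sum_{i\in\Gamma_{(h-1)}}x^{i+2^h}$ is identical to the $\Gamma_{(h-1)}$ portion of the first sum $\sum_{i\in\Gamma_{(h)}}x^{i+2^h}$, so in characteristic $2$ the two copies cancel. What remains is $s_t=\operatorname{Tr}\bigl(\sum_{i\in\Gamma_{(h)}\setminus\Gamma_{(h-1)}}(\alpha^t)^{i+2^h}+\sum_{j=1}^{h-2}\sum_{i\in\Gamma_{(j)}}(\alpha^t)^{i+2^{j+1}}\bigr)$. Next I would invoke Lemma~\ref{lem:L1} to show that each surviving exponent is its own coset leader with cyclotomic coset of size $m$: the exponents $i+2^h$ for $i\in\Gamma_{(h)}\setminus\Gamma_{(h-1)}$ are odd and lie in $[2^h+2^{h-1}+1,\,2^{h+1}-1]$, while the exponents $i+2^{j+1}$ for $i\in\Gamma_{(j)}$ with $j\le h-2$ are odd and lie in $[2^{j+1}+1,\,2^{j+2}-1]\subset[1,2^h-1]$. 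These ranges are pairwise disjoint, so the coset leaders are all distinct; Lemma~\ref{L15} confirms that no two surviving cosets coincide, and the exceptional coset $C_{2^h+1}$ of size $h$ cannot arise here because $1\notin\Gamma_{(h)}\setminus\Gamma_{(h-1)}$ and $j=h-1$ is excluded from the range. Lemma~\ref{C} then yields the stated generator polynomial, and the linear span works out to $L_s=m\bigl(|\Gamma_{(h)}\setminus\Gamma_{(h-1)}|+\sum_{j=1}^{h-2}|\Gamma_{(j)}|\bigr)=m(2^{h-2}+2^{h-2}-1)=m(2^{h-1}-1)$.

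The distance bounds follow by standard arguments. The upper bound $d(\mathcal{C}_s)\le L_s+1$ is the Singleton bound for the even-weight cyclic code of dimension $2^m-1-L_s$. For the lower bound I would isolate the arithmetic progression contributed to $I_s$ by the first surviving sum, namely $\{2^h+2^{h-1}+1,\,2^h+2^{h-1}+3,\dots,2^{h+1}-1\}$, which consists of $2^{h-2}$ consecutive odd integers that all lie in $I_s$. Setting $S=\{2^h+2^{h-1}+1\}$ (so $\delta=2$) and $T=\{2j:0\le j\le 2^{h-2}-1\}$ (so $s=2^{h-2}-1$), we have $S+T\subseteq I_s$ and $\gcd(2,v)=1<2=\delta$; applying the Hartmann--Tzeng bound to the reciprocal code (which shares the weight distribution of $\mathcal{C}_s$) then gives $d(\mathcal{C}_s)\ge\delta+s=2^{h-2}+1=2^{(m-4)/2}+1$.

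The hard part will be the bookkeeping in the middle step: tracking the single global cancellation between the two sums and then certifying, via disjointness of the ranges $[2^{j+1}+1,2^{j+2}-1]$ together with the bounds of Lemma~\ref{lem:L1} and Lemma~\ref{L15}, that no further coset coincidences occur among the surviving exponents. Once that is done, the dimension, the generator polynomial, and both distance bounds fall out cleanly.
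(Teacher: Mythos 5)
Your proposal is correct and follows exactly the route the paper intends: the telescoping trace identity, the cancellation of the $\Gamma_{((m-2)/2)}$ portion (which is precisely what Lemma \ref{L15} certifies), coset-leader disjointness via Lemma \ref{lem:L1}, the Singleton bound for the upper estimate, and the Hartmann--Tzeng bound with $S=\{2^{h}+2^{h-1}+1\}$, $T=\{2j:0\le j\le 2^{h-2}-1\}$ applied to the reciprocal code for the lower estimate. The only blemish is your passing description of $\mathcal{C}_s$ as an even-weight code — here $x-1$ does not divide $\textit{g}_s(x)$, so the code is not even-weight — but this does not affect the Singleton bound $d(\mathcal{C}_s)\le L_s+1$ that you actually use.
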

\vspace{0.5em}
\begin{example}
	Let $m=4$ and $\alpha$ be the root of the primitive polynomial $x^4+x+1$ over $\mathbb{F}_{2}$. The generator polynomial of $\mathcal{C}_{s}$ is $\textit{g}_{s}(x)=x^4+x+1$. Then $\mathcal{C}_{s}$ is a binary $[15,11,3]$ cyclic code and $\mathcal{C}_{s}^{\perp}$ is a $[15,4,8]$ cyclic code. According to the Database $\cite{SETA23}$, both codes are optimal, and none of the binary linear codes with the same parameters are cyclic. 
\end{example}
\vspace{0.5em}
\begin{example}
    Let $m=6$ and $\alpha$ be the root of the primitive polynomial $x^6+x+1$ over $\mathbb{F}_{2}$. The generator polynomial of $\mathcal{C}_{s}$ is $\textit{g}_{s}(x)=x^{18}+x^{17}+x^{15}+x^{14}+x^{13}+x^{11}+x^{10}+x^8+x^3+x+1$. Then $\mathcal{C}_{s}$ is a $[63,45,5]$ binary cyclic code and its dual $\mathcal{C}_{s}^{\perp}$ is a $[63,18,16]$ cyclic code.
\end{example}
\vspace{0.5em}
\begin{example}
    Let $m=8$ and $\alpha$ be the root of the primitive polynomial $x^8+x^4+x^3+x+1$ over $\mathbb{F}_{2}$. The generator polynomial of $\mathcal{C}_{s}$ is $\textit{g}_{s}(x)=x^{56}+x^{54}+x^{53}+x^{52}+x^{51}+x^{49}+x^{48} + x^{45} + x^{44} + x^{42} + x^{38} + x^{36} + x^{34} +
	x^{33} + x^{31} + x^{30} + x^{29} + x^{27} + x^{25} + x^{24} + x^{20} + x^{19} + x^{17} + x^{16} +
	x^{15} + x^{14} + x^{12} + x^{10} + x^8 + x^7 + x^4 + x^3 + 1$. Then $\mathcal{C}_{s}$ is a $[255,199,10]$ binary cyclic code and its dual $\mathcal{C}_{s}^{\perp}$ is a $[255,56,64]$ cyclic code.
\end{example}
\subsection{Binary code $\mathcal{C}_{s}$ from the trinomial of the form (\ref{E21}), where $(s,t)=(1,2^{\frac{m}{2}-1})$}
Define $f_{8}(x)=x+x^{2^{m/2}}+x^{2^{m-1}-2^{m/2-1}+1}$, where $m$ is an even integer. Note that $\operatorname{Tr}(x^2)=\operatorname{Tr}(x)$ for all $x\in\mathbb{F}_{2^m}$ and $\operatorname{Tr}(1)=0$ for an even integer $m$. Then we have
\begin{flalign}\label{EF8}
	\hskip 30pt	s_{t}&=\operatorname{Tr}\left(f_{8}(\alpha^{t}+1)\right)\nonumber\\ &= \operatorname{Tr}(\alpha^{t}+1)+\operatorname{Tr}\left((\alpha^{t}+1)^{2^{h}}\right)+\operatorname{Tr}\left((\alpha^{t}+1)^{2^{2h-1}-2^{h-1}+1}\right) \nonumber \\
	%		&=\operatorname{Tr}\left((\alpha^t+1)^{2^m-2^{m/2}+1}\right) \\
	&= \operatorname{Tr}\left((\alpha^{t}+1)(\alpha^{t}+1)^{\sum_{i=0}^{h-1}2^{h-1+i}}\right)\nonumber \\
    &=\operatorname{Tr}\left((\alpha^{t}+1)\prod_{i=0}^{h-1}((\alpha^{t})^{2^{h-1+i}}+1)\right)\nonumber \\
	&=\operatorname{Tr}\left((\alpha^{t}+1)\sum_{i=0}^{2^{h}-1}(\alpha^{t})^{i\cdot 2^{h-1}}\right)\nonumber \\
	&=\operatorname{Tr}\left(\sum_{i=0}^{2^{h}-1}(\alpha^{t})^{1+i\cdot2^{h-1}}\right)+\operatorname{Tr}\left(\sum_{i=0}^{2^{h}-1}(\alpha^{t})^{i}\right) \nonumber \\
	&=\operatorname{Tr}(\alpha^{t})+\operatorname{Tr}\left(\sum_{i=1}^{2^{h}-1}(\alpha^{t})^{i+2^{h+1}}\right)+\operatorname{Tr}\left(\sum_{i=1}^{2^{h}-1}(\alpha^{t})^{i}\right) &
	%	&= \operatorname{Tr}\left(\sum_{i\in\Gamma_{(h+1)}}x^{i}\right)+\operatorname{Tr}\left(\sum_{i=0}^{2^{h}-1}x^{i}\right) \nonumber \\
	%	&= 	 \operatorname{Tr}\left(\sum_{i\in\Gamma_{(h)}}x^{i}+\sum_{i\in\Gamma_{(h)}}x^{i+2^{h}}\right)+\operatorname{Tr}\left(\sum_{i\in\Gamma_{(h)}}x^{i}+\sum_{i=0}^{2^{h-1}-1}x^{i}\right) \nonumber \\
	%	&= \operatorname{Tr}\left(\sum_{i\in\Gamma_{(h)}}x^{i+2^{h}}\right)+\operatorname{Tr}\left(\sum_{i=0}^{2^{h-1}-1}x^{i}\right)\nonumber
\end{flalign}
\begin{lemma}\label{EF15}
    For any $j\in\Gamma_{(h)}$, we have
    \begin{enumerate}
        \item[(i)] $j+2^{h+1}$ is the coset leader of $C_{j+2^{h+1}}$ for $j\not\in\Gamma_{(2)}$, and the coset leaders of $C_{1+2^{h+1}}$ and $C_{3+2^{h+1}}$ are $1+2^{h-1}$ and $1+2^{h-1}+2^{h}$ respectively.
        \item[(ii)] $\ell_{j+2^{h+1}}=|C_{j+2^{h+1}}|=m$ except that $\ell_{5+2^{h+1}}=|C_{5+2^{h+1}}|=2$ when $h=3$.
    \end{enumerate}
    \begin{proof} The first statement of this lemma can be easily modified similarly to Lemma \ref{L12}.
    \vskip 1pt
    We proof the second statement of this lemma. By Lemma \ref{lem:L1}, we have $\ell_{1+2^{h-1}}=\ell_{1+2^{h-1}+2^{h}}=m$. Note that for any $j\in\Gamma_{(h)}\setminus\Gamma_{(2)}$,  $(j+2^{h+1})\cdot2^{\ell}<2^{m}-1$ for any $0\leq \ell\leq h-2$. That means $|\ell_{j+2^{h+1}}|\geq h-1$. If possible for some $j\in\Gamma_{(h)}\setminus\Gamma_{(2)}$, $\ell_{j+2^{h+1}}<m$. Then $\ell_{j+2^{h+1}}\leq \frac{m}{2}=h$.
    \vskip 0pt
    Suppose that $2^{h}\cdot(j+2^{h+1})\equiv 2+j\cdot2^{h}\equiv (j+2^{h+1})\hspace{1mm}(\textnormal{mod}\hspace{1mm}2^{m}-1)$, which implies $j\equiv 2\hspace{1mm}(\textnormal{mod}\hspace{1mm}2^{h}+1)$. This is not possible because $j\neq 2$ and $j-2< 2^{h}+1$. Therefore $\ell_{j+2^{h+1}}\neq h$.
    \vskip 0pt 
        On the other hand, since $\ell_{j+2^{h+1}}$ divides $m$ and $\frac{2h}{h-1}$ is an integer only when $h\in\{2,3\}$. One can easily check $\ell_{5+2^{h+1}}=h-1$ for $h=3$. Hence, the proof.
    \end{proof}
\end{lemma}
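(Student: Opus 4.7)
The plan is to handle parts (i) and (ii) in turn, closely patterning the argument on Lemma~\ref{L12} since the cyclotomic geometry for even $m$ is analogous.

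For part (i), I would write $j=1+2^{j_1}+\cdots+2^{j_{k-1}}$ with $1\le j_1<\cdots<j_{k-1}\le h-1$ whenever $k=\operatorname{w}_2(j)\ge 2$. Then the binary support of $j+2^{h+1}$ is $\{0,j_1,\ldots,j_{k-1},h+1\}$, and by Lemma~\ref{L5} the coset leader is odd, so it arises from a cyclic shift by $s=m-p$ for some $p$ in the support. The candidate leaders are therefore $j+2^{h+1}$ itself (from $p=0$), the value $1+j\cdot 2^{h-1}$ (from $p=h+1$, using $2^{2h}\equiv 1\pmod{2^m-1}$), and the values produced by $s=m-j_t$ for $t=1,\ldots,k-1$. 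A direct binary comparison shows each element of the last family exceeds $j+2^{h+1}$, while $1+j\cdot 2^{h-1}<j+2^{h+1}$ is equivalent to $j(2^{h-1}-1)<2^{h+1}-1$, i.e.\ to $j\in\{1,3\}$. This produces the announced coset leaders $1+2^{h-1}$ and $1+2^{h-1}+2^h$ for $j=1,3$, and $j+2^{h+1}$ itself for all remaining $j\in\Gamma_{(h)}\setminus\Gamma_{(2)}$.

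For part (ii), set $\ell=\ell_{j+2^{h+1}}$, which divides $m=2h$. Since the elements $(j+2^{h+1})\cdot 2^s$ for $0\le s\le h-2$ are all distinct and strictly less than $2^m-1$, we get $\ell\ge h-1$. The proper divisors of $2h$ in the range $[h-1,2h)$ are $d=h$ together with $d=h-1$ exactly when $h-1\mid 2$, i.e.\ $h\in\{2,3\}$; since the lemma assumes $m\ge 6$ (so $h\ge 3$), only $h=3$ introduces the extra divisor $h-1=2$. The case $\ell=h$ requires $(2^h-1)(j+2^{h+1})\equiv 0\pmod{2^m-1}$, which, using $\gcd(2^h-1,2^m-1)=2^h-1$ and $(2^m-1)/(2^h-1)=2^h+1$, collapses to $j\equiv 2\pmod{2^h+1}$; this is impossible because $1\le j\le 2^h-1$ and $j$ is odd. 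For $h=3$ the additional divisors $\ell\in\{2,3\}$ must be checked: $\ell\mid 2$ forces $21\mid j+16$, which singles out $j=5$ (matching $C_{21}=\{21,42\}$ in $\mathbb{Z}_{63}$), while $\ell\mid 3$ forces $9\mid j+16$ and admits no odd solution in $\Gamma_{(3)}$. Finally, for the two cosets with $j\in\{1,3\}$, the leaders $1+2^{h-1}$ and $1+2^{h-1}+2^h$ lie in the set $\Gamma'$ of Lemma~\ref{lem:L1} and neither equals $1+2^{m/2}$, so Lemma~\ref{lem:L1}(ii) directly yields $\ell=m$.

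The main source of tedium is the bookkeeping in part (i): one must verify that every shift $s=m-j_t$ genuinely produces a value larger than $j+2^{h+1}$, and pin down the precise threshold $j\in\{1,3\}$ at which the comparison $1+j\cdot 2^{h-1}<j+2^{h+1}$ reverses. Part (ii) is then a clean divisor argument; the exceptional case $(h,j)=(3,5)$ emerges from the equality $1+5\cdot 2^{h-1}=5+2^{h+1}$ at $h=3$, which trivializes the $s=h-1$ shift and forces $\ell\mid h-1=2$.
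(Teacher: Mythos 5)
Your overall strategy is the same as the paper's: part (i) by adapting the argument of Lemma~\ref{L12} (list the odd candidates for the coset leader coming from cyclic shifts by $m-p$ for $p$ in the binary support, then compare sizes), and part (ii) by the divisor argument $\ell\mid 2h$, $\ell\geq h-1$, ruling out $\ell=h$ via $j\equiv 2\pmod{2^h+1}$. Your part (ii) is correct and essentially identical to the paper's.

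The problem is in part (i), at exactly the step you flag as ``the main source of tedium'' and then do not carry out: the claim that every shift $s=m-j_t$ produces a value exceeding $j+2^{h+1}$ is false, and so is the lemma itself. Take $j=1+2^{h-1}$ with $h\geq 4$ (so $j\in\Gamma_{(h)}\setminus\Gamma_{(2)}$). The support of $j+2^{h+1}$ is $\{0,h-1,h+1\}$, and the shift by $s=m-j_1=m-(h-1)=h+1$ sends it to $\{0,2,h+1\}$, i.e. $(j+2^{h+1})\cdot 2^{h+1}\equiv 5+2^{h+1}\pmod{2^m-1}$, which is \emph{strictly smaller} than $j+2^{h+1}=1+2^{h-1}+2^{h+1}$ once $h\geq 4$. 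Concretely, for $m=8$, $h=4$, $j=9$: $j+2^{h+1}=41$ and $C_{41}=\{37,41,73,74,82,146,148,164\}=C_{37}$, whose coset leader is $37=5+2^{h+1}$, not $41$. So $j+2^{h+1}$ is not the coset leader for $j=1+2^{h-1}$, and moreover $C_{1+2^{h-1}+2^{h+1}}$ coincides with $C_{5+2^{h+1}}$, so the cosets $C_{j+2^{h+1}}$, $j\in\Gamma_{(h)}$, are not pairwise distinct. Your comparison $1+j\cdot 2^{h-1}<j+2^{h+1}\iff j\in\{1,3\}$ for the $p=h+1$ candidate is right, but the $p=j_t$ candidates with $j_t=h-1$ are where the argument breaks. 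This gap cannot be repaired as stated; it also infects the downstream claim that no terms in Eq.~(\ref{EF12})--(\ref{EF13}) cancel (for $m=8$ the contributions of $\operatorname{Tr}(x^{37})$ and $\operatorname{Tr}(x^{41})$ do cancel). To be fair, the paper's own ``proof'' of part (i) is only a pointer to Lemma~\ref{L12} and overlooks the same case, but a correct proof would have to exclude $j=1+2^{h-1}$ (for $h\geq 4$) as an additional exception.
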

For convenience, we define $\bar{\bar{A}}=\{1,2,3,\cdots 2^{h}-1\}$, where $h=\frac{m}{2}$.
\vspace{0.5em}
\begin{lemma}\label{EF14}
    For any $i\in\bar{\bar{A}}$ and $j\in\Gamma_{(h)}$, we have
    \item \begin{gather*}
			C_{i+2^{h+1}}\cap C_{j}=	\begin{cases}
				C_{j},\quad \text{ if }(i,j)\text{ is of the form }(2^{s}i_{1},i_{1}+2^{h+1-s}) \\
				\emptyset,\quad \text{otherwise}  
			\end{cases}
		\end{gather*}
        when $i_{1}$ ranges over the integers in $\Gamma_{(h-s)}$ and $s\in\{2,3,\cdots,h-1\}$.
        \begin{proof}
            The proof of this lemma can be easily modified similarly with the help of Lemma \ref{L3}.
        \end{proof}
\end{lemma}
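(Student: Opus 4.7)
The plan is to mirror the structure of Lemma \ref{L3}, but now using the refined description of coset leaders of $C_{i+2^{h+1}}$ supplied by Lemma \ref{EF15}. The basic engine is the same: since $j\in\Gamma_{(h)}$ is odd with $j<2^h$, Lemma \ref{lem:L1} makes $j$ the coset leader of $C_j$, so $C_{i+2^{h+1}}=C_j$ if and only if the coset leader of $C_{i+2^{h+1}}$ equals $j$. The entire argument thus reduces to identifying that coset leader and comparing it against the size constraint $j<2^h$.

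First I would split on the $2$-adic valuation of $i$, writing $i=2^{s}i_1$ with $i_1$ odd. When $s=0$ (i.e.\ $i$ is odd), Lemma \ref{EF15}(i) applies directly: for $i\notin\{1,3\}$ the coset leader is $i+2^{h+1}>2^h>j$, forcing $C_{i+2^{h+1}}\cap C_j=\emptyset$, and for $i=3$ the coset leader $1+2^{h-1}+2^h$ also exceeds $2^h$, giving the same conclusion. When $s=1$, the factorization $i+2^{h+1}=2(i_1+2^h)$ yields $C_{i+2^{h+1}}=C_{i_1+2^h}$ whose coset leader $i_1+2^h\ge 2^h$, again too large to match any $j\in\Gamma_{(h)}$. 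Finally, for $s\ge 2$ the identity $i+2^{h+1}=2^{s}(i_1+2^{h+1-s})$ shows $C_{i+2^{h+1}}=C_{i_1+2^{h+1-s}}$, and since $i=2^{s}i_1<2^h$ forces $i_1\in\Gamma_{(h-s)}$, the element $i_1+2^{h+1-s}$ is odd with $i_1+2^{h+1-s}<2\cdot 2^{h+1-s}\le 2^h$, hence a coset leader in $\Gamma_{(h)}$. Matching $C_j$ against this leader yields exactly the advertised form $(i,j)=(2^{s}i_1,\,i_1+2^{h+1-s})$.

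The main obstacle I anticipate is the exceptional case $i=1$ from Lemma \ref{EF15}, where the coset leader of $C_{1+2^{h+1}}$ is $1+2^{h-1}$, which does lie in $\Gamma_{(h)}$. This produces the genuine equality $C_{1+2^{h+1}}\cap C_{1+2^{h-1}}=C_{1+2^{h-1}}$, a pair $(i,j)=(1,\,1+2^{h-1})$ that does not fit the template $(2^{s}i_1,\,i_1+2^{h+1-s})$ for any $s\in\{2,\dots,h-1\}$. To keep the bookkeeping honest I would either extend the claimed form to include this pair explicitly, or isolate it as a separate clause and verify by hand that no other odd $i$ contributes a nontrivial intersection. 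With that caveat accounted for, every remaining case is a direct coset-leader comparison, completely parallel to Lemma \ref{L3}, and the overall verification is mechanical once the coset-leader tables are set.
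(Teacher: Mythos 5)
Your approach is the intended one --- reduce everything to a comparison of coset leaders, using Lemma \ref{lem:L1} for the cosets $C_j$ and $C_{i_1+2^{h+1-s}}$ and Lemma \ref{EF15} for the odd-$i$ cases --- and your case split over the $2$-adic valuation $s$ of $i$ is carried out correctly: for $s=1$ the leader $i_1+2^{h}$ exceeds $2^{h}$ and so cannot equal any $j\in\Gamma_{(h)}$; for $s\ge 2$ the identity $i+2^{h+1}=2^{s}(i_1+2^{h+1-s})$ together with $i_1\in\Gamma_{(h-s)}$ produces exactly the advertised pairs; and for odd $i\ge 3$ Lemma \ref{EF15} places the coset leader above $2^{h}$. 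The paper offers only the one-line remark that the argument of Lemma \ref{L3} can be modified, so your write-up is strictly more detailed than the paper's own proof.

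More importantly, the exception you flag at $i=1$ is genuine and not an artifact of your bookkeeping: since $2^{h-1}(1+2^{h+1})\equiv 1+2^{h-1}\pmod{2^{m}-1}$, one has $C_{1+2^{h+1}}=C_{1+2^{h-1}}$ with $1+2^{h-1}\in\Gamma_{(h)}$, so the pair $(i,j)=(1,\,1+2^{h-1})$ gives $C_{i+2^{h+1}}\cap C_j=C_j\neq\emptyset$ yet cannot be written as $(2^{s}i_1,\,i_1+2^{h+1-s})$ with $s\in\{2,\dots,h-1\}$. The lemma as printed is therefore false for this pair, and your proposed repair --- adding this pair as an explicit extra clause, consistent with Lemma \ref{EF15}(i) --- is the correct fix. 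It is worth noting that the downstream theorem survives the correction: the coset $C_{1+2^{h-1}}$ receives contributions from $i=1$ and $i=4$ in $\sum_{i}x^{i+2^{h+1}}$ and from the exponent $1+2^{h-1}$ in $\sum_{i}x^{i}$, an odd total, so the net multiplicity in Eq. (\ref{EF12})--(\ref{EF13}) and the value of $L_s$ are unaffected; but the statement of the lemma itself should be amended as you suggest.
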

\vspace{0.5em}
\begin{theorem}
    Let $m\geq 6$ be even and $s^{\infty}$ be the sequence defined in Eq. $(\ref{EF8})$. Then the generator polynomial $\textit{g}_{s}(x)$ corresponding to the sequence $s^{\infty}$ is given by 
    \begin{flalign*}
		\hskip 30pt		\textit{g}_{s}(x) &= \prod_{i\in\Gamma_{(\frac{m}{2})}}m_{\alpha^{-i-2^{\frac{m}{2}+1}}}(x)\prod_{i\in\Gamma_{(\frac{m}{2}-1)}}m_{\alpha^{-i-2^{\frac{m}{2}}}}(x)\prod_{\substack{j=1\\ \mathbb{N}_{2}(j)=1}}^{\frac{m-6}{2}}\left(\prod_{i\in\Gamma_{(\frac{m}{2}-j)}\backslash\Gamma_{(\frac{m-2}{2}-j)}}m_{\alpha^{-i-2^{\frac{m}{2}-j}}}(x)\right. \nonumber \\ &\mathrel{\phantom{=}} \left.\kern-\nulldelimiterspace \times\; \prod_{i\in\Gamma_{(\frac{m-4}{2}-j)}}m_{\alpha^{-i-2^{\frac{m-2}{2}-j}}}(x)\right)m_{\alpha^{-3}}(x)m_{\alpha^{-1}}(x), &
	\end{flalign*}
    if $m\equiv 0\hspace{1mm}(mod\hspace{1mm}4)$ and
	\begin{flalign*}
		\hskip 30pt		\textit{g}_{s}(x) &= \prod_{i\in\Gamma_{(\frac{m}{2})}}m_{\alpha^{-i-2^{\frac{m}{2}+1}}}(x)\prod_{i\in\Gamma_{(\frac{m}{2}-1)}}m_{\alpha^{-i-2^{\frac{m}{2}}}}(x)\prod_{\substack{j=1\\ \mathbb{N}_{2}(j)=1}}^{\frac{m-6}{2}}\left(\prod_{i\in\Gamma_{(\frac{m}{2}-j)}\backslash\Gamma_{(\frac{m-2}{2}-j)}}m_{\alpha^{-i-2^{\frac{m}{2}-j}}}(x)\right. \nonumber \\ &\mathrel{\phantom{=}} \left.\kern-\nulldelimiterspace \times\;\prod_{i\in\Gamma_{(\frac{m-4}{2}-j)}}m_{\alpha^{-i-2^{\frac{m-2}{2}-j}}}(x)\right)m_{\alpha^{-7}}(x), &
	\end{flalign*}if $m\equiv 2\hspace{1mm}(mod\hspace{1mm}4)$; where the map $\mathbb{N}_{2}(\cdot)$ is defined by	\[
	\mathbb{N}_2(j) = \begin{cases} 
	0 & \text{if } 2 \mid j, \\
	1 & \text{if } 2 \nmid j.
	\end{cases}
	\]   
	The linear span $L_{s}$ corresponding to the sequence $s^{\infty}$ is given by
	\begin{flalign*}
		\hskip 30pt	L_{s}&= \begin{cases}
			m\left(2^{\frac{m}{2}}+1\right)-\frac{m}{2},\text{ if }m\equiv 0 \hspace{1mm}(mod\hspace{1mm}4);   \\
			m\left(2^{\frac{m}{2}}-1\right)-\frac{m}{2},\text{ if } m\equiv 2\hspace{1mm}(mod\hspace{1mm}4)\text{ and } m>6;  \\
                6m-1,\text{ if } m=6 .
		\end{cases} &
	\end{flalign*}
	Moreover, the code $\mathcal{C}_{s}$ has parameters $[2^{m}-1, 2^{m}-1-L_{s}, d(\mathcal{C}_{s})]$, where
    \begin{align*}
		 & \begin{cases}
			\max\{7, 2^{(m-2)/2}+1\}\leq d(\mathcal{C}_{s}) \leq 1+m\left(2^{\frac{m}{2}}+1\right)-\frac{m}{2} &\text{ if }  m\equiv 0 \hspace{1mm}(\textnormal{mod}\hspace{1mm}4)  \\
			2^{(m-2)/2}+1 \leq d(\mathcal{C}_{s}) \leq 1+m\left(2^{\frac{m}{2}}-1\right)-\frac{m}{2} &\text{ if } m\equiv 2 \hspace{1mm}(\textnormal{mod}\hspace{1mm}4)\text{ and } m>6
		\end{cases}
	\end{align*}
    \begin{proof}
        With the help of Lemma \ref{L4} and Eq. (\ref{E3}), proceeding similarly to Lemma \ref{L16}, we obtain
        \begin{equation}\label{EF9}
		\operatorname{Tr}\left(\sum_{i=1}^{2^{h}-1}x^{i}\right)=\operatorname{Tr}\left(\sum_{i\in\Gamma_{(h-1)}}x^{i+2^{h-1}}+\sum_{i\in\Gamma_{(h-3)}}x^{i+2^{h-3}}+\dots+\sum_{i\in\Gamma_{(3)}}x^{i+2^{3}}+x^{3}\right)
	\end{equation}
	if $h$ is even; and
	\begin{equation}\label{EF10}
		\operatorname{Tr}\left(\sum_{i=1}^{2^{h}-1}x^{i}\right)=\operatorname{Tr}\left(\sum_{i\in\Gamma_{(h-1)}}x^{i+2^{h-1}}+\sum_{i\in\Gamma_{(h-3)}}x^{i+2^{h-3}}+\dots+\sum_{i\in\Gamma_{(2)}}x^{i+2^{2}}+x\right)
	\end{equation}
	if $h$ is odd.
    \vskip 1pt
    Note that
    \begin{align}\label{EF11}
			\operatorname{Tr}\left(\sum_{i=1}^{2^{h}-1}x^{i+2^{h+1}}\right) &=
			\operatorname{Tr}\left(\sum_{i\in \Gamma_{(h)}}x^{i+2^{h+1}}\right)+\operatorname{Tr}\left(\sum_{i\in \bar{\bar{A}}\backslash\Gamma_{(h)}}^{}x^{i+2^{h+1}}\right) \nonumber \\
            &=
			\operatorname{Tr}\left(\sum_{i\in \Gamma_{(h)}}x^{i+2^{h+1}}\right)+\operatorname{Tr}\left(\sum_{i=1}^{2^{h-1}-1}x^{i+2^{h}}\right) \nonumber \\
			&= \operatorname{Tr}\left(\sum_{i\in \Gamma_{(h)}}x^{i+2^{h+1}}+\sum_{i\in \Gamma_{(h-1)}}x^{i+2^{h}}\right)+\operatorname{Tr}\left(\sum_{i=1}^{2^{h-2}-1}x^{i+2^{h-1}}\right) \nonumber \\
            &= \operatorname{Tr}\left(\sum_{i\in \Gamma_{(h)}}x^{i+2^{h+1}}+\sum_{i\in \Gamma_{(h-1)}}x^{i+2^{h}}+\sum_{i\in \Gamma_{(h-2)}}x^{i+2^{h-1}}\cdots+\sum_{i\in \Gamma_{(2)}}x^{i+2^{3}}+\sum_{i\in \Gamma_{(1)}}x^{i+2^{2}}\right)
		\end{align}
        When $h$ is even, with the help of Eq. (\ref{EF8}), (\ref{EF9}) and (\ref{EF11}), we have
        \begin{align}\label{EF12}
		\operatorname{Tr}\left(f_{8}(x+1)\right)&=\operatorname{Tr}\left(\sum_{i\in\Gamma_{(h)}}x^{i+2^{h+1}}+\sum_{i\in\Gamma_{(h-1)}}x^{i+2^{h}}+\sum_{i\in\Gamma_{(h-1)}\setminus\Gamma_{(h-2)}}x^{i+2^{h-1}}+\sum_{i\in\Gamma_{(h-3)}}x^{i+2^{h-2}}+\dots\right. \nonumber \\ &\mathrel{\phantom{=}} \left.\kern-\nulldelimiterspace +\; \sum_{i\in\Gamma_{(3)}\setminus\Gamma_{(2)}}x^{i+2^{3}}+\sum_{i\in\Gamma_{(1)}}x^{i+2^{2}}+x^{3}+x\right)
	\end{align}
    and, when $h$ is odd, with the help of Eq. (\ref{EF8}), (\ref{EF10}) and (\ref{EF11}), we have
    \begin{align}\label{EF13}
		\operatorname{Tr}\left(f_{8}(x+1)\right)&=\operatorname{Tr}\left(\sum_{i\in\Gamma_{(h)}}x^{i+2^{h+1}}+\sum_{i\in\Gamma_{(h-1)}}x^{i+2^{h}}+\sum_{i\in\Gamma_{(h-1)}\setminus\Gamma_{(h-2)}}x^{i+2^{h-1}}+\sum_{i\in\Gamma_{(h-3)}}x^{i+2^{h-2}}+\cdots \right. \nonumber \\ &\mathrel{\phantom{=}} \left.\kern-\nulldelimiterspace +\; \sum_{i\in\Gamma_{(2)}\setminus\Gamma_{(1)}}x^{i+2^{2}}\right)
	\end{align}
    From Lemma \ref{lem:L1}, \ref{EF15} and \ref{EF14}, it is evident that none of the terms on the right-hand side of Eq. (\ref{EF12}) and (\ref{EF13}) will mutually cancel out.
    \vskip 0pt
    Note that for any integer $t\geq 1$, $|\Gamma_{(t)}|=|\Gamma_{(t+1)}\backslash\Gamma_{(t)}|=2^{t-1}$ and $|C_{1+2^{h}}|=h=\frac{m}{2}$.
    If $h\geq 4$ is even, by Lemma \ref{lem:L1}, \ref{EF15} and Eq. (\ref{EF12}), we have the linear span of $s^{\infty}$ as follows
    \begin{flalign*}
			\hskip 20pt	L_{s}&= \left(2^{h-1}+2^{h-2}+2^{h-3}+\dots+2+1\right)\cdot m +2m -\frac{m}{2}& \\
			%		&=(2^{h-1}-1)\cdot m+\left(\frac{2^{h-1}-2}{3}+1\right)\cdot m &\\
			&=m(2^{h}+1)-\frac{m}{2}.
		\end{flalign*}
        If $h\geq 4$ is odd, by Lemma \ref{lem:L1}, \ref{EF15} and Eq. (\ref{EF13}), we have the linear span of $s^{\infty}$ as follows
        \begin{flalign*}
			\hskip 20pt	L_{s}&= \left(2^{h-1}+2^{h-2}+2^{h-3}+\dots+2+1\right)\cdot m -\frac{m}{2}& \\
			%		&=(2^{h-1}-1)\cdot m+\left(\frac{2^{h-1}-2}{3}+1\right)\cdot m &\\
			&=m(2^{h}-1)-\frac{m}{2}.
		\end{flalign*}
        For $h=3$, note that $|C_{5+2^{h+1}}|=h-1=2$ by Lemma \ref{EF15}, and $|C_{1+2^{h}}|=h$. Then the linear span of $s^{\infty}$ is as follows 
        \begin{flalign*}
			\hskip 20pt	L_{s}&= \left(2^{2}+2+1\right)\cdot m -\frac{m}{2}-1-\frac{m}{2}& \\
			%		&=(2^{h-1}-1)\cdot m+\left(\frac{2^{h-1}-2}{3}+1\right)\cdot m &\\
			&=6m-1.
		\end{flalign*}
        Therefore, from Lemma \ref{lem:technical} and Eq. (\ref{EF12}), (\ref{EF13}) we get the result on the generator polynomial corresponding to the sequence $s^{\infty}$.
        \vskip 0pt
        We now prove the result on the lower bound of the minimum weight $d(\mathcal{C}_{s})$. It is easy to check that the reciprocal of the generator polynomial $\textit{g}_{s}(x)$ has roots $\alpha^{j}$ for all $j$ in $\{1+2^{h+1},3+2^{h+1},\cdots,2^{h}-1+2^{h+1}\}$. Since, the code $\mathcal{C}_{s}$ generated by $\textit{g}_{s}(x)$ and the code generated by the reciprocal of $\textit{g}_{s}(x)$ have identical weight distribution, the minimum weight $d(\mathcal{C}_{s})\geq 2^{h-1}+1$ by the help of the Hartmann-Tzeng bound. When $h$ is even, the code $\mathcal{C}_{s}$ is the subcode of the cyclic code generated by $m_{\alpha^{-1}}(x)m_{\alpha^{-(2^{k}+1)}}(x)m_{\alpha^{-(2^{2k}+1)}}(x)$, where $k=h+1$. As $\operatorname{gcd}(k,m)=1$, $\mathcal{C}_{s}$ is a triple-error-correcting code $(\text{see Theorem 1 in \cite{SETA13}})$. The upper bound of the minimum weight $d(\mathcal{C}_{s})$ follows from the Singleton bound. By combining these facts, we get the desired conclusion.
    \end{proof}
\end{theorem}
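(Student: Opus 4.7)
The plan is to follow the blueprint established in Lemma \ref{L16}, namely to expand $\operatorname{Tr}(f_8(\alpha^t+1))$ into a sum of traces indexed by $2$-cyclotomic coset leaders, identify the index set $I_s$, and then read off $\textit{g}_s(x)$ and $L_s$ via Lemma \ref{C}. Starting from Eq. (\ref{EF8}), the sequence splits into three blocks: $\operatorname{Tr}(\alpha^t)$, $\operatorname{Tr}(\sum_{i=1}^{2^h-1}(\alpha^t)^{i+2^{h+1}})$, and $\operatorname{Tr}(\sum_{i=1}^{2^h-1}(\alpha^t)^i)$. For the third block I would apply Lemma \ref{L4} together with Eq. (\ref{E3}) with $t=h$, which rewrites it (depending on the parity of $h$) exactly as in the right-hand sides of Eq. (\ref{E17})/(\ref{E18}), i.e.\ as $\sum_k \sum_{i\in\Gamma_{(h-2k-1)}} x^{i+2^{h-2k-1}}$ plus a parity-dependent tail ($x^3$ if $h$ is even, $x$ if $h$ is odd).

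For the second block I would use the recursive splitting already exploited in Eq. (\ref{E7}): write $\{1,\dots,2^h-1\} = \Gamma_{(h)} \cup (\bar{\bar A}\setminus\Gamma_{(h)})$ and observe that the latter, consisting of even indices, reduces under the Frobenius to $\sum_{i=1}^{2^{h-1}-1} x^{i+2^h}$; iterating produces
\[
\operatorname{Tr}\!\Bigl(\sum_{i=1}^{2^h-1} x^{i+2^{h+1}}\Bigr)
= \operatorname{Tr}\!\Bigl(\sum_{i\in\Gamma_{(h)}} x^{i+2^{h+1}} + \sum_{i\in\Gamma_{(h-1)}} x^{i+2^h} + \cdots + \sum_{i\in\Gamma_{(1)}} x^{i+2^2}\Bigr).
\]
Adding this to the third block and to $\operatorname{Tr}(\alpha^t)$, I would use Lemma \ref{EF14} to cancel the $x^j$ contributions against their lifts $x^{i+2^{h+1}}$ precisely where $(i,j)=(2^s i_1, i_1+2^{h+1-s})$ occurs, and Lemma \ref{EF15} to handle the collapse $C_{1+2^{h+1}}=C_{1+2^{h-1}}$ and $C_{3+2^{h+1}}=C_{1+2^{h-1}+2^h}$. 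What survives is exactly the collection of cosets whose leaders appear as $-i-2^{j}$ exponents of the claimed $\textit{g}_s(x)$.

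For $L_s$ I would then count: each surviving coset has size $m$ by Lemmas \ref{lem:L1} and \ref{EF15}, except the coset of $1+2^h$, which has size $h=m/2$ and produces the $-m/2$ correction. The genuine exception is $h=3$ (i.e., $m=6$), where Lemma \ref{EF15}(ii) gives $|C_{5+2^{h+1}}|=2$; this sub-case I would dispose of by direct enumeration to obtain $L_s=6m-1$. The generator polynomial then follows from Lemma \ref{C}. For the distance, Singleton together with $(x-1)\mid \textit{g}_s(x)$ yields the upper bound $1+L_s$; for the lower bound, I would verify that $\alpha^{1+2^{h+1}}, \alpha^{3+2^{h+1}}, \dots, \alpha^{2^h-1+2^{h+1}}$ are zeros of the reciprocal of $\textit{g}_s(x)$ and apply the Hartmann--Tzeng bound with $S=\{1+2^{h+1}\}$, $T=\{2j:0\le j\le 2^{h-1}-1\}$, giving $d(\mathcal{C}_s)\ge 2^{h-1}+1$. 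The extra lower bound $d\ge 7$ when $m\equiv 0\pmod 4$ I would get by exhibiting $\mathcal{C}_s$ as a subcode of the triple-error-correcting cyclic code with generator $m_{\alpha^{-1}}(x)\,m_{\alpha^{-(2^{k}+1)}}(x)\,m_{\alpha^{-(2^{2k}+1)}}(x)$ where $k=h+1$, invoking $\gcd(k,m)=1$ (which holds precisely when $h$ is even) and Theorem~1 of \cite{SETA13}.

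The main obstacle I anticipate is the bookkeeping in the non-cancellation argument: one must verify that, among the many $\Gamma_{(t)}$-indexed sums produced by the two expansions, the only identifications of cosets are those predicted by Lemmas \ref{EF15} and \ref{EF14}. The $m=6$ boundary case is delicate because the drop in $|C_{5+2^{h+1}}|$ from $m$ to $2$ perturbs both the coset leader list and the term count, so the formula for $L_s$ needs a dedicated verification there rather than being obtained as a specialization of the generic formula.
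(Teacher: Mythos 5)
Your proposal follows essentially the same route as the paper's proof: the same three-block decomposition of $\operatorname{Tr}(f_8(\alpha^t+1))$, the expansion of $\operatorname{Tr}\bigl(\sum_{i=1}^{2^h-1}x^i\bigr)$ via Lemma \ref{L4} and Eq.~(\ref{E3}), the recursive splitting of $\operatorname{Tr}\bigl(\sum_{i=1}^{2^h-1}x^{i+2^{h+1}}\bigr)$ as in Eq.~(\ref{EF11}), the non-cancellation bookkeeping via Lemmas \ref{EF15} and \ref{EF14}, the dedicated treatment of $m=6$, the Hartmann--Tzeng bound with the run $\{1+2^{h+1},3+2^{h+1},\dots,2^h-1+2^{h+1}\}$, and the triple-error-correcting subcode argument for $h$ even. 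One minor slip: $(x-1)$ does \emph{not} divide $\textit{g}_s(x)$ here (since $\operatorname{Tr}(1)=0$ for even $m$, the index $0$ is absent from $I_s$), but this is harmless because the upper bound $d(\mathcal{C}_s)\leq 1+L_s$ is exactly the Singleton bound and needs no evenness argument.
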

\vspace{0.5em}
%\begin{example}
 %   Let $m=4$ and $\alpha$ be the root of the primitive polynomial $x^4 + x + 1$ over $\mathbb{F}_{2}$. The generator polynomial of $\mathcal{C}_{s}$ is $\textit{g}_{s}(x)=x^{10} + x^{5} + 1$. Then $\mathcal{C}_{s}$ is a binary $[15,5,3]$ cyclic code and $\mathcal{C}_{s}^{\perp}$ is a $[15,10,2]$ cyclic code.
%\end{example}
%\vspace{0.5em}
\begin{example}
    Let $m=6$ and $\alpha$ be the root of the primitive polynomial $x^6 + x + 1$ over $\mathbb{F}_{2}$. The generator polynomial of $\mathcal{C}_{s}$ is $\textit{g}_{s}(x)=x^{35} + x^{34} + x^{30} + x^{27} + x^{25} + x^{24} + x^{22} + x^{19} + x^{15} + x^{13} + x^9 + x^7 + x^6 +
    x^5 + x^4 + x^2 + 1$. Then $\mathcal{C}_{s}$ is a binary $[63,28,9]$ cyclic code and $\mathcal{C}_{s}^{\perp}$ is a $[63,35,10]$ cyclic code.
\end{example}
\vspace{0.5em}
\begin{example}
    Let $m=8$ and $\alpha$ be the root of the primitive polynomial $x^8 + x^4 + x^3 + x + 1$ over $\mathbb{F}_{2}$. The generator polynomial of $\mathcal{C}_{s}$ is $\textit{g}_{s}(x)=x^{132} + x^{131} + x^{127} + x^{126} + x^{125} + x^{122} + x^{116} + x^{115} + x^{114} + x^{112} + x^{111} + x^{110} + x^{104} + x^{103} + x^{102} + x^{97} + x^{96} + x^{94} + x^{89} + x^{88} + x^{86} + x^{80} + x^{75} + x^{74} + x^{72} + x^{68} + x^{67} + x^{66} + x^{61} + x^{56} + x^{55} + x^{52} +
    x^{50} + x^{49} + x^{45} + x^{43} + x^{42} + x^{38} + x^{30} + x^{28} + x^{24} + x^{20} + x^{18} + x^{16} + x^{15} + x^{12} + x^{10} + x^6 + x^4 + x + 1$. Then, by using a Magma program, we have $\mathcal{C}_{s}$ is a binary $[255,123,d(\mathcal{C}_{s})]$ cyclic code, where $20\leq d(\mathcal{C}_{s})\leq 31$ and $\mathcal{C}_{s}^{\perp}$ is a $[255,132,d(\mathcal{C}_{s}^{\perp})]$ cyclic code, where $22\leq d(\mathcal{C}_{s}^{\perp})\leq 24$.
\end{example}
\vspace{0.5em}
\begin{remark}
	\textcolor{black}{It can be seen that the trinomials $f_{6}(x)=x+x^{2^{m/2}}+x^{2^m-2^{m/2}+1}$, $f_{7}(x)=x+x^{2^{m/2+1}-1}+x^{2^{m}-2^{m/2+1}+2}$ in case of $m\equiv 2\hspace{1mm}(\textnormal{mod}\hspace{1mm}4)$ and $f_{8}(x)=x+x^{2^{m/2}}+x^{2^{m-1}-2^{m/2-1}+1}$ in case of $m\equiv 0\hspace{1mm}(\textnormal{mod}\hspace{1mm}6)$ are not permutations over $\mathbb{F}_{2^{m}}$. When $m=6$, the code $\mathcal{C}_{s}$ designed by $f_{6}(x)$, $f_{7}(x)$ and $f_{8}(x)$ have parameters $[63,39,7]$, $[63,45,5]$ and $[63,28,9]$, respectively, while the best known linear codes in the Database $\cite{SETA23}$ has parameters $[63,39,9]$, $[63,45,8]$ and $[63,28,15]$, respectively. It should be noted that although the trinomials in Table \ref{Table1} are permutations over $\mathbb{F}_{2^m}$ for certain values of $m$, for $m>6$ the codes $\mathcal{C}_{s}$ and $\mathcal{C}_{s}^{\perp}$ do not guarantee optimality. For $m\geq 8$, the parameter of $\mathcal{C}_{s}$ is extensive. Due to the vast computation required, verifying the minimum weight of $\mathcal{C}_{s}$ using a Magma program becomes difficult. Therefore, paying more attention to developing tighter lower and upper bounds on the minimum distance or selecting suitable trinomials with permutation property (or low-differential uniformity) that provide the minimum distance of the code $\mathcal{C}_{s}$ closer to the square-root bound would be beneficial.} %in the case where these trinomials are permutations over $\mathbb{F}_{2^{m}}$.
\end{remark}
%\section{Comparison with other lower bound on $d(\mathcal{C}_{s})$}
\section{Summary and concluding remarks}
\textcolor{black}{Fascinated by the work of Ding \cite{SETA5} and the joint work of Ding and Zhou \cite{SETA3}, we have investigated some known families of permutation trinomials over $\mathbb{F}_{2^m}$ and constructed several infinite families of binary cyclic codes of length $2^m-1$ with dimensions larger than $(2^m-1)/2$ and minimum distance closer to the square-root bound. Some of the families of codes are distance-optimal. We determined the upper bound of the minimum distances of these codes. The main results of this paper demonstrate that suitable permutation monomials and trinomials can be used for the construction of cyclic codes with desirable parameters. Readers interested in working on this topic are invited to develop tighter upper and lower bounds of the minimum distances or to find new strategies in determining the linear span of sequences in constructing codes with minimum distance closer to the square-root bound by employing suitable polynomials.  %Some classes of cyclic codes presented in this paper are in fact very attractive. %constructed some classes of optimal binary cyclic codes from several known classes of permutation monomials and trinomials over $\mathbb{F}_{2^{m}}$. The main results of this paper highlighted that it is not necessary to require the polynomial $F(x)$ to be low differential uniform, but by employing suitable permutation trinomials, cyclic codes with optimal parameters can be obtained. %Although it is difficult to completely characterize the minimum weight for every class of cyclic codes, we have developed tight lower bounds on the minimum weight for some classes of cyclic codes.
\vskip 1pt
Some families of binary cyclic codes presented in this paper are closely related to the triple-error-correcting binary primitive BCH codes; they could be used in constructing quantum codes \cite{SETA19, SETA18}.}

%%===========================================================================================%%
%% If you are submitting to one of the Nature Portfolio journals, using the eJP submission   %%
%% system, please include the references within the manuscript file itself. You may do this  %%
%% by copying the reference list from your .bbl file, paste it into the main manuscript .tex %%
%% file, and delete the associated \verb+\bibliography+ commands.                            %%
%%===========================================================================================%%

%\bibliography{sn-bibliography}% common bib file

\begin{thebibliography}{1}
	
	%\bibitem{} 
	
	\bibitem{SETA1}
	Antweiler~M., Bomer~L.: 
	\newblock Complex sequences over $GF(p^M)$ with a two-level autocorrelation function and a large linear span.
	\newblock { IEEE Trans. Inf. Theory} \textbf{38}(1), 120–130 (1992).
	
	\bibitem{SETA2}
	Ding~C., Qu~L., Wang~Q., Yuan~J., Yuan~P.:
	\newblock Permutation trinomials over finite fields with even characteristic.
	\newblock {SIAM J. Discret. Math.} \textbf{29}(1), 79–92 (2015).
	
	\bibitem{SETA3}
	Ding~C., Zhou~Z.: 
	\newblock Binary cyclic codes from explicit polynomials over $GF(2^m)$.
	\newblock {Discret. Math.}  \textbf{321}, 76–89 (2014).
	
	\bibitem{SETA4}
	Dobbertin~H.: 
	\newblock Almost perfect nonlinear power functions on $GF(2^n)$: the Welch case. \newblock {IEEE Trans. Inf.
 Theory} \textbf{45}(4), 1271–1275 (1999).
	
	\bibitem{SETA5}
	Ding~C.: 
	\newblock Cyclic Codes from some monomials and trinomials.
	\newblock {SIAM J. Discret. Math.} \textbf{27}(4), 1977–1994 (2013).
	
	\bibitem{SETA6}
	Kasami~T.: 
	\newblock  The weight enumerators for several classes of subcodes of the 2nd order binary Reed-Muller codes.
	\newblock {Inf. Control} \textbf{18}, 369–394 (1971).
	
	\bibitem{SETA7}
	Si~W., Ding~C.:
	\newblock A simple stream cipher with proven properties.
	\newblock { Cryptogr. Commun.} \textbf{4}, 79-104 (2012).
	
	\bibitem{SETA8}
	Mesnager~S., Shi~M., Zhu~H.:
	\newblock Study of cyclic codes from low differentially uniform functions and its consequences.
	\newblock {Discret. Math.} \textbf{347}, 114033 (2024).
	
	\bibitem{SETA9}
	Li~L., Zhu~S., Liu~L., Kai~X.:
	\newblock Some $q$-ary cyclic codes from explicit monomials over $\mathbb{F}_{q^{m}}$.
	\newblock {Probl. Inf. Transm.} \textbf{55}(3), 254-274 (2019).
	
	\bibitem{SETA10}
	Rajabi~Z., Khashyarmanesh~K.:
	\newblock Some cyclic codes from some monomials. \newblock{Appl. Algebra Engrg. Comm. Comput.} \textbf{28}, 469-495 (2017).
	
	\bibitem{SETA11}
	Tang~C., Qi~Y., Xu~M.:
	\newblock A note on cyclic codes from APN functions.
	\newblock {Appl. Algebra Engrg. Comm. Comput.} \textbf{25}, 21-37 (2014).


    \bibitem{SETA13}
	Bracken~C., Helleseth~T.:\newblock Triple-Error-Correcting BCH-Like Codes. In:\newblock {IEEE International Symposium on Information Theory}, pp. 1723-1725. IEEE Xplore, Seoul, Korea (South) (2009).
    
	
		\bibitem{SETA12}
		Ding~C.:
		\newblock A sequence construction of cyclic codes over finite
 fields. \newblock{arXiv:1611.06487v2} (2024). Available: \url{http://arxiv.org/abs/1611.06487}.
	
	
	
		\bibitem{SETA14}
		Ding~C.:
		\newblock A sequence construction of cyclic codes over finite fields.	\newblock {Cryptogr. Commun.} \textbf{10}, 319-341 (2018).

        
	\bibitem{SETA15}
	Li~K., Qu~L., Chen~X.:
	\newblock New classes of permutation binomials and permutation trinomials over finite fields.
	\newblock {Finite Fields and Appl.} \textbf{43}, 69-85 (2017).
    
%	\bibitem{SETA15}
%	Lidl~R., Niederreiter~H.:
%	\newblock Finite Fields, 2nd edn. Encyclopedia of Mathematics and its 
 %         Applications, vol. 20.
%	\newblock {Cambridge University Press, Cambridge} (1997).
	
	\bibitem{SETA16}
	Cherowitzo~W.:
	\newblock $\alpha$-Flocks and hyperovals.
	\newblock {Geom. Dedicata} \textbf{72}, 221-246 (1998).
	
	\bibitem{SETA17}
	Dobbertin~H.:
	\newblock Uniformly representable permutation polynomials. In:\newblock {Proceedings of SETA 01, T. Helleseth, P.V. Kumar, and K. Yang, eds., Springer, London}, pp. 1-22, (2002).
	
	\bibitem{SETA18}
	Shi~X., Yue~Q., Wu~Y.:
	\newblock The dual-containing primitive BCH codes with the maximum designed distance and their applications to quantum codes.
	\newblock {Des. Codes Cryptogr.} \textbf{87}, 2165–2183 (2019).
	
	\bibitem{SETA19}
	Aly~S.~A., Klappenecker~A.:
	\newblock On Quantum and Classical BCH Codes.
	\newblock {IEEE Trans. Inf. Theory} \textbf{53}, 1183–1188 (2007).
	
	\bibitem{SETA20}
	Hartmann~C.~R., Tzeng~K.~K.:
	\newblock Generalizations of the BCH Bound.
	\newblock {Inf. Control} \textbf{20}, 489–498 (1972).

    	\bibitem{SETA24}
		Helleseth~T., Li~C., Xia~Y.:
		\newblock Investigation of the permutation and linear codes from the Welch APN function.
		\newblock { Des. Codes Cryptogr.} 1-23 (2024).
	
		\bibitem{SETA21}
		Singleton~R.:
		\newblock Maximum distance $q$-nary codes.
		\newblock {IEEE Trans. Inf. Theory} \textbf{10}, 116–118 (1964).

        \bibitem{SETA22}
		Xie~X., Zhao~Y., Sun~Z., Zhou~X.:
		\newblock Binary $[n,(n\pm 1)/2]$ cyclic codes with good minimum distances from sequences.
		\newblock {Discret. Math.} \textbf{348}, 114369 (2025).
	
	%	\bibitem{SETA22}
	%	H.~Dobbertin.
	%	\newblock Kasami power functions, permutation polynomials and cyclic difference sets, in:
	%	\newblock {\em A. Pott, P.V. Kumar, T. Helleseth, D. Jungnickel (Eds.), Difference Sets, Sequences and their Correlation Properties, Proceedings of the NATO Advanced Study Institute on Difference Sets, Sequences and their Correlation Properties, Bad Windsheim, 214 August 1998, Kluwer, Dordrecht}, 133–158, 1999.
	
	\bibitem{SETA23}
	Grassl~M.: Bounds on the minimum distance of linear codes and quantum codes, Online available at \url{http://www.codetables.de/}. 
	%\bibitem{Smith1950}
	%J.~Smith.
	%\newblock A conjecture on sequences with unusual properties.
	%\newblock {\em J. Major Results}, 5(2):100--200, 1950.
	
\end{thebibliography}
%% if required, the content of .bbl file can be included here once bbl is generated
%%\input sn-article.bbl

\end{document}